\newtheorem{theorem}{Theorem}[section]
\newtheorem{lemma}[theorem]{Lemma}
\newtheorem{claim}[theorem]{Claim}
\newtheorem{corollary}[theorem]{Corollary}
\newtheorem{proposition}[theorem]{Proposition}
\theoremstyle{definition}
\newtheorem{definitiont}[theorem]{Definition}
\newtheorem{example}[theorem]{Example}
\newtheorem{remark}[theorem]{Remark}
\newcommand{\bM}{\mathbf{M}}
\newcommand{\cS}{\mathcal{S}}
\newcommand{\cA}{\mathcal{A}}
\newcommand{\cD}{\mathcal{D}}
\newcommand{\cM}{\mathcal{M}}
\newcommand{\cH}{\mathcal{H}}
\newcommand{\cG}{\mathcal{G}}
\newcommand{\cC}{\mathcal{C}}
\newcommand{\E}{\mathop{\mathbb{E}}}
\newcommand{\R}{\mathbb{R}}
\newcommand{\cost}{{\sf cost}}
\newcommand{\ham}{d_{\mathrm{Ham}}}
\newcommand{\eps}{\varepsilon}
\newcommand{\ynote}[1]{\textcolor{red}{(Yuichi: #1)}}
\DeclarePairedDelimiter{\norm}{\lVert}{\rVert}
\DeclarePairedDelimiter{\inprod}{\langle}{\rangle}
\title{Average Sensitivity of Graph Algorithms}
\author{
  Nithin Varma\\
  University of Haifa\\
  \texttt{nvarma@bu.edu}
  \and
  Yuichi Yoshida\\
  National Institute of Informatics\\
  \texttt{yyoshida@nii.ac.jp}
}
\date{}
\begin{document}
\maketitle

\begin{abstract}
  In modern applications of graph algorithms, where the graphs of interest are large and dynamic, it is unrealistic to assume that an input representation contains the full information of a graph being studied.
  Hence, it is desirable to use algorithms that, even when provided with only a (large) subgraph,
  output solutions that are close to the solutions output when the whole graph is available.
  We formalize this feature by introducing the notion of average sensitivity of graph algorithms, which is the average earth mover's distance between the output distributions of an algorithm on a graph and its subgraph obtained by removing an edge, where the average is over the edges removed and the distance between two outputs is the Hamming distance.

  In this work, we initiate a systematic study of average sensitivity.
  After deriving basic properties of average sensitivity such as composition, we provide efficient approximation algorithms with low average sensitivities for concrete graph problems, including the minimum spanning forest problem, the global minimum cut problem, the minimum $s$-$t$ cut problem, and the maximum matching problem.
  In addition, we prove that the average sensitivity of our global minimum cut algorithm is almost optimal, by showing a nearly matching lower bound.
  We also show that every algorithm for the 2-coloring problem has average sensitivity linear in the number of vertices.
  One of the main ideas involved in designing our algorithms with low average sensitivity is the following fact; if the presence of a vertex or an edge in the solution output by an algorithm can be decided locally, then the algorithm has a low average sensitivity, allowing us to reuse the analyses of known sublinear-time algorithms and local computation algorithms.
  Using this fact in conjugation with our average sensitivity lower bound for $2$-coloring, we show that every local computation algorithm for $2$-coloring has query complexity linear in the number of vertices, thereby answering an open question.
\end{abstract}

\thispagestyle{empty}
\setcounter{page}{0}
\newpage


\section{Introduction}

In modern applications of graph algorithms, where the graphs of interest are large and dynamic, it is unrealistic to assume that an input representation contains the full information of a graph being studied.
For example, consider a social network, where a vertex corresponds to a user of the social network service and an edge corresponds to a friendship relation.
It is reasonable to assume that users do not always update new friendship relations on the social network service, and that sometimes they do not fully disclose their friendship relations because of security or privacy reasons.
Hence, we can only obtain an approximation $G'$ to the true social network $G$.
This brings out the need for algorithms that can extract information on $G$ by solving a problem on $G'$.
Moreover, as the solutions output by a graph algorithm are often used in applications such as detecting communities~\cite{Newman:2004jh,Newman:2006iq}, ranking nodes~\cite{Page:1999wg}, and spreading influence~\cite{Kempe:2003iu}, the solutions output by an algorithm on $G'$ should be close to those output on $G$.

We assume that the $n$-vertex input graph $G'$ at hand is a randomly chosen (large) subgraph of an unknown true graph $G$.
Intuitively, a deterministic algorithm $\mathcal{A}$ is said to be \emph{stable-on-average} if the Hamming distance $d_{\mathrm{Ham}}\bigl(\mathcal{A}(G),\mathcal{A}(G')\bigr)$ is small, where $\mathcal{A}(G)$ and $\mathcal{A}(G')$ are outputs of $\mathcal{A}$ on $G$ and $G'$, respectively.
Here, outputs are typically vertex sets or edges sets and we assume that they are represented appropriately using binary strings.
More specifically, for an integer $k \geq 1$, we say that the \emph{$k$-average sensitivity} of a deterministic algorithm $\mathcal{A}$ is
\begin{align}
  \E_{\{e_1,\dots,e_k\} \sim \binom{E}{k}}\bigl[d_{\mathrm{Ham}}\bigl(\mathcal{A}(G),\mathcal{A}(G-\{e_1,\dots,e_k\})\bigr)\bigr]
  \label{eq:deterministic-sensitivity}
\end{align}
for every graph $G=(V,E)$, where $\{e_1,\ldots,e_k\}$ is sampled uniformly at random from $\binom{E}{k}$, the set of all subsets of $E$ of cardinality $k$, and where $G-F$ for a set of edges $F \subseteq E$ denotes the subgraph obtained from $G$ by removing $F$.
When $k = 1$, we call the $k$-average sensitivity simply \emph{average sensitivity}.
We say that algorithms with low average sensitivity are \emph{stable-on-average}.
Although we focus on graphs here, we note that our definition can also be extended to the study of combinatorial objects other than graphs such as strings and constraint satisfaction problems.

An algorithm that outputs the same solution regardless of the input has the least possible average sensitivity, even though it is definitely useless.
Hence, the key question in a study of average sensitivity is to reveal trade-offs between solution quality and average sensitivity for various problems.

\begin{example}\label{ex:vertices-with-high-degree}
  Consider the algorithm that, given a graph $G=(V,E)$ on $n$ vertices, outputs the set of vertices of degree at least $n/2$.
  As removing an edge changes the degree of exactly two vertices, the sensitivity of this algorithm is at most $2$.
\end{example}

\begin{example}\label{ex:s-t-shortest-path-deterministic}
  Consider the $s$-$t$ shortest path problem, where given a graph $G=(V,E)$ and two vertices $s,t \in V$, we are to output the set of edges in a shortest path from $s$ to $t$.
  Since the length of a shortest path is always bounded by $n$, where $n$ is the number of vertices, every deterministic algorithm has average sensitivity $O(n)$.
  Indeed, there exists a graph for which this trivial upper bound is tight.
  Think of a cycle of even length $n$ and two vertices $s,t$ in diametrically opposite positions.
  Consider an arbitrary deterministic algorithm $\mathcal{A}$, and assume that it outputs a path $P$ (of length $n/2$) among the two shortest paths from $s$ to $t$.
  With probability half, an edge in $P$ is removed, and $\mathcal{A}$ must output the other path $Q$ (of length $n/2$) from $s$ to $t$.
  Hence, the average sensitivity must be $1/2 \cdot (n/2) = \Omega(n)$.
  In this sense, there is no deterministic algorithm with nontrivial average sensitivity for the $s$-$t$ shortest path problem.
\end{example}

We also generalize our definition of average sensitivity to apply to randomized algorithms.
Let $\mathcal{A}(G)$ denote the output distribution of $\mathcal{A}$ on $G$.
Let $d_{\mathrm{EM}}(\mathcal{A}(G),\mathcal{A}(G'))$ denote the earth mover's distance between $\mathcal{A}(G)$ and $\mathcal{A}(G')$, where the distance between two outputs is measured by the Hamming distance.
Specifically, $d_{\mathrm{EM}}(\mathcal{A}(G),\mathcal{A}(G'))$ is equal to $\min_{\mathcal{D}} \left[\E_{(x,y) \sim \mathcal{D}} \left[d_{\text{Ham}}(x,y)\right]\right]$, where $\mathcal{D}$ denotes a distribution over pairs $(x,y)$ of outputs of $\mathcal{A}$ such that the left and right marginals of $\mathcal{D}$ are equal to $\mathcal{A}(G)$ and $\mathcal{A}(G')$, respectively.
Then, for an integer $k \geq 1$, the \emph{$k$-average sensitivity} of a randomized algorithm $\mathcal{A}$ is
\begin{align}
  \E_{\{e_1,\dots,e_k\} \sim \binom{E}{k}}\left[d_{\mathrm{EM}}\bigl(\mathcal{A}(G),\mathcal{A}(G-\{e_1,\dots,e_k\})\bigr)\right]
  \label{eq:randomized-sensitivity}
\end{align}
where $\{e_1,\ldots,e_k\}$ is sampled uniformly at random from $\binom{E}{k}$. 
Note that when the algorithm $\mathcal{A}$ is deterministic,~(\ref{eq:randomized-sensitivity}) matches the definition of the average sensitivity for deterministic algorithms.

\begin{remark}\label{rem:sensitivity-wrt-tv}
The $k$-average sensitivity of an algorithm $\cA$ with respect to the total variation distance can be defined as
$\E_{\{e_1,\ldots,e_k\} \sim \binom{E}{k}}\left[d_{\mathrm{TV}}\bigl(\mathcal{A}(G),\mathcal{A}(G-\{e_1,\dots,e_k\})\bigr)\right]$,
where $d_{\mathrm{TV}}(\cdot,\cdot)$ denotes the total variation distance between two distributions.
It is easy to observe that, if the $k$-average sensitivity of an algorithm with respect to the total variation distance is at most $\gamma(G)$, then its $k$-average sensitivity is bounded by ${\sf H}\cdot \gamma(G)$, where the ${\sf H}$ is the maximum over Hamming weights of all solutions output (with nonzero probability) by running $\mathcal{A}$ on $G = (V,E)$ and on all the graphs in $\{G-\{e_1,\dots,e_k\}: \{e_1,\dots,e_k\} \in \binom{E}{k}\}$.
\end{remark}

\begin{example}
  Randomness does not help improve the average sensitivity of algorithms for the $s$-$t$ shortest path problem.
  Think of the cycle graph given in Example~\ref{ex:s-t-shortest-path-deterministic}, and suppose that a randomized algorithm $\mathcal{A}$ outputs the $s$-$t$ paths $P$ and $Q$ with probability $p$ and $q = 1-p$, respectively.
  Then, the average sensitivity is $p \cdot 1/2 \cdot (n/2) + q \cdot 1/2 \cdot (n/2) = \Omega(n)$.
\end{example}

\subsection{Basic properties of average sensitivity}

Our definition of average sensitivity has many nice properties.
In this section, we discuss some such properties of average sensitivity that are useful in the design of our stable-on-average algorithms.
We denote by $\mathcal{G}$ the (infinite) set consisting of all graphs.
Given a graph $G = (V,E)$ and $e \in E$, we use $G - e$ as a shorthand for $G - \{e\}$. We use $n$ and $m$ to denote the number of vertices and edges in the input graph, respectively.

\paragraph{Bounds on $k$-average sensitivity from bounds on average sensitivity.}
This is one of the most important properties of our definition of average sensitivity.
It essentially says that a bound on the ($1$-)average sensitivity of an algorithm can be used to obtain a bound on the $k$-average sensitivity of that algorithm for $k \ge 1$.
In other words, it is enough to analyze the average sensitivity of an algorithm with respect to the removal of a single edge.

\begin{restatable}{theorem}{sadme}\label{thm:sensitivity-against-deleting-multiple-edges}
	Let $\mathcal{A}$ be an algorithm for a graph problem with average sensitivity at most $f(n,m)$.
	Then, for any integer $k \geq 1$, the algorithm $\mathcal{A}$ has $k$-average sensitivity at most $\sum_{i=1}^k f(n,m-i+1)$.
\end{restatable}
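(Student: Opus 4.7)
My plan is to prove the statement by induction on $k$, with the triangle inequality for earth mover's distance as the main tool. The base case $k=1$ is nothing but the definition of average sensitivity, which directly gives the bound $f(n,m)$. For the inductive step, I will exploit the fact that $G-\{e_1,\ldots,e_k\}=(G-e_1)-\{e_2,\ldots,e_k\}$ and stitch the $k$-step deletion into a single-step deletion followed by a $(k-1)$-step deletion on a smaller graph.

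Concretely, since the Hamming distance on outputs is a metric, the earth mover's distance (Wasserstein-1 distance) it induces on output distributions is itself a metric, so for every outcome of $e_1,\ldots,e_k$,
\begin{align*}
d_{\mathrm{EM}}\bigl(\mathcal{A}(G),\mathcal{A}(G-\{e_1,\ldots,e_k\})\bigr)
  &\le d_{\mathrm{EM}}\bigl(\mathcal{A}(G),\mathcal{A}(G-e_1)\bigr) \\
  &\quad+ d_{\mathrm{EM}}\bigl(\mathcal{A}(G-e_1),\mathcal{A}((G-e_1)-\{e_2,\ldots,e_k\})\bigr).
\end{align*}
I would then take expectation over $e_1,\ldots,e_k$ on both sides. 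The marginal distribution of $e_1$ is uniform on $E$, so the first term contributes at most $f(n,m)$ by the definition of average sensitivity. For the second term, I condition on $e_1$: the remaining edges $e_2,\ldots,e_k$ are then a uniformly random sample of $k-1$ (distinct) edges from $E\setminus\{e_1\}=E(G-e_1)$, which is exactly the distribution required to invoke the inductive hypothesis on the $m-1$ edge graph $G-e_1$. This yields the bound $\sum_{i=1}^{k-1}f(n,(m-1)-i+1)=\sum_{i=2}^{k}f(n,m-i+1)$, and adding the two contributions gives the claimed $\sum_{i=1}^{k}f(n,m-i+1)$.

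\paragraph{Expected obstacle.} The argument itself is short once the right decomposition is in place, so the only delicate point is bookkeeping around the sampling model for $e_1,\ldots,e_k$. The clean telescoping above requires that, after conditioning on $e_1$, the tuple $(e_2,\ldots,e_k)$ is distributed as a uniform sample of $k-1$ edges from $E(G-e_1)$; this holds under the natural ``without replacement'' reading of ``sampled from $E$ uniformly at random,'' which I will adopt. If instead the $e_i$'s were independent with replacement, one would have to account for collisions $e_i=e_j$ and the fact that $G-\{e_1,\ldots,e_k\}$ then has more than $m-k$ edges, which would only improve the bound but complicates the inductive step. I do not expect any other technical difficulty: the triangle inequality for $d_{\mathrm{EM}}$ is standard (it can be derived directly by composing optimal couplings) and specializes to the triangle inequality for $d_{\mathrm{Ham}}$ in the deterministic case, so the same proof handles both the deterministic and randomized definitions uniformly.
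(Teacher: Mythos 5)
Your proof is correct and takes essentially the same route as the paper: both rely on the triangle inequality for $d_{\mathrm{EM}}$ applied to the telescoping chain $G, G-\{e_1\}, \ldots, G-\{e_1,\ldots,e_k\}$ together with the observation that $G-\{e_1,\ldots,e_{i-1}\}$ has $m-i+1$ edges. The only cosmetic difference is that the paper expands the full telescoping sum in one go while you package the same decomposition as an induction on $k$; your attention to the conditional distribution of $(e_2,\ldots,e_k)$ given $e_1$ under without-replacement sampling is the same bookkeeping the paper leaves implicit.
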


\noindent In particular, if the average sensitivity of an algorithm is bounded from above by a nondecreasing function of the number of edges, then its $k$-average sensitivity is at most $k$ times the upper bound on its average sensitivity.

\paragraph{Sequential composition.}
Another useful feature of our definition of average sensitivity is that one can obtain a stable-on-average algorithm by sequentially applying several stable-on-average subroutines.
The following two \emph{sequential composition theorems} formalize this feature.

\begin{restatable}[Sequential composition]{theorem}{scTaE}\label{thm:sequential-composition-TV-and-EM}
	Consider two randomized algorithms $\mathcal{A}_1: \mathcal{G} \to \mathcal{S}_1,\mathcal{A}_2: \mathcal{G} \times \mathcal{S}_1 \to \mathcal{S}_2$.
	Suppose that the average sensitivity of $\mathcal{A}_1$ with respect to the total variation distance is $\gamma_1(G)$ and the average sensitivity of $\mathcal{A}_2(\cdot,S_1)$ is $\beta_2^{(S_1)}(G)$ for any $S_1 \in \mathcal{S}_1$.
	Let $\mathcal{A}: \mathcal{G} \to \mathcal{S}_2$ be a randomized algorithm obtained by composing $\mathcal{A}_1$ and $\mathcal{A}_2$, that is, $\mathcal{A}(G) = \mathcal{A}_2(G,\mathcal{A}_1(G))$.
	Then, the average sensitivity of $\mathcal{A}$ is $\mathsf{H}\cdot \gamma_1(G) +\E_{S_1 \sim \cA_1(G)}\left[\beta_2^{(S_1)}(G)\right]$, where $\mathsf{H}$ denotes the maximum over Hamming weights of all the solutions output (with nonzero probability) by running $\cA$ on $G$ and all of the graphs in $\{G-e: e\in E\}$.
\end{restatable}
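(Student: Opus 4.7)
The plan is to bound $d_{\mathrm{EM}}(\cA(G), \cA(G-e))$ via the triangle inequality by introducing an intermediate ``hybrid'' distribution, and then handle the two resulting pieces by two separate couplings that each exploit one of the two sensitivity hypotheses.

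Concretely, for each edge $e \in E$ I would define the hybrid distribution $\rho_e$: sample $S_1 \sim \cA_1(G)$ (from the \emph{old} graph), then return $S_2 \sim \cA_2(G-e, S_1)$ (on the \emph{new} graph). Since $d_{\mathrm{EM}}$ is a metric,
\[
d_{\mathrm{EM}}\bigl(\cA(G), \cA(G-e)\bigr) \leq d_{\mathrm{EM}}\bigl(\cA(G), \rho_e\bigr) + d_{\mathrm{EM}}\bigl(\rho_e, \cA(G-e)\bigr),
\]
and it suffices to bound each term in expectation over $e \sim E$.

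For the first term, both $\cA(G)$ and $\rho_e$ are mixtures over the same $S_1 \sim \cA_1(G)$, so the natural coupling shares $S_1$ and, conditional on it, uses the optimal coupling of $\cA_2(G, S_1)$ with $\cA_2(G-e, S_1)$. The expected Hamming cost of this coupling is $\E_{S_1 \sim \cA_1(G)}\bigl[d_{\mathrm{EM}}(\cA_2(G, S_1), \cA_2(G-e, S_1))\bigr]$. Taking the expectation over $e$ and swapping the order of expectations, the assumed average sensitivity of $\cA_2(\cdot, S_1)$ yields the bound $\E_{S_1 \sim \cA_1(G)}\bigl[\beta_2^{(S_1)}(G)\bigr]$. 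For the second term, the two distributions are mixtures (over $S_1 \sim \cA_1(G)$ and $S_1' \sim \cA_1(G-e)$, respectively) of the \emph{same} kernel $\cA_2(G-e, \cdot)$. I would therefore couple $(S_1, S_1')$ optimally in the TV sense so that $\Pr[S_1 = S_1'] = 1 - d_{\mathrm{TV}}(\cA_1(G), \cA_1(G-e))$. On the event $\{S_1 = S_1'\}$, couple the downstream $\cA_2(G-e, \cdot)$ draws to be identical (zero Hamming cost); on the complementary event, any coupling suffices and the Hamming cost is at most $\mathsf{H}$ by the definition of $\mathsf{H}$. This yields $d_{\mathrm{EM}}(\rho_e, \cA(G-e)) \leq \mathsf{H} \cdot d_{\mathrm{TV}}(\cA_1(G), \cA_1(G-e))$, and averaging over $e$ gives $\mathsf{H} \cdot \gamma_1(G)$ by the TV-sensitivity hypothesis on $\cA_1$.

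The main point requiring care is bookkeeping for the couplings: one must check that both constructed joint distributions have the correct marginals, in particular that composing the optimal TV coupling of $(S_1, S_1')$ with the kernel $\cA_2(G-e, \cdot)$ indeed produces $\rho_e$ and $\cA(G-e)$ as marginals. Once this is verified, summing the two averaged bounds gives $\mathsf{H}\cdot \gamma_1(G) + \E_{S_1 \sim \cA_1(G)}[\beta_2^{(S_1)}(G)]$, as claimed.
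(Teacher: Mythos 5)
Your proposal is correct and follows essentially the same route as the paper's own proof: the paper also introduces the hybrid product distribution $f_{\cA_1(G)}(S_1)\cdot f_{\cA_2(G-e,S_1)}(S_2)$ (your $\rho_e$), first moves mass within each fixed $S_1$ at cost $f_{\cA_1(G)}(S_1)\cdot d_{\mathrm{EM}}(\cA_2(G,S_1),\cA_2(G-e,S_1))$, and then re-weights the $S_1$-marginal from $\cA_1(G)$ to $\cA_1(G-e)$ at cost at most $\mathsf{H}\cdot d_{\mathrm{TV}}(\cA_1(G),\cA_1(G-e))$, finishing with the triangle inequality and Fubini. Your coupling language is a slightly more explicit phrasing of the same transport argument.
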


Our second composition theorem is for the average sensitivity with respect to the total variation distance.
This is also useful for analyzing the average sensitivity with respect to the earth mover's distance, as it can be bounded by the average sensitivity with respect to the total variation distance times the maximum over Hamming weights of solutions output, as in Remark~\ref{rem:sensitivity-wrt-tv}.

\begin{restatable}[Sequential composition w.r.t.\ the TV distance]{theorem}{c}\label{thm:composition}
	Consider $\ell$ randomized algorithms $\mathcal{A}_i: \mathcal{G} \times \prod_{j=1}^{i-1} \mathcal{S}_j \to \mathcal{S}_i$ for $i \in \{1,\dots,\ell\}$.
	Suppose that, for each $i \in \{1,\dots,\ell\}$, the average sensitivity of $\mathcal{A}_i(\cdot,S_1,\dots,S_{i-1})$ is $\gamma_i(G)$ with respect to the total variation distance for every $S_1 \in \mathcal{S}_1,\dots,S_{i-1} \in \mathcal{S}_{i-1}$.
	Consider a sequence of computations $S_1 = \mathcal{A}_1(G), S_2 = \mathcal{A}_2(G,S_1),\ldots,S_\ell=\mathcal{A}_\ell(G,S_1,\dots,S_{\ell-1})$.
	Let $\mathcal{A}: \mathcal{G} \to \mathcal{S}_\ell$ be a randomized algorithm that performs this sequence of computations on input $G$ and outputs $S_\ell$.
	Then, the average sensitivity of $\mathcal{A}$ with respect to the total variation distance is at most $\sum_{i=1}^\ell \gamma_i(G)$.
\end{restatable}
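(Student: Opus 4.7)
The plan is a hybrid argument interpolating from $\mathcal{A}(G)$ to $\mathcal{A}(G - e)$. Fix an edge $e \in E$, and for $0 \le i \le k$ let $T_i^{(e)}$ denote the joint distribution of $(S_1, \ldots, S_k)$ obtained by running $\mathcal{A}_1, \ldots, \mathcal{A}_i$ on $G$ and $\mathcal{A}_{i+1}, \ldots, \mathcal{A}_k$ on $G - e$, each algorithm being fed the previously produced outputs. Then $T_k^{(e)}$ is the joint output distribution of $\mathcal{A}$ on $G$ and $T_0^{(e)}$ is that on $G - e$. Since projecting onto the last coordinate is a Markov kernel, the data processing inequality gives $d_{\mathrm{TV}}(\mathcal{A}(G), \mathcal{A}(G - e)) \le d_{\mathrm{TV}}(T_k^{(e)}, T_0^{(e)})$, and the triangle inequality yields $d_{\mathrm{TV}}(T_k^{(e)}, T_0^{(e)}) \le \sum_{i=1}^k d_{\mathrm{TV}}(T_i^{(e)}, T_{i-1}^{(e)})$.

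To bound the $i$-th hybrid step, I observe that $T_i^{(e)}$ and $T_{i-1}^{(e)}$ share the marginal on $(S_1, \ldots, S_{i-1})$, as in both hybrids the first $i - 1$ algorithms are run on $G$, and they feed $S_i$ through the same downstream Markov kernel (running $\mathcal{A}_{i+1}, \ldots, \mathcal{A}_k$ on $G - e$). The only discrepancy is the conditional distribution of $S_i$ given $(S_1, \ldots, S_{i-1}) = s_{<i}$, which is $\mathcal{A}_i(G, s_{<i})$ in $T_i^{(e)}$ and $\mathcal{A}_i(G - e, s_{<i})$ in $T_{i-1}^{(e)}$. Combining the identity $d_{\mathrm{TV}}(P, Q) = \E_{s}[d_{\mathrm{TV}}(P(\cdot \mid s), Q(\cdot \mid s))]$ (valid when $P$ and $Q$ share the marginal on $s$) with data processing for the downstream kernel yields
$$d_{\mathrm{TV}}(T_i^{(e)}, T_{i-1}^{(e)}) \le \E_{s_{<i}}\bigl[d_{\mathrm{TV}}(\mathcal{A}_i(G, s_{<i}), \mathcal{A}_i(G - e, s_{<i}))\bigr].$$

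Finally, I take expectation over $e \sim E$. The crucial feature of the chosen hybrid direction is that the distribution of $s_{<i}$ does not depend on $e$, so Fubini lets me swap the two expectations and invoke the assumed per-tuple sensitivity bound $\E_{e}[d_{\mathrm{TV}}(\mathcal{A}_i(G, s_{<i}), \mathcal{A}_i(G - e, s_{<i}))] \le \gamma_i(G)$ for each fixed $s_{<i}$. Summing over $i$ gives the desired $\sum_{i=1}^k \gamma_i(G)$ bound. The main obstacle is precisely this independence: had I defined the hybrids with the prefix running on $G - e$ instead, the distribution of $s_{<i}$ would itself depend on $e$ and the per-tuple sensitivity hypothesis on $\mathcal{A}_i$ could not be applied cleanly. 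Once the hybrids are oriented correctly, the remaining steps are routine manipulations of TV distance.
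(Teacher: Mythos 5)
Your proof is correct, and it follows the same telescoping/hybrid structure that the paper uses for Lemma~\ref{lem:composition} (the paper proves the two-algorithm case and then iterates). What you treat as a design choice---orienting the hybrid so the prefix always runs on $G$---is, however, not merely a convenience: it patches a genuine gap in the paper's write-up. The paper inserts the intermediate term $f_{\mathcal{A}_2(G,S_1)}(S_2)\,f_{\mathcal{A}_1(G-e)}(S_1)$, i.e.\ it changes $\mathcal{A}_1$ from $G$ to $G-e$ first and $\mathcal{A}_2$ second, which yields a bound whose second summand averages $d_{\mathrm{TV}}(\mathcal{A}_2(G,S_1),\mathcal{A}_2(G-e,S_1))$ against the density $f_{\mathcal{A}_1(G-e)}$, a measure that depends on $e$. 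The paper then pulls $\gamma_2(G)$ out of the $S_1$-integral after taking $\E_{e}$, but this does not follow from the hypothesis, which only controls $\E_e[d_{\mathrm{TV}}(\mathcal{A}_2(G,S_1),\mathcal{A}_2(G-e,S_1))]$ at each fixed $S_1$; a weight correlated with $e$ can concentrate the average on precisely the bad $(e,S_1)$ pairs. For example, if $\mathcal{A}_1(G-e)$ deterministically outputs $e$ and $d_{\mathrm{TV}}(\mathcal{A}_2(G,e),\mathcal{A}_2(G-e,e)) = c$ while all other conditional TV distances vanish, the second term equals $c$ whereas $\gamma_2(G) = c/m$.

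Your orientation makes the marginal on $s_{<i}$ independent of $e$, which is exactly what licenses the interchange of $\E_e$ with the average over $s_{<i}$ and the pointwise application of the hypothesis on $\mathcal{A}_i$. The packaging---shared marginal implies TV is the expectation of conditional TVs, then data processing for the downstream kernel---is also cleaner and handles all $k$ steps in one pass rather than by induction on $k$. Both are improvements on the paper's presentation, and your explicit remark about why the other hybrid direction fails is the key observation.
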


\paragraph{Parallel composition.}
It is often the case that there are multiple algorithms that solve the same problem albeit with different average sensitivity guarantees.
Such stable-on-average algorithms can be combined via \emph{parallel composition}, where we run these algorithms according to a distribution determined by the input graph.
The advantage of parallel composition is that the average sensitivity of the resulting algorithm might be better than that of the component algorithms.

\begin{restatable}[Parallel composition]{theorem}{asdma}\label{thm:avg-sensitivity-distribution-of-multiple-algorithms}
	Let $\cA_1, \dots, \cA_\ell$ be algorithms for a graph problem with average sensitivities $\beta_1(G), \dots, \beta_\ell(G)$, respectively.
	Let $\cA$ be an algorithm that, given a graph $G$, runs $\cA_i$ with probability $\rho_i(G)$ for $i \in \{1,\dots, \ell\}$, where $\sum_{i \in \{1,\dots, \ell\}} \rho_i(G) = 1$.
	Let $\mathsf{H}$ denote the maximum over Hamming weights of all solutions output (with nonzero probability) by running $\cA$ on $G$ and on all the graphs in $\{G-e:e \in E\}$.
	Then the average sensitivity of $\cA$ is at most $\sum_{i \in \{1,\dots, \ell\}}\rho_i(G)\cdot \beta_i(G) + \mathsf{H}\cdot \E_{e \sim E}\left[\sum_{i \in \{1,\dots, \ell\}}|\rho_i(G) - \rho_i(G-e)|\right]$.
\end{restatable}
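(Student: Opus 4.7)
The plan is to construct, for each edge $e \in E$, a coupling between the output distributions $\cA(G)$ and $\cA(G-e)$ whose expected Hamming cost splits into two pieces: one coming from the internal randomness of each component $\cA_i$, and one charged to the discrepancy between the mixing weights $\rho_i(G)$ and $\rho_i(G-e)$. Averaging this per-edge bound over a uniformly random $e$ then directly yields the two terms in the statement.

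Fix $e \in E$ and write $p_i = \rho_i(G)$, $q_i = \rho_i(G-e)$, so that $\cA(G) = \sum_i p_i \cA_i(G)$ and $\cA(G-e) = \sum_i q_i \cA_i(G-e)$ as mixtures. I would couple the two mixtures by a two-stage process. First, select an index $i$ with probability $\min(p_i, q_i)$ and draw the pair from the optimal coupling of $\cA_i(G)$ and $\cA_i(G-e)$ that realizes $d_{\mathrm{EM}}(\cA_i(G), \cA_i(G-e))$; this stage has total probability $\sum_i \min(p_i, q_i)$. Otherwise, with the remaining probability $1 - \sum_i \min(p_i, q_i) = \tfrac{1}{2}\sum_i |p_i - q_i|$, match the leftover left-marginal mass $p_i - \min(p_i, q_i)$ to the leftover right-marginal mass $q_j - \min(p_j, q_j)$ arbitrarily. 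The two residual distributions have equal total mass, so this step is well defined, and the marginals of the resulting joint distribution are $\cA(G)$ and $\cA(G-e)$ by construction.

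The matched stage contributes at most $\sum_i \min(p_i, q_i) \cdot d_{\mathrm{EM}}(\cA_i(G), \cA_i(G-e)) \le \sum_i p_i \cdot d_{\mathrm{EM}}(\cA_i(G), \cA_i(G-e))$ to the expected Hamming cost, while the unmatched stage contributes at most $\mathsf{H} \cdot \bigl(1 - \sum_i \min(p_i, q_i)\bigr) = \tfrac{\mathsf{H}}{2}\sum_i |p_i - q_i|$, since every solution produced by $\cA$ on $G$ or $G-e$ has Hamming weight at most $\mathsf{H}$. Summing these upper-bounds $d_{\mathrm{EM}}(\cA(G), \cA(G-e))$; averaging over $e$ and using $\E_e[d_{\mathrm{EM}}(\cA_i(G), \cA_i(G-e))] = \beta_i(G)$ then produces the claimed inequality (indeed with the sharper constant $\mathsf{H}/2$ in place of $\mathsf{H}$).

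I do not anticipate a real obstacle beyond setting up the mixture coupling carefully. The only delicate check is that the residual-matching stage is well defined, which reduces to the identity $\sum_i (p_i - \min(p_i, q_i)) = \sum_i (q_i - \min(p_i, q_i))$ (both sides equal $1 - \sum_i \min(p_i, q_i)$). Conceptually, the key observation is simply that whenever both runs happen to select the same component $\cA_i$ the only cost incurred is the ordinary sensitivity of $\cA_i$, and all remaining mass is absorbed into $\mathsf{H}$ times the total variation between $\{\rho_i(G)\}_i$ and $\{\rho_i(G-e)\}_i$.
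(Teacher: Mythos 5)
Your coupling argument and the paper's proof are essentially the same idea with different bookkeeping: the paper passes through the intermediate mixture $\sum_i \rho_i(G)\cA_i(G-e)$ and applies the triangle inequality on earth mover's distance (first swap each $\cA_i(G)$ for $\cA_i(G-e)$ with the weights fixed, then swap the weights), while you build one explicit coupling whose matched portion is coupled component-by-component and whose residual absorbs the weight discrepancy. The coupling you describe is valid and the marginal check via $\sum_i(p_i - \min(p_i,q_i)) = \sum_i(q_i - \min(p_i,q_i))$ is correct.

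However, your claim of a sharper constant $\mathsf{H}/2$ is a factor-of-two slip. The quantity $\mathsf{H}$ is defined as the maximum Hamming \emph{weight} (i.e.\ size) of any solution that $\cA$ can output on $G$ or on some $G-e$; the Hamming \emph{distance} between two such solutions $A$ and $B$ is $|A \triangle B| \le |A| + |B| \le 2\mathsf{H}$, not $\mathsf{H}$. So each unit of residual mass must be charged $2\mathsf{H}$, and the unmatched stage contributes at most $2\mathsf{H}\cdot\bigl(1 - \sum_i \min(p_i,q_i)\bigr) = \mathsf{H}\sum_i |p_i - q_i|$. Averaging over $e$ then recovers exactly the stated bound with constant $\mathsf{H}$, matching the paper (which uses $2\mathsf{H}$ times the total variation distance $\tfrac12\sum_i|p_i - q_i|$ in its second transformation). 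With this correction your proof is complete; the rest of the argument is sound.
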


\noindent In this paper, we use the above theorem extensively to combine algorithms with different average sensitivities.

\subsection{Connection to sublinear-time algorithms}

We show a relationship between the average sensitivity of an algorithm and the query complexity of a sublinear-time algorithm~\cite{Nguyen:2008fr,HassidimKNO09,YYI12} that simulates oracle access to the solution output by the former algorithm.
Roughly speaking, we show, in Theorem~\ref{thm:generic-transformation-from-local-oracle-to-stable-algorithm}, that the average sensitivity of an algorithm $\cA$ is bounded by the query complexity of another algorithm $\mathcal{O}$, which we call a \emph{solution oracle}, where $\mathcal{O}$ \emph{queries} the edges of the graph $G$ and simulates oracle access to the solution produced by $\cA$ on input $G$.
We first formalize the notion of a solution oracle.

\begin{definitiont}[Solution Oracle]\label{def:solution-oracle}
Consider a deterministic algorithm $\cA:\cG \to \cS$ for a graph problem, where each solution output by $\cA$ is a subset of the set of edges of the input graph.
An algorithm $\mathcal{O}$ is a \emph{solution oracle} for $\cA$ if $\mathcal{O}$ satisfies:
	\begin{itemize}
		\item $\mathcal{O}$ has access to a graph $G = (V,E)$, which is represented as adjacency lists, via \emph{neighbor queries}, where each query is of the form $(v,i)$ for $v \in V$ and $i \in [|V|]$ and the answer is the $i$-th neighbor of vertex $v$ in its adjacency list (and a special symbol if $i$ is larger than the degree of $v$),
		\item given an edge $e \in E$ as input, $\mathcal{O}$ queries $G$ and outputs whether $e$ is contained in the solution obtained by running $\cA$ on $G$.
	\end{itemize}
The solution oracle $\mathcal{O}$ of a randomized algorithm $\cA$ first generates a random string $\pi \in {\{0,1\}}^{r(|V|)}$ and then runs the solution oracle $\mathcal{O}_\pi$ of the deterministic algorithm $\cA_\pi$ obtained by fixing the randomness of $\cA$ to $\pi$.
\end{definitiont}
Note that an analogous definition can be made for algorithms that output a subset of vertices.

\begin{restatable}[Sublinearity implies low average sensitivity]{theorem}{gtflotsa}\label{thm:generic-transformation-from-local-oracle-to-stable-algorithm}
	Consider a randomized algorithm $\cA:\cG \to \cS$ for a graph problem, where each solution output by $\cA$ is a subset of the set of edges in the input graph.
	Assume that there exists a solution oracle $\mathcal{O}$ for $\cA$ such that $\mathcal{O}$ makes at most $q(G)$ queries to $G$ in expectation, where this expectation is taken over the random coins of $\mathcal{O}$ and over input edges $e \in E$.
Then, $\cA$ has average sensitivity at most $q(G)$.
       Moreover, given the promise that the input graphs satisfy $|E| \ge |V|$, the statement applies also to algorithms for which each solution is a subset of the set of vertices in the input graph.
 \end{restatable}
We use Theorem~\ref{thm:generic-transformation-from-local-oracle-to-stable-algorithm} to design a stable-on-average matching algorithm (Theorem~\ref{thm:matching-iterative-greedy}) based on a sublinear-time matching algorithm due to Yoshida et al.~\cite{YYI12}.

Closely related to Theorem~\ref{thm:generic-transformation-from-local-oracle-to-stable-algorithm} is Corollary~\ref{cor:LCA-to-stable-on-average-algorithm}, which says that if a graph problem has a \emph{local computation algorithm} (LCA), then one can design a stable-on-average algorithm for that problem.
LCAs, whose definition we give below, were introduced by Rubinfeld et al.~\cite{Rubinfeld:2011} and has been widely studied ever since~\cite{AlonRVX12,EvenMR18,HassidimMV16,LenzenL18,LeviRR20,LeviRY17,MansourPV18,MRVX12,MV13,ParterRVY19,ReingoldV16}. For more information on LCAs, we refer the interested reader to an excellent survey on the topic by Levi and Medina~\cite{LeviM17}.

\begin{definitiont}[Local Computation Algorithm (LCA)]\label{def:lca}
Consider a graph problem $\mathcal{P}: \cG \to \cS$, where the output to the problem is a subset of edges of the input graph.
Let $\delta: \mathbb{N} \to [0,1]$ and $q, r:\mathbb{N} \to \mathbb{N}$.
A $(q,r,\delta)$-LCA for $\mathcal{P}$ is an algorithm $\mathcal{L}$ that, given query access to a graph $G = (V,E)$ (as in Definition~\ref{def:solution-oracle}), first generates a random string $\pi \in \{0,1\}^{r(|V|)}$, and satisfies:
\begin{itemize}
	\item given an input $e \in E$, the algorithm $\mathcal{L}$ makes at most $q(|V|)$ queries to $G$ and answers whether $e$ is part of a solution to the problem $\mathcal{P}$ on graph $G$, and
	\item the answers of $\mathcal{L}$ to all possible input edges are consistent with a single feasible solution to $\mathcal{P}$ on $G$.
\end{itemize}
For every graph $G$, the probability (over the choice of random string) that there exists an input edge for which $\mathcal{L}$ makes more than $q(|V|)$ queries is at most $\delta(|V|)$.
\end{definitiont}
We mention that $\mathcal{L}$ is not allowed to perform any preprocessing on the graph.
Additionally, the same set of edges is queried by $\mathcal{L}$ when the same edge is given as input multiple times.

Note that one can have an analogous definition of LCAs for graph problems where each solution is a subset of vertices.
We have the following result which is a direct corollary of Theorem~\ref{thm:generic-transformation-from-local-oracle-to-stable-algorithm}.
\begin{restatable}[LCAs imply stable-on-average algorithms]{corollary}{Ltsoaa}\label{cor:LCA-to-stable-on-average-algorithm}
Consider a graph problem $\mathcal{P}: \cG \to \cS$.
Let $\delta: \mathbb{N} \to [0,1]$ and $q, r:\mathbb{N} \to \mathbb{N}$.
If $\mathcal{P}$ has a $(q,r,\delta)$-LCA $\mathcal{L}$, then, there exists an algorithm $\mathcal{A}$ for $\mathcal{P}$, that on input $G = (V,E)$, has average sensitivity at most $q(|V|) + |E| \cdot \delta(|V|)$.
\end{restatable}


Theorem~\ref{thm:generic-transformation-from-local-oracle-to-stable-algorithm} and Corollary~\ref{cor:LCA-to-stable-on-average-algorithm} cement the intuition that strong locality guarantees for solutions output by an algorithm imply that the removal of edges from a graph affects only the presence of a few elements (edges or vertices) in the solution, which in turn implies low average sensitivity.
On the contrapositive side, Corollary~\ref{cor:LCA-to-stable-on-average-algorithm} implies that a lower bound on the average sensitivity of algorithms for a problem implies a lower bound on the query complexity of an LCA (with failure probability $o(1/n^2)$) for the same problem, where $n$ denotes the number of vertices.
Exploiting this result, we show that every LCA for $2$-coloring has query complexity $\Omega(n)$, thereby answering an open question raised by Czumaj et al.~\cite{CMV18}.
We believe that this connection has the potential to shed more light on fundamental limits of LCAs and is of independent interest.

\subsection{Stable-on-average algorithms for concrete problems}

We summarize, in Table~\ref{tab:results}, the average sensitivity bounds that we obtain for various concrete problems.
We use $n$, $m$, $\mathsf{OPT}$ to denote the number of vertices, the number of edges, and the optimal value.

All of our algorithms run in polynomial time, and for $k \ge 1$, upper bounds on $k$-average sensitivity of these algorithms can be easily obtained using Theorem~\ref{thm:sensitivity-against-deleting-multiple-edges}.
Except in the case of our algorithm for the minimum spanning forest problem, our stable-on-average algorithms are all randomized.
Our lower bounds hold for any (randomized) algorithm to solve the respective problems with the specified approximation guarantee.

\begin{table}[t!]
	\centering
	\caption{Our results. Here $n$, $m$, $\mathsf{OPT}$ denote the number of vertices, the number of edges, and the optimal value, respectively, and $\epsilon \in (0,1)$ is an arbitrary constant. The notation $\widetilde{O}(\cdot)$ hides polylogarithmic factors in $n$.
  Approximation guarantees are multiplicative unless specified otherwise.}\label{tab:results}

	\begin{tabular}{lllll}
		\toprule
		\multirow{2}{*}{Problem} & \multirow{2}{*}{Output} & Approximation & Average & \multirow{2}{*}{Reference}\\
		& & Guarantee & Sensitivity & \\
		\midrule
		Minimum Spanning  & \multirow{2}{*}{Edge set} & $1$ & $O(\frac{n}{m})$ & Sec.~\ref{sec:spanning-tree} \\
    Forest & & $< \infty$ & $\Omega(\frac{n}{m})$ & Sec.~\ref{sec:spanning-tree} \\ \cmidrule(lr){1-5}
		\multirow{3}{*}{Global Minimum Cut} & \multirow{3}{*}{Vertex set} & $2+\epsilon$ & $n^{O(\frac{1}{\epsilon\mathsf{OPT}})}$ & Sec.~\ref{subsec:minimum-cut-upper-bound}\\
		& & $1$ & $\Omega(n)$ & Sec.~\ref{subsec:minimum-cut-lower-bound} \\
    & & $<\infty$ & $\Omega\left(\frac{n^{\frac{1}{\mathsf{OPT}}}}{\mathsf{OPT}^2}\right)$ & Sec.~\ref{subsec:minimum-cut-lower-bound} \\
      \cmidrule(lr){1-5}
    Minimum $s$-$t$ Cut & Vertex set & additive $O(n^{2/3})$ & $O\left(n^{2/3}\right)$ & Sec.~\ref{sec:s-t-min-cut} \\ \cmidrule(lr){1-5}
		\multirow{3}{*}{Maximum Matching} & \multirow{3}{*}{Edge set} & $1/2$ & $1$ & Sec.~\ref{subsec:matching-greedy} \\
		& & $1-\epsilon$ & $\widetilde{O}\left({\left(\frac{\mathsf{OPT}}{\epsilon^3}\right)}^{\frac{1}{1+\Omega(\epsilon^2)}}\right)$ & Sec.~\ref{subsec:matching-iterative-greedy} \\
		& & $1$ & $\Omega(n)$ & Sec.~\ref{subsec:matching-exact-lower-bound} \\
		\cmidrule(lr){1-5}
		Minimum Vertex Cover & Vertex set & $2$ & $2$ & Sec.~\ref{subsec:matching-greedy} \\  \cmidrule(lr){1-5}
		2-Coloring & Vertex set & --- & $\Omega(n)$ & Sec.~\ref{sec:2-coloring} \\
		\bottomrule
	\end{tabular}
\end{table}

For the minimum spanning forest problem, we show that the classical Kruskal's algorithm~\cite{Kruskal:1956} has average sensitivity $O(n/m)$, which is at most $1$, and is quite small considering that Kruskal's algorithm is deterministic and that the spanning forest can have $\Omega(m)$ edges.
We also show a matching lower bound of $\Omega(n/m)$ for the minimum spanning forest problem, implying that the average sensitivity of Kruskal's algorithm is optimal.
In contrast, we show that Prim's algorithm can have average sensitivity $\Omega(m)$ for a natural (and deterministic) rule of breaking ties among edges.



For the global minimum cut problem, we show that every algorithm that outputs the exact mincut has to have average sensitivity $\Omega(n)$.
However, by allowing for a multiplicative approximation guarantee of $2 + \epsilon$ for $\epsilon > 0$, we design a global minimum cut algorithm with average sensitivity $n^{O(\frac{1}{\epsilon\mathsf{OPT}})}$.
If $\mathsf{OPT} = \Omega(\log n)$, the average sensitivity of our algorithm is $O(1)$, which is quite small.
We also prove a nearly tight lower bound on the average sensitivity of any algorithm that guarantees a purely multiplicative approximation to the minimum cut size.
In particular, when $\mathsf{OPT}$ is $o(\log n)$, our lower bound matches, up to a polylogarithmic factor, our upper bound for $3$-approximating the minimum cut size.



Our lower bound of $\Omega(n)$ on the average sensitivity of algorithms that output the exact global minimum cut also applies to the minimum $s$-$t$ cut problem.
In contrast, we show that it is possible to achieve average sensitivity of $O(n^{2/3})$ for the minimum $s$-$t$ cut problem by allowing for an additive $O(n^{2/3})$ approximation.


We show that the average sensitivity of every algorithm that outputs the exact maximum matching is $\Omega(n)$, implying that some approximation is essential to obtain nontrivial average sensitivity.
We also propose two stable-on-average approximation algorithms for maximum matching.
Our first algorithm has approximation ratio $1/2$ and average sensitivity at most $1$.
This result immediately implies a $2$-approximation algorithm for the minimum vertex cover problem with average sensitivity at most $2$.
Our second algorithm for maximum matching has approximation ratio $1-\eps$ and average sensitivity $\widetilde{O}\left({\left(\mathsf{OPT}/\eps^3\right)}^{1/(1+\Omega(\eps^2))}\right)$ for every constant $\eps \in (0,1)$.




In the 2-coloring problem, given a bipartite graph, we are to output one part in the bipartition.
For this problem, we show a lower bound of $\Omega(n)$ for the average sensitivity, that is, there is no algorithm with nontrivial average sensitivity.

\paragraph{Implications on the query complexity of LCAs.}
Recall that, by Corollary~\ref{cor:LCA-to-stable-on-average-algorithm}, the average sensitivity lower bound for a problem implies an identical lower bound on the query complexity of an LCA (with failure probability $o(1/n^2)$) for the same problem.
This implies that every LCA for exact maximum matching, global minimum cut, and minimum $s$-$t$ cut has query compleity $\Omega(n)$.
Additionally, every LCA giving a purely multiplicative approximation guarantee for the global minimum cut has query complexity $\Omega\left(\frac{n^{\frac{1}{\mathsf{OPT}}}}{\mathsf{OPT}^2}\right)$.
Additionally, as mentioned earlier, every LCA for $2$-coloring has query complexity $\Omega(n)$, which answers an open question raised by Czumaj et al.~\cite{CMV18}.
\subsection{Discussions on average sensitivity}

\paragraph{Output representation.}
Average sensitivity is dependent on the output representation.
For example, we can double the average sensitivity by duplicating the output.
A natural idea for alleviating this issue is to normalize the average sensitivity by the maximum Hamming weight $\mathsf{H}$ of a solution.
However, for minimization problems where the optimal value $\mathsf{OPT}$ could be much smaller than $\mathsf{H}$, such a normalization can diminish subtle differences in average sensitivity, e.g., $O(\mathsf{OPT}^{1/2})$ vs $O(\mathsf{OPT})$.
It is an interesting open question whether there is a canonical way to normalize average sensitivity so that the resulting quantity is independent of the output representation.

\paragraph{Sensitivity against adversarial edge removals.}
It is also natural to take the maximum, instead of the average, over edges in definitions~\eqref{eq:deterministic-sensitivity} and~\eqref{eq:randomized-sensitivity}, which can be seen as sensitivity against adversarial edge removals.
Indeed a similar notion has been proposed to study algorithms for geometric problems~\cite{Meulemans:2018jr}.
However, in the case of graph algorithms, it is hard to guarantee that the output of an algorithm does not change much after removing an arbitrary edge. Moreover, by a standard averaging argument, one can say that for 99\% of arbitrary edge removals, the sensitivity of an algorithm is asymptotically equal to the average sensitivity, which is sufficient in most cases.

\paragraph{Average sensitivity w.r.t.\ edge additions.}
As another variant of average sensitivity, it is natural to consider incorporating edge additions in definitions~\eqref{eq:deterministic-sensitivity} and~\eqref{eq:randomized-sensitivity}.
If an algorithm is stable-on-average against edge additions, then in addition to the case of not knowing the true graph as we have discussed earlier, it will be useful for the case that the graph dynamically changes but we want to prevent the output of the algorithm from fluctuating too much.
However, in contrast to removing edges, it is not always clear how we should add edges to the graph in definitions~\eqref{eq:deterministic-sensitivity} and~\eqref{eq:randomized-sensitivity}.
A naive idea is sampling $k$ pairs of vertices uniformly at random and adding edges between them.
This procedure makes the graph close to a graph sampled from the Erd\H{o}s-R\'{e}nyi model~\cite{erdHos1959random}, which does not represent real networks such as social networks and road networks well.
To avoid this subtle issue, in this work, we focus on removing edges.

\paragraph{Alternative notion of average sensitivity for randomized algorithms.}
Consider a randomized algorithm $\mathcal{A}$ that, given a graph $G$ on $n$ vertices, generates a random string $\pi \in {\{0,1\}}^{r(n)}$ for some function $r:\mathbb{N} \to \mathbb{N}$, and then runs a deterministic algorithm $\mathcal{A}_\pi$ on $G$, where the algorithm $\mathcal{A}_\pi$ has $\pi$ hardwired into it.
Assume that $\mathcal{A}_\pi$ can be applied to any graph.
It is also natural to define the average sensitivity of $\mathcal{A}$ as
\begin{align}
  \E_{e \sim E} \left[ \E_\pi \left[d_{\mathrm{Ham}}\bigl(\mathcal{A}_\pi(G),\mathcal{A}_\pi(G-e)\bigr) \right]\right].
  \label{eq:alternative-randomized-sensitivity}
\end{align}
In other words, we measure the expected distance between the outputs of $\mathcal{A}$ on $G$ and $G-e$ when we feed the same string $\pi$ to $\mathcal{A}$, over the choice of $\pi$ and edge $e$.
Note that~\eqref{eq:alternative-randomized-sensitivity} upper bounds~\eqref{eq:randomized-sensitivity} because, in the definition of the earth mover's distance, we optimally transport probability mass from $\mathcal{A}(G)$ to $\mathcal{A}(G-e)$ whereas, in~\eqref{eq:alternative-randomized-sensitivity}, how the probability mass is transported is not necessarily optimal.

We can actually bound~\eqref{eq:alternative-randomized-sensitivity} for some of our algorithms.
In this work, however, we focus on the definition~\eqref{eq:randomized-sensitivity} because the assumption that $A_\pi$ can be applied to any graph does not hold in general, and bounding~\eqref{eq:alternative-randomized-sensitivity} is unnecessarily tedious and is not very enlightening.

\subsection{Overview of our techniques}


\paragraph{Global minimum cut.} For the global minimum cut problem, our algorithm is inspired by a differentially private algorithm\footnote{We compare and contrast the definitions of average sensitivity and differential privacy in Section~\ref{sec:related-work}.} for the same problem by Gupta~et~al.~\cite{GLMRT10}.
Our algorithm, given a parameter $\eps > 0$ and a graph $G$ as input, first enumerates a list of cuts whose sizes are at most $(2+\eps)\cdot \mathsf{OPT}$; this enumeration can be done in polynomial time as shown by Karger's theorem~\cite{K93}.
The algorithm then outputs a cut from the list with probability exponentially small in the product of the size of the cut and $O(1/\eps \cdot \mathsf{OPT})$.
The main argument in analyzing the average sensitivity of the algorithm is that the aforementioned distribution is very close (in earth mover's distance) to a related Gibbs distribution on the set of all cuts in the graph. Therefore the average sensitivity of the algorithm is of the same order as that of the average sensitivity of sampling a cut from such a Gibbs distribution, where the latter sampling task requires exponential time.
We finally show that the average sensitivity of sampling a cut from this Gibbs distribution is at most $n^{O(1/\eps \mathsf{OPT})}$.

\paragraph{Minimum $s$-$t$ cut.}
The first stage of our algorithm consists of solving an LP relaxation for the minimum $s$-$t$ cut problem in a stable way, for an appropriately defined notion of sensitivity.
Given a graph $G = (V,E)$ and vertices $s,t \in V$, the relaxation contains variables $\bm{d}(\{u,v\}) \in [0,1]$ for each $\{u,v\} \in \binom{V}{2}$, where these variables can be thought of as representing a \emph{pseudometric} over the vertices.
The constraints include triangle inequalities, and also a special constraint $\bm{d}(\{s,t\}) = 1$.
The objective is to minimize the sum of the variables associated with the edges in $G$.
Intuitively, if $\bm{d}(\{s,v\})$ is \emph{large} in a solution to the linear program, then the vertex $v$ falls on the \emph{$t$-side} of the $s$-$t$ cut represented by the solution.

Our stable LP solving strategy works by solving a related LP that is identical to the original LP, except for a regularization term added to the objective function.
This regularization term is $n^{-1/3}$ times the $\ell_2$ norm of the vector of variables $(\bm{d}(\{s,v\}): v \in V)$, where we use $\|\bm{d}\|_s$ to denote this norm.
It is easy to show that the value of the optimal solution to the regularized LP is within an additive $O(n^{2/3})$ of the value of the optimal solution to the original LP.
We also show that for all $e \in E$, the solutions output by our LP solver graphs $G$ and $G-e$ are close to each other with respect to $\|\cdot\|_s$.
Here, we use the fact that the regularized objective function is strongly convex with respect to $\|\cdot\|_s$.

Given a solution to the (regularized) LP relaxation, our rounding procedure samples a threshold $\tau \in [0,1]$ uniformly at random and outputs the set $S$ consisting of all vertices $u \in V$ such that $d(\{s,u\}) \le \tau$.
The approximation guarantee of this algorithm follows from the fact that we are rounding based on a near optimal solution to the linear programming relaxation.
To analyze the average sensitivity of the algorithm, we first show that the earth mover's distance (with respect to Hamming distance) between the outputs of the rounding procedure for inputs $\bm{d}, \bm{d}' \in {[0,1]}^{\binom{V}{2}}$ is bounded by the $\ell_1$ distance between the vectors $(\bm{d}(\{s,v\}): v \in V)$ and $(\bm{d}'(\{s,v\}): v \in V)$.
Combining this with the bound on the average sensitivity of our LP solving strategy, we obtain our final bound on the average sensitivity for our algorithm to approximate the minimum $s$-$t$ cut.

\paragraph{Maximum matching.}
Our stable-on-average $\frac{1}{2}$-approximation algorithm for the maximum matching problem first considers a uniformly random ordering of the edges in the input graph, and then greedily adds edges to the matching according to that ordering.
In the context of dynamic distributed algorithms, Censor-Hillel et al.~\cite{CHK16} showed that at most $1$ edge changes in the matching (in expectation) due to the removal of a uniformly random edge, where the expectation is taken over the edge removed and the ordering of edges.
This result immediately implies that the average sensitivity of this randomized greedy matching algorithm is at most $1$.
In addition, it implies a $2$-approximation algorithm for minimum vertex cover with average sensitivity at most $2$.

There are several components to the design and analysis of our stable-on-average $(1 - \eps)$-approximation algorithm.
Our starting point is the observation (Theorem~\ref{thm:generic-transformation-from-local-oracle-to-stable-algorithm}) that the existence of a sublinear-time solution oracle $\mathcal{O}$ (see Definition~\ref{def:solution-oracle}) for an algorithm $\cA$ implies that $\cA$ is stable-on-average.
We use Theorem~\ref{thm:generic-transformation-from-local-oracle-to-stable-algorithm} to bound the average sensitivity of a $(1-\eps')$-approximation algorithm $\cA$ for the maximum matching problem, where $\eps' = \Omega(\eps)$.
Specifically, $\cA$ constructs a matching by considering augmenting paths of increasing length, and augmenting the (initially empty) matching iteratively, where the paths of each length are considered in a uniformly random order.
Yoshida~et~al.~\cite{YYI12} constructed a sublinear-time solution oracle that, given a uniformly random edge $e \in E$ as input, makes $O\left(\Delta^{O(1/{(\eps')}^2)}\right)$ queries to $G$ in expectation and answers whether $e$ is in the matching output by $\cA$ on $G$, where the expectation is over the choice of input $e$ and the randomness in $\cA$, and $\Delta$ is the maximum degree of $G$.
Combined with Theorem~\ref{thm:generic-transformation-from-local-oracle-to-stable-algorithm}, this implies that the average sensitivity of $\cA$ is $O\left(\Delta^{O(1/{(\eps')}^2)}\right)$.

Next, we transform $\cA$ to also work for graphs of unbounded degree as follows.
The idea is to remove vertices of degree at least $\frac{m}{\eps' \mathsf{OPT}}$ from the graph and run $\cA$ on the resulting graph.
This transformation affects the approximation guarantee only by an additive $\eps' \mathsf{OPT}$ term, since the number of such \emph{high degree} vertices is small.
However, this thresholding procedure might itself have high average sensitivity, since the thresholds for $G$ and $G-e$ can be very different for all $e \in E$.

We circumvent this issue by using a Laplace random variable $L$ as the threshold, where the distribution of $L$ is tightly concentrated around $\frac{m}{\eps' \mathsf{OPT}}$.
We use our sequential composition theorem (Theorem~\ref{thm:sequential-composition-TV-and-EM}) in order to analyze the average sensitivity of the resulting procedure, where we consider the instantiation of the Laplace random threshold as the first algorithm, and the remaining steps in the procedure as the second algorithm.
The first term in the expression given by Theorem~\ref{thm:sequential-composition-TV-and-EM} turns out to be a negligible quantity and is easy to bound.
The main task in bounding the second term is to bound, for all $x \in \mathbb{R}$, the average sensitivity of a procedure $\cA_x$ that, on the input graph $G$, removes all vertices of degree at least $x$ from $G$ and runs the augmenting paths-based matching algorithm.
The heart of the argument in bounding this average sensitivity is that given a solution oracle ${\cal O}$ with query complexity $q(\Delta)$ for an algorithm ${\cal A}$, we can, for all $x \in \R$, construct a solution oracle ${\cal O}_x$ for the algorithm ${\cal A}_x$.
Moreover, the query complexity of ${\cal O}_x$ is at most $O(x^2q(x))$.
By Theorem~\ref{thm:generic-transformation-from-local-oracle-to-stable-algorithm}, this is also a bound on the average sensitivity of $\cA_x$.
Using this, we bound the second term in the expression given by Theorem~\ref{thm:sequential-composition-TV-and-EM} as $\E_{L}\left[O(L^2q(L))\right] = O\left({\left(\frac{m}{\eps' {\sf OPT}}\right)}^{O(1/{(\eps')}^2)}\right)$.

An issue with the aforementioned matching algorithm is that its average sensitivity is poor for graphs with small values of $\mathsf{OPT}$.
We observe that, in contrast to this, the algorithm that simply outputs the lexicographically smallest maximum matching has average sensitivity $O(\mathsf{OPT}^2/m)$, since the output matching stays the same unless an edge in the matching is removed.
We obtain our final stable-on-average $(1 - \eps)$-approximation algorithm for the maximum matching problem by running these two algorithms according to a probability distribution determined by the input graph.
Using our parallel composition theorem, we bound the average sensitivity of the resultant algorithm as $\widetilde{O}\left({\left(\mathsf{OPT}/\eps^3\right)}^{1/(1+\Omega(\eps^2))}\right)$.

%

\paragraph{2-coloring.} To show our $\Omega(n)$ lower bound on the average sensitivity for $2$-coloring, consider the set of all paths on $n$ vertices and the set of all graphs obtained by removing exactly one edge from these paths (called $2$-part-paths). A path has exactly two ways of being $2$-colored and a $2$-path has four ways of being $2$-colored. A path and $2$-part-path are neighbors if the latter is obtained from the former by removing an edge. A $2$-part-path has at most four neighbors. The output distribution of any $2$-coloring algorithm $\cA$ on a $2$-part-path can be close (in earth mover's distance) only to those of at most $2$ of its neighboring paths. If $\cA$, however, has low average sensitivity, the output distributions of $\cA$ have to be close on a large fraction of pairs of neighboring graphs, which gives a contradiction.


\subsection{Related work}\label{sec:related-work}

\paragraph{Average sensitivity of network centralities.}
\emph{(Network) centrality} is a collective name for indicators that measure importance of vertices or edges in a network.
Notable examples are closeness centrality~\cite{Bavelas:1950,Beauchamp:1965,Sabidussi:1966}, harmonic centrality~\cite{Marchiori:2000dx}, betweenness centrality~\cite{Freeman:1977ww}, and PageRank~\cite{Page:1999wg}.
To compare these centralities qualitatively, Murai and Yoshida~\cite{Murai:2019hG} recently introduced the notion of average-case sensitivity for centralities.
Fix a vertex centrality measure $c$; let $c_G(v)$ denote the centrality of a vertex $v \in V$ in a graph $G=(V,E)$.
Then, the \emph{average-case sensitivity} of $c$ on $G$ is defined as
\[
S_c(G) = \E_{e \sim E}\E_{v \sim V}\frac{|c_{G-e}(v)-c_G(v)|}{c_G(v)},
\]
where $e$ and $v$ are sampled uniformly at random.
They showed various upper and lower bounds for centralities.
See~\cite{Murai:2019hG} for details.

Since a centrality measure assigns real values to vertices, they studied the relative change of the centrality values upon removal of random edges.
As our focus in this work is on graph algorithms, our notion~\eqref{eq:randomized-sensitivity} measures the Hamming distance between solutions when one removes random edges.

\paragraph{Differential privacy.}
\emph{Differential privacy}~\cite{Dwork:2006dw} is a notion closely related to average sensitivity.
Assuming the existence of a neighbor relation over inputs, the definition of differential privacy requires that the distributions of outputs on neighboring inputs are similar.
The variant of differential privacy closest to our definition of average sensitivity is edge differential privacy introduced by Nissim et al.~\cite{Nissim:2007} and further studied by~\cite{Hay:2009a, GLMRT10,Karwa:2012,Kasiviswanathan:2013,Karwa:2014,Raskhodnikova:2016}. Here, the neighbors of a graph $G=(V,E)$ are defined to be ${\{G-e\}}_{e \in E}$.
For $\eps > 0$, we say that an algorithm is \emph{$\eps$-differentially private} if for all $e \in E$,
\begin{align}
\exp(-\eps) \cdot \Pr[\mathcal{A}(G-e) \in \mathcal{S}]
\leq \Pr[\mathcal{A}(G) \in \mathcal{S}]
\leq \exp(\eps) \cdot \Pr[\mathcal{A}(G-e) \in \mathcal{S}] \label{eq:differential-privacy}
\end{align}
for any set of solutions $\mathcal{S}$.

Differential privacy has stricter requirements than average sensitivity.
Firstly, differential privacy is a \emph{worst-case} sensitivity notion.
Moreover, since differential privacy guarantees that the probabilities of outputting a specific solution on $G$ and $G-e$ are close to each other, the total variation distance between the two distributions $\mathcal{A}(G)$ and $\mathcal{A}(G-e)$ must be small.
The earth mover's distance between two output distributions can be small even if the total variation distance between them is large, and therefore, even if an algorithm is not differentially private, it could still be stable-on-average.
Despite these differences, our stable-on-average algorithm for the global minimum cut problem is inspired by a differentially private algorithm for the same problem~\cite{GLMRT10}.


\paragraph{Generalization and stability of learning algorithms.}
Generalization~\cite{ShalevShwartz:2009ij} is a fundamental concept in statistical learning theory.
Given samples $\bm{z}_1,\ldots,\bm{z}_n$ from an unknown true distribution $\mathcal{D}$ over a dataset, the goal of a learning algorithm ${\cal L}$ is to output a parameter $\theta$ that minimizes \emph{expected loss} $\mathop{\E_{\bm{z} \sim \mathcal{D}}}[\ell(\bm{z};\theta)]$, where $\ell(\bm{z};\theta)$ is the loss incurred by
a sample $\bm{z}$ with respect to a parameter $\theta$.
As the true distribution $\mathcal{D}$ is unknown, a frequently used approach in learning is to compute a parameter $\theta$ that minimizes the \emph{empirical loss} $\frac{1}{n}\cdot\sum_{i=1}^n\ell(\bm{z}_i;\theta)$, which is an unbiased estimator of the expected loss and is purely a function of the available samples.
The \emph{generalization error} of a learner ${\cal L}$ is a measure of how close the empirical loss is to the expected loss as a function of the sample size $n$.

One technique to reduce the generalization error is to add a \emph{regularization} term to the loss function being minimized~\cite{Bousquet:2002wn}.
This also ensures that the learned parameter $\theta$ does not change much with respect to minor changes in the samples being used for learning.
Therefore, in a sense, learning algorithms that use regularization can be considered as being stable according to our definition of sensitivity.

Bousquet and Elisseeff~\cite{Bousquet:2002wn} defined a notion of stability for learning algorithms in relation to reducing the generalization error.
Their stability notion requires that the empirical loss of the learning algorithm does not change much by removing or replacing \emph{any} sample in the input data.
In contrast, in our definition of average sensitivity, we consider removing \emph{random} edges from a graph and measure the change in the output solution rather than that in the objective value.

\subsection{Organization}
We show our stable-on-average algorithms for the minimum spanning forest problem, the global minimum cut problem, the minimum $s$-$t$ cut problem, and the maximum matching problem problems in Sections~\ref{sec:spanning-tree},~\ref{sec:minimum-cut},~\ref{sec:s-t-min-cut}, and~\ref{sec:matching}, respectively.
Our lower bounds on the average sensitivity of algorithms for the global minimum cut problem and the maximum matching problem can also be found in Sections~\ref{sec:minimum-cut}, and~\ref{sec:matching}, respectively.
We show a linear lower bound for the 2-coloring problem in Section~\ref{sec:2-coloring}.
We discuss general properties of average sensitivity in Section~\ref{sec:general}.


\section{Preliminaries}\label{sec:preliminaries}
For a positive integer $n$, let $[n] = \{1,2,\ldots,n\}$.
Let $G=(V,E)$ be a graph whose vertex set is $V$ and edge set is $E$.
We denote by $\mathcal{G}$ the (infinite) set consisting of all graphs.
We often use the symbols $n$, $m$, $\Delta$ to denote the number of vertices, the number of edges, and the maximum degree of a vertex, respectively, in the input graph.
We use $\mathsf{OPT}(G)$ to denote the optimal value of a graph $G$ in the graph problem we are concerned with.
We simply write $\mathsf{OPT}$ when $G$ is clear from the context.
For an edge $e \in E$, we denote by $G-e$ the graph obtained by removing $e$ from $G$.
Similarly, for a subset of edges $F \subseteq E$, we denote by $G-F$ the graph obtained by removing every edge in $F$ from $G$.
For a subset of edges $F \subseteq E$, let $V(F)$ denote the set of vertices incident to an edge in $F$.
For a positive integer $k \le |E|$, we use the notation $\binom{E}{k}$ to denote the set of all subsets of $E$ of cardinality $k$.
For a subset of vertices $S$, let $G[S]$ be the subgraph of $G$ induced by $S$.
We denote by $\mathbb{R}_+$ the set of non-negative real numbers.
For vectors $\bm{x}, \bm{y} \in \mathbb{R}^n$, we use $\inprod{\bm{x},\bm{y}}$ to denote the inner product of $\bm{x}$ and $\bm{y}$.


\subsection{Exponential Mechanism}

The \emph{exponential mechanism}~\cite{McSherry:2007dh} is an algorithm that, given a vector $\bm{x} \in \mathbb{R}^n$ and a real number $\eta > 0$, returns an index $i \in [n]$ with probability proportional to $e^{-\eta \bm{x}(i)}$.
Just as the exponential mechanism is useful to design differentially private algorithms, it is also useful to design stable-on-average algorithms.
Lemma~\ref{lem:exponential-mechanism} formalizes this statement.
\begin{lemma}\label{lem:exponential-mechanism}
	Let $\eta > 0$ and let $\mathcal{A}$ be the algorithm that, given a vector $\bm{x} \in \mathbb{R}^n$, applies the exponential mechanism to $\bm{x}$ and $\eta$.
	Then for any $t>0$, we have
	\[
	\Pr_{i \sim \mathcal{A}(\bm{x})}\left[\bm{x}(i) \geq \mathsf{OPT} + \frac{\log n}{\eta} + \frac{t}{\eta}\right] \leq e^{-t},
	\]
	where $\mathsf{OPT} = \min_{i \in [n]}\bm{x}(i)$.
	Moreover, for all $\bm{x}' \in \mathbb{R}^n$, we have
	\[
	d_{\mathrm{TV}}(\mathcal{A}(\bm{x}),\mathcal{A}(\bm{x}'))  = O\left(\eta\cdot \|\bm{x}-\bm{x}'\|_1\right).
	\]
\end{lemma}

The proof of Lemma~\ref{lem:exponential-mechanism} is deferred to Appendix~\ref{sec:exponential-appendix}.
By setting $\eta = \log n/ \epsilon$ and replacing $t$ with $t \log n$, we get the following:
\begin{lemma}\label{lem:exponential-mechanism-log}
	Let $\epsilon > 0$. There exists an algorithm $\mathcal{A}_\epsilon$ such that, given a vector $\bm{x} \in \mathbb{R}^n$ outputs $i \in [n]$ such that
	\[
	\Pr_{i \sim \mathcal{A}_\epsilon(\bm{x})}\left[\bm{x}(i) \geq \mathsf{OPT} + \epsilon (1+ t)\right] \leq n^{-t},
	\]
	for any $t > 0$, where $\mathsf{OPT} = \min_{i \in [n]}\bm{x}(i)$. Moreover, for all $\bm{x}' \in \mathbb{R}^n$, we have
	\[
	d_{\mathrm{TV}}(\mathcal{A}_\epsilon(\bm{x}),\mathcal{A}_\epsilon(\bm{x}')) = O\left(\|\bm{x}-\bm{x}'\|_1 \cdot \frac{\log n}{\epsilon}\right).
	\]
\end{lemma}

\section{Warm Up: Minimum Spanning Forest}\label{sec:spanning-tree}
To get intution about average sensitivity of algorithms, we start with the \emph{minimum spanning forest problem}.
In this problem, we are given a weighted graph $G=(V,E,w)$, where $w: E \to \mathbb{R}$ is a weight function on edges, and we want to find a forest of the minimum total weight including all the vertices.

Recall that Kruskal's algorithm~\cite{Kruskal:1956} works as follows: Iterate over edges in the order of increasing weights, where we break ties arbitrarily.
At each iteration, add the current edge to the solution if it does not form a cycle with the edges already added.
The following theorem states that this simple and deterministic algorithm is stable-on-average.
\begin{theorem}
  The average sensitivity of Kruskal's algorithm is $O(n/m)$.
\end{theorem}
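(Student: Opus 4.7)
The plan is to exploit the structural properties of the minimum spanning forest (MSF) rather than track Kruskal's execution step by step. First I would fix a consistent tie-breaking rule (for instance, lexicographic ordering on the edges), so that Kruskal's algorithm returns a unique, well-defined MSF $T = \mathcal{A}(G)$; with such a rule one can pretend all weights are distinct, and the cut and cycle properties of the MSF determine the output independently of the order in which equal-weight edges are scanned.

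With this in place, the proof reduces to a simple case analysis on the edge $e$ sampled for removal. If $e \notin T$, then $T$ is already a spanning forest of $G-e$ of the same total weight as before, and by uniqueness $\mathcal{A}(G-e) = T$, so $\ham(\mathcal{A}(G),\mathcal{A}(G-e)) = 0$. If instead $e \in T$, then deleting $e$ from $T$ splits one tree of $T$ into two components $T_1, T_2$; by the cut property, $\mathcal{A}(G-e)$ is obtained from $T$ by either (a) replacing $e$ with the lightest edge of $G-e$ crossing $(V(T_1),V(T_2))$, if such an edge exists, or (b) just deleting $e$, if $e$ was a bridge of $G$. In either case the Hamming distance is at most $2$.

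Putting these together gives
\[
\E_{e \sim E}\bigl[\ham(\mathcal{A}(G),\mathcal{A}(G-e))\bigr] \;\le\; \Pr_{e \sim E}[e \in T] \cdot 2 \;\le\; \frac{2|T|}{|E|} \;\le\; \frac{2(n-1)}{m} \;=\; O(n/m),
\]
which is the desired bound.

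The only genuine subtlety, and the point I would spell out carefully, is the uniqueness/consistency of the output across $G$ and $G-e$ under ties: one has to argue that the tie-breaking rule used for $G$ induces a compatible rule on $G-e$ so that the cut property really pins down $\mathcal{A}(G-e)$ as $T \setminus \{e\}$ plus (at most) one replacement edge. Everything else is immediate from the standard matroid-exchange structure of spanning forests.
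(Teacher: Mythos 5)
Your proposal is correct and follows essentially the same route as the paper: a case analysis on whether the removed edge $e$ lies in the output forest $T$, showing Hamming distance $0$ when $e \notin T$ and at most $2$ when $e \in T$ (replacement of $e$ by the unique next edge across the induced cut, or plain deletion if $e$ is a bridge), and then averaging to get $2|T|/m = O(n/m)$. The only cosmetic difference is that you justify the ``at most one replacement edge'' claim via MSF uniqueness and the cut property under a consistent tie-break, whereas the paper states it directly by tracing Kruskal's execution and identifying $e'$ as the first edge the algorithm considers between $V(T_1)$ and $V(T_2)$; both are the same fact, and both, as you correctly flag, implicitly require the tie-breaking order on edges to be independent of the graph (e.g.\ a fixed order on vertex pairs) so that the orders used for $G$ and $G-e$ agree.
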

\begin{proof}
  Let $G=(V,E)$ be the input graph and $T$ be the spanning forest obtained by running Kruskal's algorithm on $G$.
  We consider how the output changes when we remove an edge $e \in E$ from $G$.

  If the edge $e$ does not belong to $T$, clearly the output of Kruskal's algorithm on $G-e$ is also $T$.

  Suppose that the edge $e$ belongs to $T$.
  Let $T_1$ and $T_2$ be the two trees rooted at the endpoints of $e$ obtained by removing $e$ from $T$.
  If $G-e$ is not connected, that is, $e$ is a bridge in $G$, then Kruskal's algorithm outputs $T_1 \cup T_2$ on $G-e$.
  If $G-e$ is connected, then let $e'$ be the first edge considered by Kruskal's algorithm among all the edges connecting $G[V(T_1)]$ and $G[V(T_2)]$, where $V(T_i)$ is the vertex set of $T_i$ for $i \in [2]$.
  Then, Kruskal's algorithm outputs $T_1 \cup T_2 \cup \{e'\}$ on $G-e$.
  It follows that the Hamming distance between $T$ and the output of the algorithm on $G-e$ is at most $2$.

  Therefore, the average sensitivity of Kruskal's algorithm is at most
  \[
    \frac{m-|T|}{m}\cdot 0 + \frac{|T|}{m}\cdot 2 = O\left(\frac{n}{m}\right). \qedhere
  \]
\end{proof}
Indeed, it is not hard to show a matching lower bound.
\begin{theorem}
  The average sensitivity of a (possibly randomized) algorithm for the minimum spanning forest problem is $\Omega(n/m)$.
\end{theorem}
\begin{proof}
  Let $\mathcal{A}$ be an algorithm for the minimum spanning forest problem, and let $G=(V,E)$ be a connected graph with $n$ vertices and $m$ edges.
  For each $e \in E$, let $\mu_e$ be the probability distribution over $\mathcal{F}(G) \times \mathcal{F}(G-e)$ such that $d_{\mathrm{EM}}(\mathcal{A}(G),\mathcal{A}(G-e)) = \E_{(F,F_e) \sim \mu_e}|F \triangle F_e|$, where $\mathcal{F}(G)$ is the set of spanning forests of $G$.
  Note that the marginal distribution of $\mu_e$ on the first coordinate is identical for all $e \in E$.
  Let $\mu_{e,F}$ be the distribution over $\mathcal{F}(G-e)$ defined as (the second coordinate of) the distribution $\mu_e$ conditioned on the first coordinate being $F$.
  Then, we have
  \begin{align*}
    & \E_{e \sim E}\left[d_{\mathrm{EM}}\bigl(\mathcal{A}(G),\mathcal{A}(G-e)\bigr)\right]
    = \E_{e \sim E}\E_{(F,F_e) \sim \mu_e}|F \triangle F_e|
    = \E_{F \sim \mathcal{F}(G)} \E_{e \sim E} \E_{F_e \sim \mu_{e,F}}|F \triangle F_e| \\
    & = \E_{F \sim \mathcal{F}(G)} \left[ \frac{1}{m}\sum_{e \in E}  \E_{F_e \sim \mu_{e,F}}|F \triangle F_e| \right]
    \geq \E_{F \sim \mathcal{F}(G)} \left[ \frac{1}{m}\sum_{e \in F}  \E_{F_e \sim \mu_{e,F}}|F \triangle F_e| \right]
    \geq \E_{F \sim \mathcal{F}(G)} \left[ \frac{1}{m}\sum_{e \in F}  \E_{F_e \sim \mu_{e,F}}1 \right]\\
    & = \frac{n-1}{m} = \Omega\left(\frac{n}{m}\right),
  \end{align*}
  where in the second inequality we used the fact that $e \in F$ and $e \not \in F_e$.
\end{proof}

In Appendix~\ref{sec:prim}, we show that Prim's algorithm, another classical algorithm for the minimum spanning forest problem, has average sensitivity $\Omega(m)$ for a certain natural tie breaking rule.
We mention that this lower bound holds even for unweighted graphs. 

\section{Global Minimum Cut}\label{sec:minimum-cut}
For a graph $G = (V,E)$ and a vertex set $S \subseteq V$, we define $\mathsf{cost}(G,S)$ to be the number of edges in $E$ that cross the cut $(S, V\setminus S)$.
Then in the \emph{global minimum cut problem}, given a graph $G=(V,E)$, we want to compute a vertex set $\emptyset \subsetneq S \subsetneq V$ that minimizes $\mathsf{cost}(G,S)$.
In this section, we discuss upper and lower bounds on the average sensitivity for the global minimum cut problem.

\subsection{Upper bound}\label{subsec:minimum-cut-upper-bound}

In this section, we show the following.
\begin{theorem}\label{thm:min-cut}
  For $\eps > 0$, there exists a polynomial time algorithm for the global minimum cut problem with approximation ratio $2+\eps$ and average sensitivity $n^{O(1/\eps\mathsf{OPT})}$.
\end{theorem}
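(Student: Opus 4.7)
The plan is to adapt the exponential mechanism of Gupta et al.~\cite{GLMRT10}. Fix $\lambda = \Theta(\log n/(\eps \mathsf{OPT}))$ with a sufficiently large hidden constant. Using Karger's theorem~\cite{K93}, enumerate in polynomial time the collection $\mathcal{C}_G$ of cuts $S$ with $\mathsf{cost}(G,S) \le (2+\eps)\mathsf{OPT}$; Karger's bound gives $|\mathcal{C}_G| \le n^{O(2+\eps)} = n^{O(1)}$. The algorithm outputs $S \in \mathcal{C}_G$ with probability proportional to $\exp(-\lambda \cdot \mathsf{cost}(G,S))$; call this distribution $D_G$. Since the support is contained in $\mathcal{C}_G$, the $(2+\eps)$-approximation guarantee is immediate.

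For the sensitivity, I plan to factor the analysis through the (computationally intractable) Gibbs distribution $\mu_G$ on \emph{all} nontrivial cuts of $V$, defined by $\mu_G(S) \propto \exp(-\lambda \cdot \mathsf{cost}(G,S))$. The triangle inequality gives
\[
\E_e[\mathrm{EMD}(D_G, D_{G-e})] \le \E_e\bigl[\mathrm{EMD}(D_G, \mu_G) + \mathrm{EMD}(\mu_G, \mu_{G-e}) + \mathrm{EMD}(\mu_{G-e}, D_{G-e})\bigr].
\]
The outer two terms will be $o(1)$: Karger's counting bound $|\{S : \mathsf{cost}(G,S)=k\}| \le n^{2k/\mathsf{OPT}}$ together with the choice of $\lambda$ makes the Gibbs mass outside $\mathcal{C}_G$ at most $n^{-\Omega(1)}$, after which the crude bound $\mathrm{EMD} \le n \cdot d_{TV}$ suffices.

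The middle term is the key quantity. Using $\mathsf{cost}(G-e,S) = \mathsf{cost}(G,S) - \mathbf{1}[e \in \delta(S)]$, a direct computation gives $Z_{G-e}/Z_G = 1 + p_e(e^\lambda - 1)$ for $p_e = \Pr_{S \sim \mu_G}[e \in \delta(S)]$, and hence $\mu_{G-e}(S)/\mu_G(S) = e^{\lambda \mathbf{1}[e \in \delta(S)]}/(1 + p_e(e^\lambda-1))$. In particular, the conditional distributions of $\mu_G$ and $\mu_{G-e}$ on each of the events $\{e \in \delta(S)\}$ and $\{e \notin \delta(S)\}$ agree up to normalization, so only the probability mass that shifts between these two classes needs to be transported. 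The size of this residual is $d_{TV}(\mu_G, \mu_{G-e}) = p_e(1-p_e)(e^\lambda-1)/(1+p_e(e^\lambda-1)) \le p_e(e^\lambda-1)$, and averaging over $e$ uses $\E_e[p_e] = \E_{S \sim \mu_G}[\mathsf{cost}(G,S)]/m = O(\mathsf{OPT}/m)$, since $\mu_G$ concentrates on cuts of cost $O(\mathsf{OPT})$ by Karger's bound. With $\lambda = \Theta(\log n/(\eps\mathsf{OPT}))$, the residual mass is of order $n^{O(1/\eps\mathsf{OPT})} \cdot \mathsf{OPT}/m$.

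The main obstacle is controlling the Hamming cost per unit of transported mass, since a crude diameter bound of $n$ would blow up the final sensitivity by a factor of $n$. To obtain the claimed $n^{O(1/\eps\mathsf{OPT})}$ rate, I will exploit that for every ``source'' cut $S$ with $e=(u,v) \notin \delta(S)$, the cut $S \triangle \{u\}$ lies in the ``target'' family $\{S' : e \in \delta(S')\}$ at Hamming distance exactly one. The subtlety is that the single-vertex-flip map is not measure-preserving—$\mathsf{cost}(G, S \triangle \{u\}) - \mathsf{cost}(G, S)$ depends on the degree of $u$ within $S$—so the plan is to symmetrize over the two endpoints $u,v$ (or introduce a short chain of flips) to design a coupling that transports essentially all residual mass over Hamming distance $O(1)$ while matching the target conditional distribution. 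Completing this coupling argument, and verifying the $n^{O(1/\eps\mathsf{OPT})}$ rate falls out after the trivial $o(1)$ losses from truncation, is the core technical work of the proof.
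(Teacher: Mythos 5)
Your proposal follows the paper's own route quite closely up to the final step: the factorization through the intractable Gibbs distribution $\mu_G$, the truncation analysis of the outer terms via Karger's counting bound, and the identification of $\E_e[p_e]\cdot(e^\lambda-1)$ (equivalently, $\E_e[Z_{G-e}/Z_G - 1]$) as the residual probability mass are all exactly right and match the paper. However, the coupling argument that you identify as ``the core technical work of the proof'' is a red herring, and the ``main obstacle'' you flag does not arise. You have already observed that the residual mass is of order $n^{O(1/\eps\mathsf{OPT})}\cdot \mathsf{OPT}/m$. Multiplying this by the crude Hamming diameter bound $n$ gives a transport cost of
\[
  n \cdot n^{O(1/\eps\mathsf{OPT})}\cdot \frac{\mathsf{OPT}}{m} \;\leq\; 2\, n^{O(1/\eps\mathsf{OPT})},
\]
since $\mathsf{OPT} \le 2m/n$ always (the minimum degree upper-bounds the min cut, and the average degree is $2m/n$). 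The factor of $n$ you were worried about is exactly cancelled by the $\mathsf{OPT}/m$ factor you had already derived, so the naive per-unit cost already achieves the claimed rate. This is precisely what the paper does: it transports $\max\{0, p'_G(S)-p'_{G-e}(S)\}$ mass over distance at most $n$, obtains a per-edge cost $n(Z'_e/Z' - 1)$, averages over $e$ to get $\frac{n(e^\alpha - 1)}{m}\,\E_{S\sim\mu_G}[\cost(G,S)]$, and then plugs in $\E[\cost] = O(\mathsf{OPT})$ and $\mathsf{OPT} \leq 2m/n$. You should simply drop the symmetrized-vertex-flip coupling idea entirely; there is no need for any Hamming-distance control finer than the trivial diameter bound, and your proof is complete once you make this observation.
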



Let $\mathsf{OPT}$ be the minimum size of a cut in $G$.
Our algorithm enumerates cuts of small size and then output a vertex set $S$ with probability $\exp(-\alpha \cdot \mathsf{cost}(G,S))$ for a suitable $\alpha$.
See Algorithm~\ref{alg:mincut-stable-algorithm} for details.

\begin{algorithm}
  \caption{\textsc{Stable Algorithm for Global Minimum Cut}}\label{alg:mincut-stable-algorithm}
  \Input{undirected graph $G = (V,E)$, $\eps > 0$}
    Compute the value $\mathsf{OPT}$\;
    Let $\alpha \gets \frac{(2+1/\eps)\log n}{\mathsf{OPT}}$ denote a parameter\;
    Enumerate all cuts of size at most $(2+7\eps)\mathsf{OPT}+2\eps$\;
    Sample a vertex set $S$ (from among the cuts enumerated) with probability proportional to $\exp(-\alpha \cdot \cost(G, S))$\;
    \Return $S$.
\end{algorithm}

The approximation ratio of the Algorithm~\ref{alg:mincut-stable-algorithm} is $2+9\eps$: It clearly holds when $\mathsf{OPT} \geq 1$, and it also holds when $\mathsf{OPT} = 0$ because we only output a cut of size zero (for $\eps < 1/2$).
The following theorem due to Karger~\cite{K93} directly implies that it runs in time polynomial in the input size for any constant $\eps>0$.
\begin{theorem}[\cite{K93}]\label{thm:kargers-mincut-theorem}
  Given a graph $G$ on $n$ vertices with the minimum cut size $c$ and a parameter $\alpha \geq 1$, the number of cuts of size at most $\alpha \cdot c$ is at most $n^{2\alpha}$ and can be enumerated in time polynomial (in $n$) per cut.
\end{theorem}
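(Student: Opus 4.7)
The plan is to prove both parts of Karger's theorem via his random edge contraction algorithm. One step of the algorithm picks an edge of the current multigraph uniformly at random and contracts it (merging the two endpoints, retaining parallel edges, deleting self-loops); iterating until only $r := \lceil 2\alpha \rceil$ super-vertices remain and then choosing a uniformly random non-trivial bipartition of those super-vertices produces a random cut of $G$.

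For the counting claim, the key invariant is that every cut of the contracted multigraph pulls back to a cut of $G$, so the minimum cut size never drops below $c$; hence, whenever $i$ super-vertices remain, every super-vertex has degree at least $c$, giving at least $ic/2$ edges. Fix any specific cut $C' \subseteq E$ with $|C'| \leq \alpha c$. As long as no edge of $C'$ has been contracted yet, the multigraph still contains exactly $|C'|$ edges of $C'$, so the probability of contracting an edge of $C'$ when $i$ super-vertices remain is at most $\alpha c / (ic/2) = 2\alpha/i$. Multiplying survival probabilities and telescoping,
\[
\Pr\bigl[C' \text{ survives to } r \text{ super-vertices}\bigr] \;\geq\; \prod_{i=r+1}^{n}\Bigl(1 - \frac{2\alpha}{i}\Bigr) \;\geq\; \binom{n}{2\alpha}^{-1}.
\]
Conditioned on survival, the final uniformly random bipartition recovers $C'$ with probability at least $2^{-(r-1)} \geq 2^{-2\alpha}$. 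Since different cuts $C'$ correspond to disjoint output events for a single execution, summing gives
\[
\#\{C' : |C'| \leq \alpha c\} \;\leq\; \binom{n}{2\alpha}\cdot 2^{2\alpha} \;\leq\; n^{2\alpha},
\]
where the last step uses $2^{2\alpha}/(2\alpha)!\leq 1$ for $2\alpha\geq 4$ and the handful of small values of $\alpha$ is handled directly (e.g., the classical $\binom{n}{2}$ bound for $\alpha=1$).

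For enumeration in polynomial time per cut, I would invoke the recursive Karger-Stein scheme: contract down to $n/\sqrt{2}$ super-vertices, recurse independently on two copies of the resulting multigraph, and return the smaller cut found. A standard inductive analysis shows that any fixed near-min cut is output by one invocation with probability $\Omega(1/\mathrm{polylog}\,n)$, so running the scheme $\widetilde{O}(n^{2\alpha})$ times and collecting distinct outputs enumerates every cut of size $\leq \alpha c$ with high probability, in total time $\widetilde{O}(n^{2\alpha}\cdot \mathrm{poly}(n))$. Dividing by the number of such cuts yields polynomial time per cut.

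The main technical care-point is the clean constant $n^{2\alpha}$: the telescoping product is routine, but absorbing the $2^{2\alpha}$ factor coming from the final bipartition into the binomial is exactly where the tight exponent hinges, and small values of $\alpha$ must be treated separately. The enumeration half is then essentially a black-box appeal to the Karger-Stein recursion; the only nontrivial check is that the success probability per invocation stays $\Omega(1/\mathrm{polylog}\,n)$ uniformly for all near-min cuts, which follows from the same surviving-cut analysis as in Step 2 applied at each level of the recursion.
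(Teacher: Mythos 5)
The paper does not prove this statement at all --- it is quoted verbatim as a known result of Karger \cite{K93} and used as a black box in Lemma~\ref{lem:min-cut-sensitivity} --- so there is no in-paper proof to compare against; what you have written is a reconstruction of Karger's own argument. Your reconstruction is the standard one and is sound in outline: the degree-$\geq c$ invariant for super-vertices, the $2\alpha/i$ bound on the per-step failure probability, the telescoping to $\binom{n}{2\alpha}^{-1}$, and the disjointness of the output events across distinct cuts are all correct, and they do yield that the number of $\alpha$-minimal cuts is at most $\binom{n}{2\alpha}\cdot 2^{\lceil 2\alpha\rceil - 1}$.

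Two points deserve more care than you give them. First, your route to the clean constant in $n^{2\alpha}$ does not close: since $\alpha\geq 1$ is a real parameter, the inequality $2^{2\alpha}\leq \Gamma(2\alpha+1)$ that you need fails on an entire interval of $\alpha$ near $1$ (e.g.\ $2\alpha=3$ gives $8>6$), so "the handful of small values of $\alpha$ is handled directly" is not a finite case check. Karger's own proof of the exact $n^{2\alpha}$ bound replaces the final uniform bipartition with a more careful accounting; with your argument as written you only get $O(n^{2\alpha})$ uniformly in $\alpha$. This is harmless for the paper --- in Lemma~\ref{lem:min-cut-sensitivity} the bound $n_t\leq n^{2+2t/\mathsf{OPT}}$ enters only inside expressions that end in $O(\cdot)$ --- but it is a genuine gap relative to the constant-free statement of the theorem. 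Second, your enumeration procedure is Monte Carlo: running Karger--Stein $\widetilde{O}(n^{2\alpha})$ times enumerates all $\alpha$-minimal cuts only with high probability, whereas the theorem (and its use in Algorithm~\ref{alg:mincut-stable-algorithm}, which must assign probability mass to \emph{every} enumerated cut) is naturally read as a deterministic guarantee. A fully correct proof would either derandomize or cite a deterministic enumeration method; as a reconstruction of the cited result, though, your sketch captures the right mechanism.
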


\noindent We now show that Algorithm~\ref{alg:mincut-stable-algorithm} is stable-on-average.
\begin{lemma}\label{lem:min-cut-sensitivity}
  The average sensitivity of Algorithm~\ref{alg:mincut-stable-algorithm} is at most
  \[
    \beta(G) =\frac{n}{m} \cdot n^{(2+1/\eps)/\mathsf{OPT}} \cdot \left(\left(2+7\eps\right)\mathsf{OPT} + 2\eps \right) + o(1).
  \]
\end{lemma}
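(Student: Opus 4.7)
My plan is to analyze the algorithm's output distribution $p_G$ by relating it to the untruncated Gibbs distribution $q_G$ over all proper nonempty vertex subsets defined by $q_G(S) \propto \exp(-\alpha\cdot\cost(G,S))$; the algorithm samples from $p_G$, which is precisely $q_G$ conditioned on $\cost(G,S)\leq C$, where $C := (2+7\eps)\mathsf{OPT}+2\eps$. The strategy is to bound the average sensitivity of $q_G$ directly (the main calculation), then argue that the truncation contributes only $o(1)$ via Karger's cut-counting bound, and finally combine via the triangle inequality.

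For the sensitivity of $q_G$, write $\mathbf{1}[e\in\partial S]$ for the indicator that edge $e$ crosses the cut $(S,V\setminus S)$. Using $\cost(G-e,S)=\cost(G,S)-\mathbf{1}[e\in\partial S]$, a short computation gives $Z_{G-e}/Z_G = 1+(e^\alpha-1)\,p_e$ with $p_e := \Pr_{S\sim q_G}[e\in\partial S]$, from which
\[
  d_{\mathrm{TV}}(q_G,q_{G-e}) = \frac{(e^\alpha-1)\,p_e(1-p_e)}{1+(e^\alpha-1)\,p_e} \leq (e^\alpha-1)\,p_e.
\]
Converting TV to EM by the Hamming-weight bound $n$ for vertex subsets and averaging over $e\in E$ yields
\[
  \E_{e\sim E}\!\bigl[d_{\mathrm{EM}}(q_G,q_{G-e})\bigr] \;\leq\; \frac{n(e^\alpha-1)}{m}\,\E_{S\sim q_G}[|\partial S|],
\]
and substituting $e^\alpha = n^{(2+1/\eps)/\mathsf{OPT}}$ matches the leading factors in the claimed bound as soon as $\E_{S\sim q_G}[|\partial S|]\leq C + o(1)$.

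That expectation bound, together with the closeness of $p_G$ to $q_G$, is controlled by Theorem~\ref{thm:kargers-mincut-theorem}. Since $Z_G \geq \exp(-\alpha\mathsf{OPT})$ from any single minimum cut, and the number of cuts of size at most $k\,\mathsf{OPT}$ is at most $n^{2k}$, a geometric sum over $k > C/\mathsf{OPT}$ combined with $\alpha\mathsf{OPT}=(2+1/\eps)\log n$ shows that the $q_G$-mass of the tail $\{S:\cost(G,S)>C\}$ is $n^{-\Omega(1)}$. This gives both $\E_{S\sim q_G}[|\partial S|]\leq C + o(1)$ and $d_{\mathrm{EM}}(p_G,q_G)\leq n\cdot q_G(\text{tail}) = o(1)$; the same estimate applies to the pair $(p_{G-e},q_{G-e})$ uniformly in $e$.

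The triangle inequality
\[
  \E_{e}\!\bigl[d_{\mathrm{EM}}(p_G,p_{G-e})\bigr] \leq d_{\mathrm{EM}}(p_G,q_G) + \E_{e}\!\bigl[d_{\mathrm{EM}}(q_G,q_{G-e})\bigr] + \E_{e}\!\bigl[d_{\mathrm{EM}}(q_{G-e},p_{G-e})\bigr]
\]
then combines the two previous steps, with the outer terms absorbed into $o(1)$. The main technical hurdle is the Karger-based tail estimate: one must verify that the gap between $C$ and $\mathsf{OPT}$ is wide enough that the tail mass remains $o(1)$ even after the inflation by the $n\cdot e^\alpha$ factor coming from the sensitivity bound. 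A secondary subtlety is that $\mathsf{OPT}(G-e)$ may differ from $\mathsf{OPT}(G)$ by one, shifting the parameters $\alpha$ and $C$ slightly on $G-e$; I expect the same Gibbs analysis to apply to $p_{G-e}$ with the shifted parameters and the resulting discrepancy to be absorbed into the $o(1)$ term.
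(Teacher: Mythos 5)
Your proposal follows the same route as the paper's proof: introduce the untruncated Gibbs distribution as an intermediary, bound its sensitivity through the partition-function ratio (your $Z_{G-e}/Z_G = 1 + (e^\alpha-1)p_e$ identity gives the same leading term $n(e^\alpha-1)\E_{S\sim q_G}[|\partial S|]/m$ as the paper's direct transport plan), control the truncation error via Karger's cut-counting bound, and assemble with the triangle inequality. The two subtleties you flag at the end — verifying the Karger tail estimate survives the $n\cdot e^\alpha$ inflation, and the shift of $\alpha$ when $\mathsf{OPT}(G-e)$ drops by one — are real, but the first is exactly the computation the paper carries out and the second is one the paper itself quietly sidesteps, so your proposal is, if anything, more careful than the published argument.
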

As we have $\mathsf{OPT} \leq 2m/n$, the average sensitivity can be bounded by $n^{O(1/\eps\mathsf{OPT})}$, and Theorem~\ref{thm:min-cut} follows by replacing $\eps$ with $\eps/9$.
\begin{proof}
  If $\mathsf{OPT} = 0$, then the claim trivially holds because the right hand size is infinity.
  Hence in what follows, we assume $\mathsf{OPT} \geq 1$.

  Let $\cA$ denote Algorithm~\ref{alg:mincut-stable-algorithm}.
  Consider an (inefficient) algorithm $\cA'$ that on input $G$, outputs a cut $S \subseteq V$ (from among all the cuts in $G$) with probability proportional to $\exp(-\alpha \cdot \cost(G, S))$.
  For a graph $G = (V,E)$, let $\cA(G)$ and $\cA'(G)$ denote the output distribution of algorithms $\cA$ and $\cA'$ on input $G$, respectively.
  For $G = (V,E)$ and $S \subseteq V$, let $p_G(S)$ and $p'_G(S)$ be shorthands for the probabilities that $S$ is output on input $G$ by algorithms $\cA$ and $\cA'$, respectively.

  We first bound the earth mover's distance between $\cA(G)$ and $\cA'(G)$ for a graph $G= (V,E)$.
  To this end, we define
  \[
    Z = \mathop{\sum_{S \subseteq V: \mathsf{cost}(G,S) \le \mathsf{OPT}+ b} \exp(-\alpha\cdot\cost(G,S))},
    \text{ and }
    Z' = \sum_{S \subseteq V} \exp(-\alpha\cdot\cost(G,S))
  \]
  where $b = (1+7\eps)\mathsf{OPT} + 2\eps$.
  Note that $Z \leq Z'$ and the quantity $\frac{Z' - Z}{Z'}$ is the total probability mass assigned by algorithm $\cA'$ to cuts $S \subseteq V$ such that $\mathsf{cost}(G,S) > \mathsf{OPT} + b$.

  Now, we start with $\cA'(G)$.
  For each $S \subseteq V$ such that $\mathsf{cost}(G,S) \le \mathsf{OPT}+ b$, keep at least $\frac{Z}{Z'} \cdot p'_G(S)$ mass with a cost of $0$ and move a mass of at most $p'_G(S) - \frac{Z}{Z'} \cdot p'_G(S)$ at a cost of $n \cdot (p'_G(S) - \frac{Z}{Z'} \cdot p'_G(S))$.
  For each $S \subseteq V$ such that $\mathsf{cost}(G,S) > \mathsf{OPT}+ b$, we move a mass of $p'_G(S)$ at a cost of $n \cdot p'_G(S)$.
  The total cost of moving masses is then equal to:
  \begin{align*}
  	d_{\mathrm{EM}}\left(\cA(G),\cA'(G)\right) &\le n\cdot \sum_{S \subseteq V: \mathsf{cost}(G,S) \le \mathsf{OPT}+ b} p'_G(S)\left(1 - \frac{Z}{Z'}\right) + n \cdot \sum_{S \subseteq V: \mathsf{cost}(G,S) > \mathsf{OPT}+ b} p'_G(S) \\
    & = \frac{n(Z'-Z)}{Z'} \left(\sum_{S \subseteq V: \mathsf{cost}(G,S) \le \mathsf{OPT}+ b} p'_G(S) + 1\right) \\
    & \leq \frac{2n(Z'-Z)}{Z'}.
  \end{align*}

  Let $n_t$ stand for the number of cuts of cost at most $\mathsf{OPT} + t$ in $G$.
  By Karger's theorem (Theorem~\ref{thm:kargers-mincut-theorem}), we have that $n_t \le n^{2 + 2t/\mathsf{OPT}}$.
  Then, we have
  \begin{align*}
  	\frac{Z'-Z}{Z'} & \leq \sum_{t > b} \exp(-\alpha t)\cdot (n_t - n_{t-1}) \le (\exp(\alpha) - 1)\cdot \sum_{t > b} \exp(-\alpha t) n_t \\
  	& \le (\exp(\alpha) - 1)n^2 \cdot \sum_{t > b} n^{2t/\mathsf{OPT}} \cdot \exp(-\alpha t)\\
  	&\le (\exp(\alpha) - 1)n^2 \cdot \sum_{t > b} n^{-t/\eps \mathsf{OPT}}
  	\le (\exp(\alpha) - 1)n^2 \cdot \frac{n^{-(b+1)/\eps \mathsf{OPT}}}{1 - n^{-1/\eps \mathsf{OPT}}}\\
  	&= \left(n^{(2+1/\eps)/\mathsf{OPT}}-1\right) \cdot \left(1+\frac{1}{n^{1/\eps\mathsf{OPT}}-1}\right) \cdot \frac{n^2}{n^{(b+1)/\eps\mathsf{OPT}}} \\
    & \leq  n^{(2+1/\eps)/\mathsf{OPT}} \cdot \left(1+\frac{\eps n}{ \log n}\right) \cdot \frac{n^2}{n^{(b+1)/\eps \mathsf{OPT}}} \\
  	& = O\left(\frac{\eps n^{3+(2 + 1/\eps)/\mathsf{OPT}}}{ n^{(b+1)/\eps \mathsf{OPT}}}\right) = O\left(\frac{\eps}{n^{4+1/\eps}}\right).
  \end{align*}
  The last inequality above follows from our choice of $b$.
  Therefore, the earth mover's distance between $\cA(G)$ and $\cA'(G)$ is $d_{\mathrm{EM}}\bigl(\cA(G),\cA'(G)\bigr) \le O(\frac{\eps}{ n^{3+1/\eps}})$.

  In addition, we can bound the expected size of the cut output by $\cA'$ on $G$ as follows.
  The total probability mass assigned by algorithm $\cA'$ to cuts of size larger than $\mathsf{OPT} + b$ is equal to $\frac{Z' - Z}{Z'} = O\left(\frac{\eps}{n^{4+1/\eps}}\right)$.
  Hence, the expected size of the cut output by $\cA'$ on $G$ is at most $\mathsf{OPT} + b + m\cdot O(\frac{\eps}{ n^{4+1/\eps}}) = (2+7\eps)\mathsf{OPT} + 2\eps + O(\frac{\eps m}{n^{4+1/\eps}})$.

  We now bound the earth mover's distance between $\cA'(G)$ and $\cA'(G-e)$ for an arbitrary edge $e \in E$.
  Let $Z'_e$ denote the quantity $\mathop{\sum_{S \subseteq V} \exp(-\alpha\cdot\cost(G-e,S))}$.
  Since the cost of every cut in $G-e$ is at most the cost of the same cut in $G$, we have that $Z' \le Z'_e$ and therefore,
  $$p'_G(S) = \frac{\exp(-\alpha \cdot \mathsf{cost}(G,S))}{Z'} \le \frac{\exp(\alpha \cdot \mathsf{cost}(G-e,S))}{Z'_e} \cdot \frac{Z'_e}{Z'} = p'_{G-e}(S)\cdot \frac{Z'_e}{Z'}.$$

  We transform $\cA'(G)$ into $\cA'(G-e)$ as follows.
  For each $S \subseteq V$, we leave a probability mass of at most $p'_{G-e}(S)$ at $S$ with zero cost and move a mass of $\max\{0, p'_G(S) - p'_{G-e}(S)\}$ to any other point at a cost of at most $n \cdot \max\{0, p'_G(S) - p'_{G-e}(S)\} \le n \cdot \left(\frac{Z'_e}{Z'} - 1\right)\cdot p'_G(S)$.
  Hence,
  \begin{align*}
  	d_{\mathrm{EM}}\left(\cA'(G), \cA'(G-e)\right) \le n \cdot \left(\frac{Z'_e}{Z'} - 1\right) \cdot \sum_{S \subseteq V} p'_G(S) = n \cdot \left(\frac{Z'_e}{Z'} - 1\right).
  \end{align*}

  By the triangle inequality, the earth mover's distance between $\cA(G)$ and $\cA(G-e)$ can be bounded as
  \begin{align*}
  d_{\mathrm{EM}}\bigl(\cA(G), \cA(G-e)\bigr) &\le d_{\mathrm{EM}}\left(\cA(G), \cA'(G)\right) + d_{\mathrm{EM}}\left(\cA'(G), \cA'(G-e)\right) + d_{\mathrm{EM}}\left(\cA'(G-e), \cA(G-e)\right)\\
   &\le n \cdot \left(\frac{Z'_e}{Z'} - 1\right) + O\left(\frac{2\eps}{n^{2+1/\eps}}\right).
  \end{align*}

  Hence, the average sensitivity of $\cA$ is bounded as:
  \begin{align*}
  	\beta(G) &= \mathop{\E_{e \sim E}} d_{\mathrm{EM}}\bigl(\cA(G), \cA(G-e)\bigr)
  	\le O\left(\frac{2\eps}{n^{3+1/\eps}}\right) + n \cdot \E_{e \sim E} \left(\frac{Z'_e}{Z'} - 1\right)\\
  	&= O\left(\frac{2\eps}{n^{3+1/\eps}}\right) + \frac{n}{mZ'} \sum_{e \in E} (Z'_e - Z')\\
  	&= O\left(\frac{2\eps}{n^{3+1/\eps}}\right) + \frac{n}{mZ'} \sum_{e \in E} \sum_{S \subseteq V: e \text{ crosses }S} \exp(-\alpha \cdot \mathsf{cost}(G-e, S)) - \exp(-\alpha \cdot \mathsf{cost}(G, S))\\
  	&= O\left(\frac{2\eps}{n^{3+1/\eps}}\right) + \frac{n(\exp(\alpha) - 1)}{mZ'} \sum_{e \in E} \sum_{S \subseteq V: e \text{ crosses }S} \exp(-\alpha \cdot \mathsf{cost}(G, S))\\
  	&= O\left(\frac{2\eps}{n^{3+1/\eps}}\right) + \frac{n(\exp(\alpha) - 1)}{m} \sum_{S \subseteq V} \mathsf{cost}(G,S) \cdot \frac{\exp(-\alpha \cdot \mathsf{cost}(G, S))}{Z'}.
  \end{align*}

  The summation in the second term above is equal to the expected size of the cut output by algorithm $\cA'$ on input $G$.
  We argued that it is at most $(2+7\eps)\mathsf{OPT} + 2\eps + O(\frac{\eps m}{n^{4+1/\eps}})$.
  Hence, the average sensitivity of $\cA$ is at most
  \begin{align*}
    & \frac{n}{m} \cdot n^{(2+1/\eps)/\mathsf{OPT}} \cdot \left(\left(2+7\eps\right)\mathsf{OPT} + 2\eps\right) + O\left(\frac{\eps n^{(2+1/\eps)/\mathsf{OPT}} + 2}{n^{3+1/\eps}}\right) \\
    & =
   \frac{n}{m} \cdot n^{(2+1/\eps)/\mathsf{OPT}} \cdot \left(\left(2+7\eps\right)\mathsf{OPT} + 2\eps\right) + o(1)
  \end{align*}
  as $\mathsf{OPT} \geq 1$.
\end{proof}

\subsection{Lower bound}\label{subsec:minimum-cut-lower-bound}
In this section, we show that the average sensitivity of the algorithm given in Section~\ref{subsec:minimum-cut-upper-bound} is almost tight.
Specifically, we show the following.
\begin{theorem}\label{thm:min-cut-lower-bound}
  Any algorithm for the global minimum cut problem with no additive error (and possibly an arbitrary large multiplicative error) has average sensitivity $\Omega(n^{1/\mathsf{OPT}}/\mathsf{OPT}^2)$ if $\mathsf{OPT} = o(\sqrt{n})$.
\end{theorem}
\begin{proof}
  We first show a lower bound for the case $\mathsf{OPT}=1$.
  Let $\mathcal{A}$ be an arbitrary algorithm for the global minimum cut problem with no additive error and let $G=([n+1],E)$ be a path on $n+1$ vertices, where $E = \{(i,i+1) : i \in [n]\}$.
  Note that for any $i \in [n]$, the graph $G-(i,i+1)$ is disconnected and $\mathcal{A}$ must output a vertex set $[i]$ or $[n+1] \setminus [i]$.
  For a vertex set $S \subseteq [n+1]$, let $p_S$ be the probability that $\mathcal{A}$ on $G$ outputs $S$.
  Then, the average sensitivity of $\mathcal{A}$ on $G$ is
  \begin{align}
    \E_{e \sim E}\left[ d_{\mathrm{EM}}(\mathcal{A}(G),\mathcal{A}(G-e))\right] =
   \frac{1}{n} \sum_{i \in [n]} \sum_{S \subseteq [n+1]} p_S \cdot \min\{d_{\mathrm{Ham}}(S,[i]),d_{\mathrm{Ham}}(S,[n+1]  \setminus [i] )\}.   \label{eq:min-cut-lower-bound}
  \end{align}
  Note that if two sets $S,T \subseteq [n+1]$ satisfy $|S| \leq |T| - n/10$ or $|S| \geq |T| + n/10$, then $d_{\mathrm{Ham}}(S,T) \geq n/10$ holds.
  Hence, we have $d_{\mathrm{Ham}}(S,[i]) \geq n/10$ for at least a $4/5$-fraction of $i \in [n]$.
  Similarly, we have $d_{\mathrm{Ham}}(S,[n+1] \setminus [i] ) \geq n/10$ for at least a $4/5$-fraction of $i \in [n]$.
  It follows that we have $\min\{d_{\mathrm{Ham}}(S,[i]),d_{\mathrm{Ham}}(S,[n+1] \setminus [i])\} \geq n/10$ for at least a $3/5$-fraction of $i \in [n]$.
  Then, we have
  \begin{align}
    \eqref{eq:min-cut-lower-bound} \geq
    \frac{3}{5} \cdot \frac{n}{10} \cdot \sum_{S \subseteq [n]}p_S = \frac{3n}{50} = \Omega(n).
    \label{eq:min-cut-lower-bound-2}
  \end{align}

  We now consider the case $t := \mathsf{OPT} \geq 2$.
  Consider a multigraph $G_t = ([n+1],E_t)$, where $E_t$ contains $t$ copies of the edge $(i,i+1)$ for every $i \in [n]$.
  For $k = {(tn)}^{1-1/t}$, the $k$-average sensitivity of $\mathcal{A}$ on $G$ without replacement is
  \begin{align}
  & \E_{\{e_1,\ldots,e_k\} \sim \binom{E_t}{k} }\left[d_{\mathrm{EM}}(\mathcal{A}(G),\mathcal{A}(G-\{e_1,\ldots,e_k\})) \right] \nonumber \\
  & \geq
  \E_{\{e_1,\ldots,e_k\} \sim \binom{E_t}{k}}\left[d_{\mathrm{EM}}(\mathcal{A}(G),\mathcal{A}(G-\{e_1,\ldots,e_k\})) \mid \mathcal{A}(G-\{e_1,\ldots,e_k\}) \text{ has two components}\right] \times \nonumber \\
  & \qquad \Pr_{\{e_1,\ldots,e_k\} \sim \binom{E_t}{k}} \left[\mathcal{A}(G-\{e_1,\ldots,e_k\}\text{ has two components}) \right].
  \label{eq:min-cut-lower-bound-3}
  \end{align}
  The first factor of~\eqref{eq:min-cut-lower-bound-3} is exactly equal to~\eqref{eq:min-cut-lower-bound}, which is $\Omega(n)$ by~\eqref{eq:min-cut-lower-bound-2}.
  Now we bound the second factor.
  For every $i \in [n]$, the probability that we cut all the edges between $i$-th  and $(i+1)$-th vertices is $\binom{k}{t}/\binom{tn}{t}$ from the property of the hypergeometric distribution.
  For every distinct $i,j \in [n]$, the probability that we cut all the edges between $i$-th and $(i+1)$-th vertices and all the edges between $j$-th and $(j+1)$-th vertices is $\binom{k}{2t}/\binom{tn}{2t}$.
  By the inclusion-exclusion principle, the probability that $G-\{e_1,\ldots,e_k\}$ has exactly two components is at least
  \begin{align*}
    & n \frac{\binom{k}{t}}{\binom{tn}{t}} - \binom{n}{2}\frac{\binom{k}{2t}}{\binom{tn}{2t}}
    \geq n \left(1-\frac{t^2}{k}\right){\left(\frac{k}{tn}\right)}^t - \binom{n}{2} \frac{n}{n-4t} {\left(\frac{k }{tn}\right)}^{2t} \tag{By $\left(1-\frac{k^2}{n}\right) \frac{n^k}{k!} \leq \binom{n}{k} \leq \frac{n^k}{k!}$ }\\
    & = n \left(1-\frac{t^2}{k}\right)\frac{1}{tn} - \binom{n}{2} \frac{n}{n-4t} \frac{1}{t^2n^2}
    \geq \frac{3}{4t} - \frac{1}{t^2} \geq \frac{1}{4t} = \Omega\left(\frac{1}{t}\right),
  \end{align*}
  where we used the fact that $2 \leq t = o(\sqrt{n})$.
  Hence, we have $\eqref{eq:min-cut-lower-bound-3} = \Omega(n/t)$.
  By Theorem~\ref{thm:sensitivity-against-deleting-multiple-edges}, the average sensitivity $\beta$ of $\mathcal{A}$ on $G$ must satisfy
  \[
    \beta \cdot {(tn)}^{1-1/t} \geq \Omega\left(\frac{n}{t}\right),
  \]
  which implies $\beta \geq \Omega(n^{1/t}/t^2)$.
\end{proof}

The proof of the following theorem is implicit in the first part of the proof of Theorem~\ref{thm:min-cut-lower-bound}.
\begin{theorem}\label{thm:exact-mincut-lowerbound}
	 Any algorithm that exactly outputs the global minimum cut has average sensitivity $\Omega(n)$.
\end{theorem}

\section{Minimum \texorpdfstring{$s$}{s}-\texorpdfstring{$t$}{t} Cut}\label{sec:s-t-min-cut}

In this section, we design a stable-on-average algorithm for the minimum $s$-$t$ cut problem.
We say that a pair $\{u,v\} \in \binom{V}{2}$ is \emph{cut by $S \subseteq V$} if $u \in S$ and $v \in V \setminus S$, or vice versa. 
The \emph{cut size} of a vertex set $S \subseteq V$ in a graph $G=(V,E)$ is the number of edges $e \in E$ cut by $S$.
In the minimum $s$-$t$ cut problem, given a graph $G = (V,E)$ and two vertices $s,t \in V$, we want to find a \emph{minimum $s$-$t$ cut}, that is, a vertex set $S$ with $s \in S$ and $t \not \in S$ that has the minimum cut size.
We show the following.
\begin{theorem}\label{thm:s-t-min-cut}
  There exists a polynomial time algorithm for the minimum $s$-$t$ cut problem with additive error $O(n^{2/3})$ and average sensitivity $O(n^{2/3})$.
\end{theorem}

In Section~\ref{subsec:min-cut-algorithm}, we describe our LP relaxation for the minimum $s$-$t$ cut problem and introduce a notion of average sensitivity for algorithms that solve the LP.
We then provide a stable-on-average LP solver in Section~\ref{subsec:regularized-LP} and discuss a rounding procedure in Section~\ref{subsec:s-t-min-cut-rounding}.
We prove Theorem~\ref{thm:s-t-min-cut} in Section~\ref{subsec:s-t-min-cut-analysis}.

\subsection{LP Relaxation and Average Sensitivity}\label{subsec:min-cut-algorithm}

Our algorithm is based on an LP relaxation for the minimum $s$-$t$ cut problem.
For each pair of vertices $\{u,v\}\in \binom{V}{2}$, we introduce a variable $\bm{d}(\{u,v\})$, which we regard as a distance between $u$ and $v$.
Roughly speaking, $\bm{d}(\{u,v\})$ is $1$ if $u$ and $v$ are on different sides of an $s$-$t$ cut. $\bm{d}(\{u,v\})$ is $0$ otherwise.
For notational simplicity, we often write $\bm{d}(u,v)$ to denote $\bm{d}(\{u,v\})$.
Intuitively, the distance between $s$ and $t$ should be at least one, and the distance $\bm{d}(\cdot,\cdot)$ should satisfy the triangle inequality.
Our LP relaxation is the following.
\begin{align}
  \begin{array}{lll}
  \text{minimize} & \mathrm{LP}_G(\bm{d}) :=  \displaystyle \sum_{\{u,v\} \in E}\bm{d}(u,v), \\
  \text{subject to} & \bm{d}(s,t) = 1, \\
  & \bm{d}(u,v) \leq \bm{d}(u,w) + \bm{d}(w,v) , & \displaystyle \forall \{u,v,w\} \in \binom{V}{3}\\
  & 0 \leq \bm{d}(u,v) \leq 1. & \displaystyle \forall \{u,v\}\in \binom{V}{2}
  \end{array}\label{eq:LP-min-cut}
\end{align}
It is easy to check that LP~\eqref{eq:LP-min-cut} is indeed a relaxation for the minimum $s$-$t$ cut problem.
Let $S \subseteq V$ be an $s$-$t$ cut.
Then for each $\{u,v\} \in \binom{V}{2}$, we set $\bm{d}(u,v) = 1$ if $\{u,v\}$ is cut by $S$, and set $\bm{d}(u,v) = 0$ otherwise.
It is clear that $\sum_{\{u,v\} \in E}\bm{d}(u,v) = |\{\{u,v\} \in E: \{u,v\} \text{ is cut by }S\}|$ is the cut size of $S$ and that $\bm{d}$ satisfies all the constraints.


Now, we introduce a notion of the average sensitivity of an algorithm for solving~\eqref{eq:LP-min-cut}.
First, for a vertex $s \in V$, we define a norm $\|\cdot\|_s$ as $\|\bm{d}\|_s := \sqrt{\sum_{v \in V}{\bm{d}(s,v)}^2}$.
Then, we define the \emph{average sensitivity} of a (deterministic) algorithm $\mathcal{A}$ for solving LP~\eqref{eq:LP-min-cut} as
\[
  \E_{e \sim E}\|\mathcal{A}(G)-\mathcal{A}(G-e)\|_s.
\]
We use the norm $\|\cdot\|_s$ instead of the standard $\ell_2$ norm because our rounding procedure uses only $\bm{d}(s,v)\;(v\in V)$ (see Section~\ref{subsec:s-t-min-cut-rounding} for the details of the rounding procedure), and the former norm gives a better approximation guarantee than the latter.

\subsection{Stable-on-Average LP Solver}\label{subsec:regularized-LP}
In this section, we give a stable-on-average solver for LP~\eqref{eq:LP-min-cut}.
\begin{theorem}\label{thm:lp}
  For any $\eta > 0$, there exists a polynomial-time algorithm for solving LP~\eqref{eq:LP-min-cut} with average sensitivity $O(1/\sqrt{\eta})$ such that the output $\bm{d} \in {[0,1]}^{\binom{V}{2}}$ satisfies
  \[
    \sum_{e \in E}\bm{d}(e) \leq \mathsf{OPT} + \frac{\eta n}{2},
  \]
  where $\mathsf{OPT}$ is the optimal value of LP~\eqref{eq:LP-min-cut}.
\end{theorem}

Given a parameter $\eta > 0$, our algorithm solves the following regularized LP and then returns the optimal solution.
\begin{align}
  \begin{array}{lll}
  \text{minimize} & \mathrm{LP}_G^\eta(\bm{d}) := \displaystyle \sum_{\{u,v\} \in E}\bm{d}(u,v)  + \frac{\eta}{2}\|\bm{d}\|_s^2 \\
  \text{subject to} & \bm{d}(s,t) = 1, \\
  & \bm{d}(u,v) \leq \bm{d}(u,w) + \bm{d}(w,v) & \displaystyle \forall \{u,v,w\} \in \binom{V}{3}, \\
  & 0 \leq \bm{d}(u,v) \leq 1 & \displaystyle \forall \{u,v\}\in \binom{V}{2},
  \end{array}\label{eq:regularized-LP}
\end{align}
The only difference from LP~\eqref{eq:LP-min-cut} is that we have a regularization term $\eta\|\bm{d}\|_s^2/2$ in the objective function.
Note that $\mathrm{LP}_G^\eta$ is $\eta$-strongly convex with respect to the norm $\|\cdot\|_s$. 
Theorem~\ref{thm:lp} follows from Lemmas~\ref{lem:optimality-regLP} and~\ref{lem:sensitivity-regLP}, given below.
\begin{lemma}\label{lem:optimality-regLP}
  Let $\bm{d}^*,\bm{d} \in {[0,1]}^{\binom{V}{2}}$ be the optimal solutions to LPs~\eqref{eq:LP-min-cut} and~\eqref{eq:regularized-LP}, respectively.
  Then, we have
  \begin{align*}
    \sum_{\{u,v\}\in E}\bm{d}(e) \leq \sum_{\{u,v\}\in E}\bm{d}^*(e) + \frac{\eta n}{2}.
  \end{align*}
\end{lemma}
\begin{proof}
  We have
  \begin{align*}
      & \sum_{e\in E}\bm{d}(e)  - \sum_{e\in E}\bm{d}^*(e)
      = \left(\sum_{e\in E}\bm{d}(e)  + \frac{\eta}{2}\|\bm{d}\|_s^2\right) - \left(\sum_{e \in E}\bm{d}^*(e) + \frac{\eta}{2} \|\bm{d}^*\|_s^2\right) +
      \frac{\eta}{2}\left(\|\bm{d}^*\|_s^2 - \|\bm{d}\|_s^2\right) \\
      & \leq \frac{\eta}{2}\left(\|\bm{d}^*\|_s^2 - \|\bm{d}\|_s^2\right) \tag{by the optimality of $\bm{d}$ for LP~\eqref{eq:regularized-LP}} \\
      & \leq \frac{\eta}{2}\|\bm{d}^*\|_s^2  \\
      & \leq \frac{\eta n}{2}, \tag{by $\bm{d}^* (s,v) \leq 1$ for any $v \in V$}
  \end{align*}
  as desired.
\end{proof}

\begin{lemma}\label{lem:sensitivity-regLP}
  Let $G = (V,E)$ be a graph, $\{u,v\} \in E$ be an edge, and let $\bm{d}$ and $\bm{d}'$ be the optimal solutions to LPs~\eqref{eq:regularized-LP} for $G$ and $G':=G-\{u,v\}$, respectively.
  Then, $\|\bm{d}-\bm{d}'\|_s = O(1/\sqrt{\eta})$.
\end{lemma}
\begin{proof}
  By the $\eta$-strong convexity of $\mathrm{LP}_G^\eta$ with respect to $\|\cdot\|_s$, we have
  \[
      \mathrm{LP}_G^\eta(\bm{d}') - \mathrm{LP}_G^\eta(\bm{d}) \geq \langle\nabla \mathrm{LP}_G^\eta(\bm{d}), \bm{d}' - \bm{d}\rangle +  \eta\|\bm{d}'-\bm{d}\|_s^2
      \geq \eta\|\bm{d}-\bm{d}'\|_s^2,
  \]
  where the second inequality holds because $\bm{d}$ is an optimal solution.
  Similarly for $\mathrm{LP}_{G'}^\eta$, we have
  \[
    \mathrm{LP}_{G'}^\eta(\bm{d}) - \mathrm{LP}_{G'}^\eta(\bm{d}') \geq \eta\norm{\bm{d}-\bm{d}'}_s^2.
  \]
  Therefore, we have
  \begin{align*}
    & 2\eta\norm{\bm{d} - \bm{d}'}_s^2
    \leq \mathrm{LP}_G^\eta(\bm{d}') - \mathrm{LP}_G^\eta(\bm{d}) + \mathrm{LP}_{G'}^\eta(\bm{d}) - \mathrm{LP}_{G'}^\eta(\bm{d}') \\
    & = \sum_{e \in E}\left(\bm{d}'(e)-\bm{d}(e)\right) + \sum_{e \in E\setminus \{u,v\}}\left(\bm{d}(e)-\bm{d}'(e)\right)
    =\bm{d}'(u,v)-\bm{d}(u,v) \leq 1.
  \end{align*}
  Hence, we have $\norm{\bm{d} - \bm{d}'}_s \leq O(1/\sqrt{\eta})$.
\end{proof}

\subsection{Rounding Procedure}\label{subsec:s-t-min-cut-rounding}

\begin{algorithm}[t!]
\caption{Rounding procedure for the minimum $s$-$t$ cut problem}\label{alg:thresh}
 \Procedure{\emph{\Call{Thresh}{$\bm{x}$}}}{
    \Input{$\bm{x} \in {[0,1]}^V$}
    Sample $\tau$ from $ [0,1]$ uniformly at random\;
    \Return the set $\{v \in V : \bm{x}(v) \geq \tau\}$\
  }
\end{algorithm}

Suppose we have obtained a solution $\bm{d}$ to LP~\eqref{eq:LP-min-cut}.
Then, we will round the vector $\bm{z} := {(\bm{d}(s,v))}_{v \in V} \in {[0,1]}^V$ using a thresholding procedure \Call{Thresh}{} (Algorithm~\ref{alg:thresh}), which returns a set of vertices $v \in V$ such that $\bm{d}(s,v)$ is at least a threshold $\tau$ sampled from $[0,1]$ uniformly at random.
We can also interpret \Call{Thresh}{$\bm{z}$} as follows.
Let $v_1,\ldots,v_n$ be the ordering of $V$ such that $\bm{z}(v_i) \geq \bm{z}(v_{i+1})$ for every $i \in [n-1]$.
Then, \Call{Thresh}{$\bm{z}$} outputs the set $\{v_1,\ldots,v_i\}$ with probability $\bm{z}(v_i)-\bm{z}(v_{i+1})$ for $i \in [n+1]$, where we define $\bm{z}(v_{n+1}) = 0$ for a dummy vertex $v_{n+1}$.

First, we analyze the solution quality of \Call{Thresh}{}.
\begin{lemma}\label{lem:thresholding-approximation}
  We have
  \[
    \E[\text{The cut size of } \Call{Thresh}{\bm{z}}] \leq \sum_{e \in E}\bm{d}(e),
  \]
  where $\bm{z} = {(\bm{d}(s,v))}_{v \in V}$.
\end{lemma}
\begin{proof}
  For each edge $e \in E$, the probability that it is cut by $S$ is
  \[
    |\bm{z}(u)- \bm{z}(v)|
    =
    |\bm{d}(s,u)- \bm{d}(s,v)|
    \leq \bm{d}(u,v),
  \]
  where $u,v \in V$ are the endpoints of $e$.
  The claim follows by the linearity of expectations.
\end{proof}

Now, we bound the average sensitivity of \Call{Thresh}{} when only one coordinate differs.
\begin{lemma}\label{lem:thresholding-sensitivity-coordinatewise}
  Let $\bm{x} \in {[0,1]}^V$ and $\bm{x}' \in {[0,1]}^V$ be such that
  \[
    \bm{x}'(u) = \begin{cases}
    \bm{x}(u) + \Delta  & \text{if }u=v, \\
    \bm{x}(u)  & \text{otherwise},
    \end{cases}
  \]
  for some $v \in V$ and $0 \leq \bm{x}(v) \leq 1-\Delta$.
  Then, $d_{\mathrm{EM}}(\Call{Thresh}{\bm{x}},\Call{Thresh}{\bm{x}'}) \leq 2\Delta$.
\end{lemma}
\begin{proof}
  We can assume $\Delta \geq 0$ as otherwise we can switch the roles of $\bm{x}$ and $\bm{x}'$.
  Starting with the vector $\bm{x}_0=\bm{x}$, we iteratively construct $\bm{x}_k \in {[0,1]}^V$ from $\bm{x}_{k-1}$ as
  \[
    \bm{x}_k(u) = \begin{cases}
      \displaystyle \min\left\{\bm{x}'(v), \min_{w \in V: \bm{x}_{k-1}(w) > \bm{x}_{k-1}(v)}\{\bm{x}_{k-1}(w)\}\right\} & \text{if } u = v, \\
      \bm{x}_{k-1}(u) & \text{otherwise}.
    \end{cases}
  \]
  Let $\ell$ be the smallest integer such that $\bm{x}_\ell(v) = \bm{x}'(v)$.
  Note that for every $k \in [\ell]$, there is an ordering $v_1,\ldots,v_n$ of $V$ such that both $\bm{x}_{k-1}(v_i) \geq \bm{x}_{k-1}(v_{i+1})$ and $\bm{x}_k(v_i) \geq \bm{x}_k(v_{i+1})$ hold for every $i \in [n-1]$.
  Note that $\sum_{k \in [\ell]}\|\bm{x}_k-\bm{x}_{k-1}\|_1 = \Delta$.

  Now we show that for each $k \in [\ell]$ we have $d_{\mathrm{EM}}(\Call{Thresh}{\bm{x}_k},\Call{Thresh}{\bm{x}_{k-1}}) \leq 2\|\bm{x}_k-\bm{x}_{k-1}\|_1$.
  Let $v_1,\ldots,v_n$ be an ordering of $V$ with the property mentioned above, and let $S = \{v_1,\ldots,v_{i-1}\}$ and $S' = \{v_1,\ldots,v_i\}$, where $i \in [n]$ is such that $v_i = v$.
  Then, the only difference in the output distributions of $\Call{Thresh}{\bm{x}_k}$ and $\Call{Thresh}{\bm{x}_{k-1}}$ is that the former outputs $S$ with probability $\bm{x}_k(v_{i-1})-\bm{x}_k(v_i)$ and $S'$ with probability $\bm{x}_k(v_i)-\bm{x}_k(v_{i+1})$ whereas the latter outputs $S$ with probability $\bm{x}_{k-1}(v_{i-1})-\bm{x}_{k-1}(v_i)$ and $S'$ with probability $\bm{x}_{k-1}(v_i)-\bm{x}_{k-1}(v_{i+1})$.
  It follows that

  \begin{align*}
    & d_{\mathrm{EM}}(\Call{Thresh}{\bm{x}_k},\Call{Thresh}{\bm{x}_{k-1}})
    \le
    2\bigl( \bm{x}_k(v_i) - \bm{x}_{k-1}(v_i)\bigr) \cdot d_{\mathrm{Ham}}(S,S') \\
    & = 2\bigl( \bm{x}_k(v) - \bm{x}_{k-1}(v)\bigr) \cdot d_{\mathrm{Ham}}(S,S')
    = 2\|\bm{x}_k-\bm{x}_{k-1}\|_1.
  \end{align*}

  Then, we have
  \begin{align*}
    & d_{\mathrm{EM}}(\Call{Thresh}{\bm{x}},\Call{Thresh}{\bm{x}'})
    = d_{\mathrm{EM}}(\Call{Thresh}{\bm{x}_0},\Call{Thresh}{\bm{x}_{\ell}}) \\
    & \leq \sum_{k \in [\ell]}d_{\mathrm{EM}}(\Call{Thresh}{\bm{x}_k},\Call{Thresh}{\bm{x}_{k-1}})
    \le \sum_{k \in [\ell]} 2\|\bm{x}_k-\bm{x}_{k-1}\|_1 = 2\Delta. \qedhere
  \end{align*}
\end{proof}

\begin{corollary}\label{cor:thresholding-sensitivity}
  $d_{\mathrm{EM}}(\Call{Thresh}{\bm{x}},\Call{Thresh}{\bm{x}'}) \le 2\|\bm{x}-\bm{x}'\|_1$ for $\bm{x},\bm{x}' \in {[0,1]}^{V}$.
\end{corollary}
\begin{proof}
  The inequality can be obtained by iteratively applying Lemma~\ref{lem:thresholding-sensitivity-coordinatewise} to each coordinate of the vectors.
\end{proof}

\subsection{Putting Things Together}\label{subsec:s-t-min-cut-analysis}

\begin{algorithm}[t!]
  \caption{Algorithm for the minimum $s$-$t$ cut problem}\label{alg:min-cut}
  \Procedure{\emph{\Call{MinCut}{$G,s,t,\epsilon$}}}{
    \Input{A graph $G=(V,E)$, two vertices $s,t \in V$.}
    Solve LP~\eqref{eq:LP-min-cut} using the algorithm given by Theorem~\ref{thm:lp} with parameter $\eta = n^{-1/3}$, and let $\bm{d} \in {[0,1]}^{\binom{V}{2}}$ be the solution obtained\;
    $\bm{z} \leftarrow {(\bm{d}(s,v))}_{v \in V}$\;
    \Return \Call{Thresh}{$\bm{z}$}.
  }
\end{algorithm}
Our algorithm is given in Algorithm~\ref{alg:min-cut}.
It simply computes a solution $\bm{d} \in {[0,1]}^{\binom{V}{2}}$ to LP~\eqref{eq:LP-min-cut} using the algorithm given in Theorem~\ref{thm:lp} with parameter $\eta = n^{-1/3}$, and then rounds $\bm{d}$ using the procedure $\Call{Thresh}{}$.

\begin{proof}[Proof of Theorem~\ref{thm:s-t-min-cut}]
  Let $\mathcal{A}$ be Algorithm~\ref{alg:min-cut}.
  First, we consider the approximation guarantee of $\mathcal{A}$.
  Let $\bm{d}^* \in \mathbb{R}^{\binom{V}{2}}$ be an optimal solution to LP~\eqref{eq:LP-min-cut}.
  By Theorem~\ref{thm:lp}, we get $\bm{d} \in {[0,1]}^{\binom{V}{2}}$ satisfying the constraints in LP~\eqref{eq:LP-min-cut} such that
  \[
    \sum_{e \in E}\bm{d}(e) \leq \mathsf{OPT} + \frac{\eta n}{2} = \mathsf{OPT} + O\left(n^{2/3}\right).
  \]
  By Lemma~\ref{lem:thresholding-approximation}, the expected cut size of the output set is the same.

  Now, we consider the average sensitivity of $\mathcal{A}$.
  Let $G=(V,E)$ be a graph, and let $\bm{d} \in {[0,1]}^{\binom{V}{2}}$ be the solution to LP~\eqref{eq:LP-min-cut} computed by $\mathcal{A}$ on $G$.
  For each edge $e \in E$, let $\bm{d}_e \in {[0,1]}^{\binom{V}{2}}$ be the solution to LP~\eqref{eq:LP-min-cut} computed by $\mathcal{A}$ on $G-e$.
  Then, we have
  \begin{align*}
    & \E_{e \sim E}d_{\mathrm{EM}}(\mathcal{A}(G),\mathcal{A}(G-e))
    \leq \E_{e \sim E} d_{\mathrm{EM}}\left(\Call{Thresh}{{(\bm{d}(s,v))}_{v \in V}},\Call{Thresh}{{(\bm{d}_e(s,v))}_{v \in V}}\right)\\
    & \leq 2\E_{e \sim E} \|{(\bm{d}(s,v))}_{v \in V} -{(\bm{d}_e(s,v))}_{v \in V}\|_1  \tag{by Corollary~\ref{cor:thresholding-sensitivity}} \\
    & = 2\sqrt{n}\E_{e \sim E} \|\bm{d}-\bm{d}_e\|_s  \\
    & = O\left(\sqrt{\frac{n}{\eta}}\right) \tag{by Theorem~\ref{thm:lp}} \\
    & = O\left(n^{2/3}\right). \qedhere
  \end{align*}
\end{proof}

\section{Maximum Matching}\label{sec:matching}
A vertex-disjoint set of edges is called a \emph{matching}.
In the maximum matching problem, given a graph, we want to find a matching of the maximum size.
In this section, we describe several algorithms with low average sensitivity that approximate the maximum matching in a graph.

\subsection{Lexicographically smallest matching}
In this section, we describe an algorithm that computes a maximum matching in a graph with average sensitivity  at most $\mathsf{OPT}^2/m$ and prove Theorem~\ref{thm:matching-lexicographic}, where $\mathsf{OPT}$ is the maximum size of a matching.

First, we define some ordering among vertex pairs.
Then, we can naturally define the lexicographical order among matchings by considering a matching as a sorted sequence of vertex pairs.
Then, our algorithm simply outputs the lexicographically smallest matching.
Note that this can be done in polynomial time using Edmonds' algorithm~\cite{E65}.
\begin{theorem}\label{thm:matching-lexicographic}
  Let $\mathcal{A}$ be the algorithm that outputs the lexicographically smallest maximum matching.
  Then, the average sensitivity of $\mathcal{A}$ is at most $\mathsf{OPT}^2/m$, where $\mathsf{OPT}$ is the maximum size of a matching.
\end{theorem}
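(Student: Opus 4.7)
The plan is to split the analysis on whether the random edge $e$ lies in the output matching $M := \mathcal{A}(G)$. The governing intuition is that fixing a canonical (lex-smallest) representative makes the algorithm maximally consistent: the output is guaranteed to stay the same unless the removed edge was actually used.

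First I would handle the easy case, $e \notin M$. The key claim is $\mathcal{A}(G-e) = M$. To see this, I would verify two things. (i) $M$ is still a maximum matching in $G-e$: since $e \notin M$, the set $M$ remains a matching in $G-e$ of size $\mathsf{OPT}$, while any matching in $G-e$ is also a matching in $G$, so the max matching size of $G-e$ equals $\mathsf{OPT}$. (ii) $M$ is still lex-smallest: the set of maximum matchings of $G-e$ is a \emph{subset} of the set of maximum matchings of $G$ (by the argument in (i) combined with the fact that $|M|=|\mathcal{A}(G-e)|=\mathsf{OPT}$), so the lex-minimum over $G-e$'s max matchings cannot be smaller than the lex-minimum over $G$'s, which is $M$ itself. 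Hence $d_{\mathrm{Ham}}(M,\mathcal{A}(G-e))=0$ for every $e \notin M$.

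For the remaining case $e \in M$, I would use only the crude bound $d_{\mathrm{Ham}}(M,\mathcal{A}(G-e)) = |M \triangle \mathcal{A}(G-e)| \le |M|+|\mathcal{A}(G-e)| \le 2\mathsf{OPT}$, using that removing an edge cannot increase the maximum matching size so $|\mathcal{A}(G-e)|\le \mathsf{OPT}$. Averaging over a uniformly random $e\in E$ and using that only $|M|=\mathsf{OPT}$ of the $m$ edges fall in $M$,
\begin{equation*}
\E_{e\sim E}\bigl[d_{\mathrm{Ham}}(M,\mathcal{A}(G-e))\bigr] \;\le\; \frac{|M|}{m}\cdot 2\mathsf{OPT} \;=\; \frac{2\,\mathsf{OPT}^2}{m},
\end{equation*}
which matches the claimed $\mathsf{OPT}^2/m$ bound up to the constant factor (which the statement presumably absorbs, and which could be tightened by a finer analysis of the alternating-path structure of $M\triangle \mathcal{A}(G-e)$ if needed).

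There is no real obstacle here; the only subtle point is step (ii) above, namely that lex-minimality transfers from $G$ to $G-e$ when $e\notin M$. The proof hinges on the non-obvious (but easily verified) fact that the set of maximum matchings of $G-e$ is contained in the set of maximum matchings of $G$ whenever $e$ is not in any particular maximum matching $M$ of $G$ — this containment, together with the invariance of the lex order (which only depends on the pair-ordering, not on $G$), immediately yields the zero-sensitivity contribution of non-matching edges.
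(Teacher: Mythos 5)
Your proof takes exactly the paper's route: split on whether $e\in M$, argue that $e\notin M$ forces $\mathcal{A}(G-e)=M$, and crudely bound the Hamming distance otherwise. Your step (ii) — that the maximum matchings of $G-e$ are a subset of those of $G$ when $e\notin M$, so lex-minimality transfers — correctly fleshes out what the paper asserts in a single line.

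The one thing worth flagging is the constant. You bound $d_{\mathrm{Ham}}\bigl(M,\mathcal{A}(G-e)\bigr)\le 2\mathsf{OPT}$ and obtain $2\mathsf{OPT}^2/m$, whereas the paper's final display silently substitutes the bound $\mathsf{OPT}$ and reports $\mathsf{OPT}^2/m$. Your factor of $2$ is the correct one; the paper's is too small. Concretely, on the path $P_5$ with edges $(1,2),(2,3),(3,4),(4,5)$ (ordered lexicographically by endpoint pairs), $m=4$, $\mathsf{OPT}=2$, and $M=\{(1,2),(3,4)\}$; removing $(1,2)$ gives $\mathcal{A}(G-(1,2))=\{(2,3),(4,5)\}$, which is disjoint from $M$, so $d_{\mathrm{Ham}}=4=2\mathsf{OPT}$, and the average sensitivity works out to $\tfrac14(4+0+2+0)=3/2>\mathsf{OPT}^2/m=1$. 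So the theorem's stated bound is off by a factor of at most $2$; this is harmless downstream (only the $O(\mathsf{OPT}^2/m)$ behavior is used, and the remark following the theorem claims tightness only up to $\Omega(\cdot)$). The single thing in your write-up to retract is the parenthetical suggestion that a finer alternating-path analysis could recover the sharper $\mathsf{OPT}$ bound per edge: the $P_5$ example shows the symmetric difference genuinely attains $2\mathsf{OPT}$.
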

\begin{proof}
  For a graph $G=(V,E)$, let $M(G) \subseteq E$ be its lexicographically smallest maximum matching.
  As long as $e \not \in M$, we have $M(G) = M(G-e)$.
  Hence, the average sensitivity of the algorithm is at most
  \[
    \frac{\mathsf{OPT}}{m} \cdot \mathsf{OPT} + \left(1 - \frac{\mathsf{OPT}}{m}\right) \cdot 0 = \frac{\mathsf{OPT}^2}{m}.
    \qedhere
  \]
\end{proof}

\begin{remark}
Consider the path graph $P_n = (\{1,\ldots,n\}, E)$, where $E = \{(i, i+1): i \in [n-1]\}$. The average sensitivity of the above algorithm on $P_n$ is $\Omega(\frac{\mathsf{OPT}^2}{m})$.
Hence the above analysis of the average sensitivity is tight.
\end{remark}

\subsection{Greedy matching algorithm}\label{subsec:matching-greedy}

In this section, we analyze the average sensitivity of the randomized greedy algorithm (Algorithm~\ref{alg:matching-rand-greedy}) that outputs a maximal matching.
\begin{algorithm}[t!]
\caption{\textsc{Randomized Greedy Algorithm}}\label{alg:matching-rand-greedy}
\Input{undirected unweighted graph $G = (V,E)$}
Sample a uniformly random ordering $\pi$ of edges in $E$\;
Set $M \gets \emptyset$\;
Consider edges one by one according to $\pi$ and add an edge $(u,v)$ to $M$ only if both $u$ and $v$ are unmatched in $M$\;
\Return $M$.
\end{algorithm}
It is evident that Algorithm~\ref{alg:matching-rand-greedy} runs in polynomial time and that the matching it outputs has size at least $\frac{1}{2}$ the size of a maximum matching in the input graph.
\begin{theorem}\label{thm:greedy-sensitivity-guarantee}
Algorithm~\ref{alg:matching-rand-greedy} is a $\frac{1}{2}$-approximation algorithm for the maximum matching problem and has average sensitivity at most $1$.
\end{theorem}
\begin{proof}
For a permutation $\pi$ of edges in $E$, let $M_\pi(G)$ denote the matching obtained by running Algorithm~\ref{alg:matching-rand-greedy} on a graph $G$.
Using~\cite[Theorem 1]{CHK16}, we get that for every $e \in E(G)$, it holds that $\E_{\pi}\left[ \textsf{Ham}(M_\pi(G), M_\pi(G-e))\right] \le 1$.
This implies that the average sensitivity of Algorithm~\ref{alg:matching-rand-greedy} is at most $1$.
\end{proof}

A vertex set $S \subseteq V$ in a graph $G=(V,E)$ is called a \emph{vertex cover} if every edge in $E$ is incident to a vertex in $S$.
In the \emph{minimum vertex cover problem}, given a graph $G$, we want to compute a vertex cover of the minimum size.
It is well known that, for any maximal matching $M$, the vertex set consisting of all endpoints of edges in $M$ is a $2$-approximate vertex cover.
The following theorem is immediate from Theorem~\ref{thm:greedy-sensitivity-guarantee}.
\begin{theorem}\label{thm:vc-reduction-to-matching}
  There exists a $2$-approximation algorithm for the minimum vertex cover problem with average sensitivity at most $2$.
\end{theorem}

\subsection{Matching algorithm based on augmenting paths}\label{subsec:matching-iterative-greedy}

In this section, we describe a $(1-\eps)$-approximation algorithm for the maximum matching problem with average sensitivity $\tilde{O}\left(\mathsf{OPT}^{\frac{c}{c+1}}/{\eps^{\frac{3c}{c+1}}}\right)$ for $c = O(1/\eps^2)$ in Theorem~\ref{thm:matching-iterative-combined}.
The basic building block is a $(1-\eps)$-approximation algorithm (Algorithm~\ref{alg:matching-iterative-greedy}) for maximum matching that is based on iteratively augmenting a matching with greedily chosen augmenting paths of increasing lengths.
In Theorem~\ref{thm:matching-iterative-greedy}, we show that the average sensitivity of this algorithm is $\Delta^{O(1/\eps^2)}$, where $\Delta$ is the maximum degree of the input graph.
We obtain Theorem~\ref{thm:matching-iterative-greedy} by applying Theorem~\ref{thm:generic-transformation-from-local-oracle-to-stable-algorithm} to a result by Yoshida et al.~\cite{YYI12}.

We then apply Theorem~\ref{thm:generic-transformation-from-max-degree-to-avg-degree} to Theorem~\ref{thm:matching-iterative-greedy} in order to get rid of the dependence of the average sensitivity on the maximum degree and obtain Theorem~\ref{thm:matching-iterative-greedy-avg-degree-version}.
We then combine (using Theorem~\ref{thm:avg-sensitivity-distribution-of-multiple-algorithms}, the parallel composition theorem) the algorithm guaranteed by Theorem~\ref{thm:matching-iterative-greedy-avg-degree-version} with the algorithm guaranteed by Theorem~\ref{thm:matching-lexicographic} to obtain Theorem~\ref{thm:matching-iterative-combined}.
\subsubsection{Greedy matching algorithm based on augmenting paths}
In this section, we present an approximation algorithm that starts with an empty matching and then iteratively improves its size with augmenting paths of increasing lengths.
We show that the average sensitivity of this algorithm can be bounded using Theorem~\ref{thm:generic-transformation-from-local-oracle-to-stable-algorithm}.
\begin{algorithm}
  \caption{\textsc{Greedy Augmenting Paths Algorithm}}\label{alg:matching-iterative-greedy}
  \Input{undirected unweighted graph $G = (V,E)$, parameter $\eps \in (0,1)$}
  $M_0 \gets \emptyset$\;
  \For{$i \in \{1, 2, \dots \lceil \frac{1}{\eps} - 1\rceil\}$}{\label{stp:max-matching-for-loop-begin}
  Let $A_i$ denote the set of augmenting paths of length $2i - 1$ for the matching $M_{i-1}$\;
  Let $A'_i$ denote a maximal set of disjoint paths from $A_i$, where $A'_i$ is made from a random ordering of $A_i$\;
  $M_i \gets M_{i-1} \triangle A'_i$.
  }\label{stp:max-matching-for-loop-end}
  \Return $M_{\lceil \frac{1}{\eps} - 1\rceil}$.
\end{algorithm}

\begin{theorem}\label{thm:matching-iterative-greedy}
Algorithm~\ref{alg:matching-iterative-greedy} with parameter $\eps > 0$ has approximation ratio $1-\eps$ and average sensitivity $\Delta^{O(1/\eps^2)}$, where $\Delta$ is the maximum degree of the input graph.
\end{theorem}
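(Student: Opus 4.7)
The plan is to handle the two claims separately: a standard combinatorial argument for the $(1-\eps)$-approximation guarantee, and a recursive local-oracle construction combined with Theorem~\ref{thm:generic-transformation-from-local-oracle-to-stable-algorithm} for the average sensitivity bound. Throughout, set $k = \lceil 1/\eps - 1 \rceil$ so that the algorithm outputs $M_k$.

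For the approximation ratio, I would first verify by induction on $i$ that the final matching $M_k$ admits no augmenting path of length $\le 2k-1$: at iteration $i$, every augmenting path of $M_{i-1}$ of length exactly $2i-1$ either overlaps some $P \in A'_i$ (and is therefore destroyed by the symmetric difference $M_{i-1} \triangle A'_i$) or is disjoint from $A'_i$ (contradicting maximality of $A'_i$). Hence $M_k$ has no augmenting path of length $\le 2k-1$, and the classical Hopcroft--Karp counting argument on the symmetric difference between $M_k$ and an optimal matching $M^*$ — which decomposes into alternating paths and cycles, the shortest augmenting component having length $\ge 2k+1$ — yields $|M_k| \ge (1 - 1/(k+1))|M^*| \ge (1-\eps)\mathsf{OPT}$.

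For the sensitivity, I would build, for each $i \in \{0,1,\ldots,k\}$, a local oracle $\mathcal{O}_i$ that on query $e$ decides whether $e \in M_i$, and then invoke Theorem~\ref{thm:generic-transformation-from-local-oracle-to-stable-algorithm}. The base $\mathcal{O}_0$ is trivial. To build $\mathcal{O}_i$ from $\mathcal{O}_{i-1}$, observe that the set $A_i$ of augmenting paths of length $2i-1$ can be enumerated locally around $e$: there are at most $\Delta^{O(i)}$ such paths touching any particular edge, and deciding whether a candidate path $P$ of length $2i-1$ is alternating with respect to $M_{i-1}$ amounts to $2i-1$ calls to $\mathcal{O}_{i-1}$. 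The greedy maximal selection of $A'_i$ from a uniformly random ordering of $A_i$ is precisely a random-greedy-maximal-independent-set procedure on the \emph{conflict graph} whose nodes are paths in $A_i$ and whose edges connect vertex-intersecting paths; this is the structure for which Yoshida~et~al.~\cite{YYI12} bound the expected number of nodes explored by the local oracle by the maximum degree of the conflict graph. Since each path $P \in A_i$ spans $2i$ vertices and each such vertex lies on at most $\Delta^{O(i)}$ other paths in $A_i$, the conflict graph has degree $\Delta^{O(i)}$, and simulating it locally around $e$ requires probing $\Delta^{O(i)}$ paths, each contributing $O(i)$ recursive calls into $\mathcal{O}_{i-1}$.

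Unrolling the recursion gives expected query complexity $q_i(\Delta) \le \Delta^{O(i)} \cdot q_{i-1}(\Delta)$, so $q_k(\Delta) \le \Delta^{O(1+2+\cdots+k)} = \Delta^{O(k^2)} = \Delta^{O(1/\eps^2)}$, where the expectation is over the randomness of the algorithm and a uniformly random query edge. Theorem~\ref{thm:generic-transformation-from-local-oracle-to-stable-algorithm} then converts this into the claimed average-sensitivity bound. The main obstacle I anticipate is bookkeeping in the recursion: the YYI bound applies cleanly when the query edge is chosen uniformly at random, but the recursive calls made by $\mathcal{O}_i$ land on edges distributed according to which paths the local exploration reveals, so one must argue (either via the symmetry of the uniform random path ordering, or by averaging over the input edge using the fact that the incident paths are themselves sampled proportionally to local structure) that the expected query complexity composes multiplicatively rather than blowing up by an adversarial worst-case factor. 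This amortization — essentially showing that each recursive subcall still hits an edge whose local neighborhood has the same average behavior — is what drives the final exponent and is where care is needed.
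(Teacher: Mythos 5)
Your proof takes the same high-level route as the paper---reduce average sensitivity to the expected query complexity of a local oracle via Theorem~\ref{thm:generic-transformation-from-local-oracle-to-stable-algorithm}---but the paper's proof is much shorter because it cites two known results as black boxes rather than re-deriving them. For the approximation ratio, the paper simply cites~\cite{GJ79} for the bound $|M_k| \ge \frac{k}{k+1}|M^*|$; your Hopcroft--Karp style derivation is the correct underlying argument, though your inductive claim that "every augmenting path of $M_{i-1}$ of length exactly $2i-1$ is destroyed or contradicts maximality" only rules out paths \emph{in $A_i$} and does not by itself preclude \emph{new} short augmenting paths appearing after the symmetric difference; the standard Hopcroft--Karp lemma (shortest augmenting path length is strictly monotone when augmenting along a maximal set of disjoint shortest paths) is what actually closes this, and you do gesture at it, so this is a minor presentational issue rather than an error.

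The substantive difference is on the sensitivity side. The paper's proof consists of a single citation: Yoshida~et~al.~\cite[Theorem~3.7]{YYI12} already prove that membership of a uniformly random edge in $M_k$ can be decided with $\Delta^{O(k^2)}$ expected queries; combining this with Theorem~\ref{thm:generic-transformation-from-local-oracle-to-stable-algorithm} gives the bound. You instead try to reconstruct the YYI12 oracle and its analysis from scratch via the recursive conflict-graph construction and the product bound $q_k \le \prod_i \Delta^{O(i)} = \Delta^{O(k^2)}$. Your sketch captures the correct structure of the YYI12 construction, but the obstacle you flag at the end---that recursive subcalls do not land on uniformly random edges, so the per-level expectation bound does not obviously compose multiplicatively---is precisely the hard part of YYI12's Theorem~3.7, and you leave it unresolved. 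As written, this is therefore a genuine gap rather than a complete alternative proof: you have correctly identified the lemma you need and correctly identified why its proof is nontrivial, but you have not proved it, whereas the paper simply invokes it. If you intend the argument to be self-contained, the amortization over non-uniform recursive queries must be carried out; if you are willing to cite YYI12, the entire recursive construction can be dropped in favor of the one-line citation the paper uses.
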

\begin{proof}
For all $k \ge 0$, it is known that $|M_k| \ge \frac{k}{k+1} \cdot |M^*|$~\cite{GJ79}, where $M^*$ denotes a maximum matching in $G$. Hence, the matching $M_{\lceil \frac{1}{\eps} - 1\rceil}$ is a $(1-\eps)$-approximation to $M^*$.

Yoshida et al.~\cite[Theorem 3.7]{YYI12} show that for all $k \ge 0$, determining whether a uniformly random edge $e \sim E$ belongs to $M_k$ can be done by querying at most $\Delta^{O(k^2)}$ edges in expectation, where $\Delta$ is the maximum degree of $G$. Applying Theorem~\ref{thm:generic-transformation-from-local-oracle-to-stable-algorithm} to this result, we can see that the average sensitivity of Algorithm~\ref{alg:matching-iterative-greedy} with parameter $\eps > 0$ and input $G$ is $\Delta^{O(1/\eps^2)}$, where $\Delta$ is the maximum degree of $G$.
\end{proof}

\subsubsection{Stable-on-average thresholding transformation}
In this section, we show a transformation from matching algorithms whose average sensitivity is a function of the maximum degree to matching algorithms whose average sensitivity does not depend on the maximum degree.
This is done by adding to the algorithm, a preprocessing step that removes vertices from the input graph, where the removed vertices have degree at least an appropriate random threshold.
Such a transformation helps us to design stable-on-average algorithms for graphs with unbounded degree.
Let $\mathsf{Lap}(\mu, \phi)$ denote the Laplace distribution with a location parameter $\mu$ and a scale parameter $\phi$.

\begin{theorem}\label{thm:generic-transformation-from-max-degree-to-avg-degree}
	Let $\cA'$ be a randomized algorithm for the maximum matching problem such that the size of the matching output by $\cA'$ on a graph $G$ is always at least $a \cdot \mathsf{OPT}$ for some $a \geq 0$.
	In addition, assume that there exists a solution oracle $\mathcal{O}$ (see Definition~\ref{def:solution-oracle}) for $\cA'$ makes at most $q(\Delta)$ queries to $G$ in expectation, where $\Delta$ is the maximum degree of $G$, and the expectation is taken over the random coins of $\cA'$ and edges $e \in E$.
	Let $\delta > 0$ and $\tau$ be a non-negative function on graphs.
	Then, there exists an algorithm $\cA$ for the maximum matching problem with average sensitivity
	\[
	\beta(G) \le O\left(\frac{K_G}{\delta (\tau(G)-K_G)} + \exp\left(-\frac{1}{\delta}\right)\right)\cdot \mathsf{OPT} + \E_L\left[{(2L -2)}^2 q(L)\right],
	\]
	where $L$ is a random variable distributed as $\mathsf{Lap}(\tau(G), \delta \tau(G))$ and $K_G = \max_{e \in E(G)} |\tau(G)-\tau(G-e)|$.
	Moreover, the expected size of the matching output by $\mathcal{A}$ is at least
	$$a \cdot \mathsf{OPT} -  \frac{am}{(1 - \delta \ln (\mathsf{OPT}/2))\cdot\tau(G)} - a.$$
\end{theorem}
The following fact will be useful in the proof of Theorem~\ref{thm:generic-transformation-from-max-degree-to-avg-degree}.
\begin{proposition}\label{prop:laplace}
	Let $L$ be a random variable distributed as $\mathsf{Lap}(\mu, \phi)$.
	Then, $\Pr[L < (1-\eps)\mu] \leq \exp(-\eps \mu/\phi)/2$.
	Similarly, $\Pr[L > (1+\eps)\mu] \leq \exp(-\eps \mu/\phi)/2$.
\end{proposition}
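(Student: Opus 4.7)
The plan is to establish both tail bounds by direct integration against the Laplace density, leveraging the symmetry of the distribution around its location parameter. Recall that a random variable $L \sim \mathsf{Lap}(\mu,\phi)$ has density $f(x) = \frac{1}{2\phi}\exp(-|x-\mu|/\phi)$ for $x \in \mathbb{R}$, so the distribution is symmetric about $\mu$. This symmetry reduces the two claims to a single computation: for any $t > 0$, we have $\Pr[L < \mu - t] = \Pr[L > \mu + t]$, and once this common quantity is bounded by $\tfrac12 \exp(-t/\phi)$, the proposition follows by setting $t = \eps \mu$.

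For the concrete computation I would focus on the lower tail. Writing
\[
  \Pr[L < (1-\eps)\mu] \;=\; \int_{-\infty}^{(1-\eps)\mu} \frac{1}{2\phi}\exp\!\left(-\frac{|x-\mu|}{\phi}\right) dx,
\]
I observe that on the interval of integration we have $x \le (1-\eps)\mu \le \mu$ (using $\eps \ge 0$ and implicitly $\mu \ge 0$, which is the regime of interest in the application), so $|x - \mu| = \mu - x$. Substituting and evaluating the elementary antiderivative yields
\[
  \Pr[L < (1-\eps)\mu] \;=\; \int_{-\infty}^{(1-\eps)\mu} \frac{1}{2\phi}\exp\!\left(\frac{x-\mu}{\phi}\right) dx \;=\; \frac{1}{2}\exp\!\left(-\frac{\eps\mu}{\phi}\right).
\]
Thus the first bound holds with equality rather than inequality.

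For the upper tail I would appeal to the symmetry of $f$ about $\mu$: the map $x \mapsto 2\mu - x$ sends the event $\{L < (1-\eps)\mu\}$ bijectively onto $\{L > (1+\eps)\mu\}$ and preserves the density, so the two probabilities are equal. Alternatively, the same direct integration on $[(1+\eps)\mu, \infty)$ gives $\frac{1}{2}\exp(-\eps\mu/\phi)$ immediately. Either way, the second bound follows.

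I do not anticipate any genuine obstacle: the argument is a short computation with the density of a standard distribution. The only minor subtlety is noting that the inequalities in the proposition are stated with $\le$ rather than $=$, which is convenient because if one wishes to avoid the sign assumption $\mu \ge 0$ (or a careful treatment of where $(1-\eps)\mu$ sits relative to $\mu$ when $\eps$ could be larger than $1$), one can write $\Pr[L < (1-\eps)\mu] \le \Pr[L \le \mu - \eps\mu] \le \frac{1}{2}\exp(-\eps\mu/\phi)$ using the tail formula $\Pr[|L - \mu| \ge t] = \exp(-t/\phi)$, which itself follows from the same one-line integration.
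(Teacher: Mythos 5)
Your computation is correct: the direct integration of the Laplace density over the lower tail gives exactly $\tfrac12\exp(-\eps\mu/\phi)$, and symmetry about $\mu$ handles the upper tail; the only hypothesis you need, $\eps\mu \ge 0$, holds in the paper's applications where $\mu = \tau(G) > 0$. The paper states this proposition without proof as a standard fact, and your argument is precisely the standard verification it implicitly relies on.
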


\begin{proof}[Proof of Theorem~\ref{thm:generic-transformation-from-max-degree-to-avg-degree}]
	The algorithm $\cA$ is given below.

	\medskip
	\noindent {\bf Algorithm $\cA$}:
	On input $G = (V,E)$,
	\begin{enumerate}
		\item Sample a random variable $L$ according to the distribution $\mathsf{Lap}(\tau(G), \delta \tau(G))$.\label{stp:laplace-noise-addition}
		\item Let $[G]_L$ be the graph obtained after removing from $G$ all vertices of degree at least $L$.\label{stp:laplace-graph-modifcation}
		\item Run $\cA'$ on $[G]_L$.\label{stp:A-as-subroutine}
	\end{enumerate}

	\medskip

	\noindent We first bound the average sensitivity of $\cA$. We can think of $\cA$ as being sequentially composed of two algorithms, where the first algorithm takes in a graph $G = (V,E)$ and outputs a number $L \sim \mathsf{Lap}(\tau(G), \delta \tau(G))$.
	The second algorithm takes both $L$ and $G$ and runs $\cA'$ on $[G]_L$.

	\noindent Let $L_e$ for $e \in E$ denote a Laplace random variable distributed as $\mathsf{Lap}(\tau(G-e), \delta \tau(G-e))$.
	Using Theorem~\ref{thm:sequential-composition-TV-and-EM}, we get that the average sensitivity of $\cA$ is bounded by
	$${\sf OPT}\cdot \E_{e \sim E} \left[d_{\textrm{TV}}(L,L_e)\right] + \E_{L}\left[\E_{e \sim E}\left[d_{\textrm{EM}}(\cA'([G]_L), \cA'([G-e]_L))\right]\right].$$

	\begin{claim}
		For $x \in \R$, $\E_{e \sim E} \left[d_{\mathrm{EM}}\bigl(\cA'([G]_x), \cA'([G-e]_x)\bigr)\right] \le {(2x -2)}^2 q(x)$.
	\end{claim}
	\begin{proof}
		Fix $x \in \R$. In order to bound the term $\E_{e \sim E}\left[d_{\mathrm{EM}}\bigl(\cA'([G]_x), \cA'([G-e]_x)\bigr)\right]$, consider the following algorithm $\cA'_x$. On input $G = (V,E)$, the algorithm $\cA'_x$ first removes every vertex of degree at least $x$ from $G$ and then runs $\cA'$ on the resulting graph.
		Hence, the quantity $\E_{e \sim E}\left[d_{\mathrm{EM}}\bigl(\cA'([G]_x), \cA'([G-e]_x)\bigr)\right]$ denotes the average sensitivity of $\cA'_x$.

		In order to bound the average sensitivity of $\cA'_x$, construct a solution oracle $\mathcal{O}_x$ for $\cA'_x$ as follows. The oracle $\mathcal{O}_x$, when given access to a graph $G=(V,E)$ and input $e$ sampled uniformly at random from $E$, does the following. It first checks whether at least one of the endpoints of $e$ has degree at least $x$. If so, it returns that $e$ does not belong to the solution obtained by running $\cA'_x$ on $G$. Otherwise, it runs $\mathcal{O}$ with access to $[G]_x$ and $e$ as input and outputs the answer of $\mathcal{O}$.

		We can analyze the query complexity of $\mathcal{O}_x$ as follows. Call an edge $e \in E$ \emph{alive} if both the endpoints of $e$ have degree less than $x$. Otherwise, $e$ is {\em dead}.

		The oracle $\mathcal{O}_x$ can check whether an edge $e = (u,v)$ is alive or not by querying at most $2x - 2$ edges incident to $e$. In particular $\mathcal{O}_x$ examines the neighbors of $u$ and $v$ one by one, and, as soon $\mathcal{O}_x$ encounters $x - 1$ distinct neighbors (excluding $u$ or $v$ themselves) for either $u$ or $v$, $\mathcal{O}_x$ can declare $e$ to be a dead edge.

		If the edge $e \in E$ input to $\mathcal{O}_x$ is a dead edge, therefore, $\mathcal{O}_x$ queries at most $2x - 2$ edges and returns that $e$ cannot be part of a solution to running $\cA'_x$ on $G$.

		If the input edge $e \in E$ is alive, then we know that it is a uniformly random alive edge. By the guarantee on $\mathcal{O}$, we then know that $\mathcal{O}$ makes at most $q(x)$ queries to the alive edges in expectation over the randomness of $\cA'$ and the choice of the input alive edge, since the maximum degree of $[G]_x$ is at most $x$. In order for the oracle $\mathcal{O}_x$ to simulate oracle access to $[G]_x$ for the purpose of answering queries made by oracle $\mathcal{O}$, for each alive edge $e$ queried by $\mathcal{O}$, the oracle $\mathcal{O}_x$ has to query each edge incident to $e$ in $G$ and determine which among these are alive. Since $e$ is alive, both endpoints of $e$ have degrees less than $x$.
		Hence, $\mathcal{O}_x$ need only check whether at most $2x - 2$ edges incident to $e$ are alive or not. This can be done by querying ${(2x -2)}^2$ edges in $E$ in total.

		Combining all of the above, the expected query complexity of $\mathcal{O}_x$ is at most ${(2x -2)}^2 q(x)$, where the expectation is taken over the edges of $e \in E$ and the randomness in ${\cal A}_x$.

		Therefore, by Theorem~\ref{thm:generic-transformation-from-local-oracle-to-stable-algorithm}, we get that the average sensitivity of algorithm ${\cal A}_x$ is bounded by ${(2x -2)}^2 q(x)$.
	\end{proof}

	\noindent We now bound the quantity $\E_{e \sim E}  \left[d_{\text{TV}}(L, L_e)\right]$.
	\begin{claim}\label{clm:total-variation-between-neighboring-laplace}
		For any $e \in E$, we have
		\[
		d_{\mathrm{TV}}(L, L_e) \leq O\left(\frac{K}{\delta (\tau-K)} + \exp\left(-\frac{1}{\delta}\right) \right).
		\]
	\end{claim}
	\begin{proof}
		Let $f_L,f_{L_e}\colon \mathbb{R} \to \mathbb{R}$ be the probability density functions of the Laplace random variables $L$ and $L_e$, respectively.
		Let $\tau = \tau(G)$, $\tau_e = \tau(G_e)$, and $K=K_G$.
		Then
		\begin{align*}
		\frac{f_L(x)}{f_{L_e}(x)}
		& = \frac{\frac{1}{2\delta \tau} \exp\left(-\frac{|x-\tau|}{\delta \tau}\right)}{\frac{1}{2\delta \tau_e} \exp\left(-\frac{|x-\tau_e|}{\delta \tau_e}\right)}
		= \frac{\tau_e}{\tau}\exp\left(\frac{|x-\tau_e|}{\delta \tau_e}-\frac{|x-\tau|}{\delta \tau}\right) \\
		& = \left(1-\frac{\tau-\tau_e}{\tau}\right)\exp\left(\frac{\tau|x-\tau_e|-\tau_e|x-\tau|}{\delta \tau\tau_e}\right).
		\end{align*}
		A direct calculation shows that for $0 \leq x \leq 2\max\{\tau,\tau_e\}$, we have
		\begin{align*}
		\left(1 - \frac{K}{\tau}\right)\exp\left(\frac{-2K}{\delta (\tau-K)}\right)
		\leq \frac{f_L(x)}{f_{L_e}(x)}
		\leq \left(1 + \frac{K}{\tau}\right)\exp\left(\frac{2K}{\delta (\tau-K)}\right).
		\end{align*}
		This implies that for all $S \subseteq [0,2\max\{\tau,\tau_e\}]$,
		\[
		\left(1 - \frac{K}{\tau}\right)\exp\left(\frac{-2K}{\delta (\tau-K)}\right) - 1
		\leq \Pr[L\in S]-\Pr[L_e\in S]
		\leq \left(1 + \frac{K}{\tau}\right)\exp\left(\frac{2K}{\delta (\tau-K)}\right) - 1.
		\]
		By Proposition~\ref{prop:laplace}, the probability that $L$ (and $L_e$ as well) falls in the range $[-\infty,0] \cup [2\max\{\tau,\tau_e\},\infty]$ is bounded by $\exp(-1/\delta)$.
		Hence, total variation distance between $L$ and $L_e$ is
		\begin{align*}
		& \left(1 + \frac{K}{\tau}\right)\exp\left(\frac{2K}{\delta (\tau-K)}\right) - \left(1 - \frac{K}{\tau}\right)\exp\left(\frac{-2K}{\delta (\tau-K)}\right) + 2\exp\left(-\frac{1}{\delta}\right) \\
		& = \left(1 + \frac{K}{\tau}\right)\left(1+\frac{2K}{\delta (\tau-K)} + O\left(\frac{K^2}{\delta^2 {(\tau-K)}^2}\right) \right)  \\
		& \quad - \left(1 - \frac{K}{\tau}\right)\left(1-\frac{2K}{\delta (\tau-K)} - O\left(\frac{K^2}{\delta^2 {(\tau-K)}^2}\right) \right) + 2\exp\left(-\frac{1}{\delta}\right) \\
		& = \frac{2K}{\tau} + \frac{4K}{\delta (\tau-K)} + O\left(\frac{K^2}{\delta^2 {(\tau-K)}^2}\right)  + 2\exp\left(-\frac{1}{\delta}\right) \\
		& \leq \frac{6K}{\delta (\tau-K)} + 2\exp\left(-\frac{1}{\delta}\right) + O\left(\frac{K^2}{\delta^2 {(\tau-K)}^2}\right). \\
		& = O\left(\frac{K}{\delta (\tau-K)} + \exp\left(-\frac{1}{\delta}\right) \right). \qedhere
		\end{align*}
	\end{proof}

	\noindent Therefore, the average sensitivity of $\cA$ is bounded as
	\begin{align*}
	\beta(G) &= \E_{e \sim E} d_{\mathrm{EM}}\bigl(\cA(G), \cA(G-e)\bigr)\\
	&\le O\left(\frac{K}{\delta (\tau-K)} + \exp\left(-\frac{1}{\delta}\right) \right) \cdot \mathsf{OPT} + \E_L \left[  {(2x -2)}^2 q(x) \right].
	\end{align*}

	We now bound the approximation guarantee of $\cA$. By Proposition~\ref{prop:laplace},
	$$\Pr\left[L < \left(1 - \delta \ln \left(\frac{\mathsf{OPT}}{2}\right)\right)\cdot\tau(G)\right] \le \frac{1}{\mathsf{OPT}}.$$
	Therefore, with probability at least $1 - 1/\mathsf{OPT}$, only those vertices with degree at least $(1 - \delta \ln (\mathsf{OPT}/2))\cdot\tau(G)$ are removed from $G$.
	The number of such vertices is at most $\frac{m}{(1 - \delta \ln (\mathsf{OPT}/2))\cdot\tau(G)}$.
	Therefore, with probability at least $1 - 1/\mathsf{OPT}$, the size of a maximum matching in the resulting graph is at most $\frac{m}{(1 - \delta \ln (\mathsf{OPT}/2))\cdot\tau(G)}$ smaller than that of $G$.
	With probability at most $1/\mathsf{OPT}$, the size of a maximum matching in the resulting instance could be smaller by an additive term of at most $\mathsf{OPT}$.
	Hence, the expected size of a maximum matching in the new instance is at least $$\mathsf{OPT} - \frac{m}{(1 - \delta \ln (\mathsf{OPT}(G)/2))\cdot\tau(G)} - 1.$$ The statement on approximation guarantee follows.
\end{proof}
\subsubsection{Average sensitivity of the greedy augmenting paths algorithm with thresholding}

\begin{theorem}\label{thm:matching-iterative-greedy-avg-degree-version}
  Let $\eps \in (0,1)$ be a parameter.
  There exists an algorithm with approximation ratio $1-\eps$ and average sensitivity
  \[
    O\left(\frac{\eps}{1-\eps} \log n\right) + \left(\frac{m}{\eps^3 \mathsf{OPT}}\right)^{O(1/\eps^2)}.
  \]
\end{theorem}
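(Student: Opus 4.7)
The plan is to apply the generic thresholding transformation of Theorem~\ref{thm:generic-transformation-from-max-degree-to-avg-degree} to Algorithm~\ref{alg:matching-iterative-greedy}, mirroring the proof of Theorem~\ref{thm:matching-greedy-avg-degree-version} but plugging in the stronger local oracle from Theorem~\ref{thm:matching-iterative-greedy}. Specifically, the $(1-\eps)$-approximation greedy augmenting path algorithm admits a local oracle with query complexity $q(\Delta) = \Delta^{c/\eps^2}$ for some constant $c$, by the result of Yoshida, Yamamoto, and Ito invoked in Theorem~\ref{thm:matching-iterative-greedy}. Feeding this $q$ into the transformation and choosing the same parameters as in Theorem~\ref{thm:matching-greedy-avg-degree-version} produces the claimed algorithm.

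First I would describe the algorithm exactly as in Theorem~\ref{thm:matching-greedy-avg-degree-version}: compute $\mathsf{OPT}$; if $\mathsf{OPT}$ or $m$ is below a constant threshold (say $\mathsf{OPT} \le 1/\eps + 1$ or $m \le 1/(2\eps)$) return an arbitrary maximum matching, which trivially has sensitivity $O(1/\eps)$; otherwise set $\tau(G) = m/(\eps' \mathsf{OPT})$ with $\eps' = \eps - 1/(2\mathsf{OPT})$ and $\delta = 1/(2\ln n)$, and apply Theorem~\ref{thm:generic-transformation-from-max-degree-to-avg-degree} to Algorithm~\ref{alg:matching-iterative-greedy}. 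The approximation analysis is identical to that in the proof of Theorem~\ref{thm:matching-greedy-avg-degree-version}: since Algorithm~\ref{alg:matching-iterative-greedy} outputs a matching of size at least $(1-\eps)\mathsf{OPT}([G]_L)$ by Theorem~\ref{thm:matching-iterative-greedy}, and the thresholding loses at most $\eps\mathsf{OPT}$ matching edges in expectation (by the choice of $\tau$, $\delta$, and $\eps'$, together with Proposition~\ref{prop:laplace}), the overall approximation ratio is $1 - \eps$ up to constants absorbable into $\eps$.

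For the sensitivity, Theorem~\ref{thm:generic-transformation-from-max-degree-to-avg-degree} yields two terms. The bound on $K_G$ and the estimate $K_G \le 4\eps m/(2\eps \mathsf{OPT}-1)$ from Theorem~\ref{thm:matching-greedy-avg-degree-version} apply verbatim, so the first term $O(K_G/(\delta(\tau-K_G)) + \exp(-1/\delta))\cdot \mathsf{OPT}$ is $O(\eps/(1-\eps) \log n)$ as before. The second term is
\[
\E_L\bigl[(2L-2)^2 q(L)\bigr] = \int_0^\infty (2x-2)^2 \cdot x^{c/\eps^2}\cdot \frac{1}{2\delta\tau}\exp\!\left(-\frac{|x-\tau|}{\delta\tau}\right)\mathrm{d}x,
\]
which is a raw moment of a Laplace random variable of order $k = O(1/\eps^2)$. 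Using the standard expansion $\E[L^k] = \sum_{j\ge 0}\binom{k}{j}\tau^{k-j}(\delta\tau)^j j!\cdot \mathbb{1}[j \text{ even}]$, and that $\delta k = O(1/(\eps^2 \log n))$ is small, the dominant contribution is the $j=0$ term $\tau^k$, with lower order corrections absorbable into a further constant factor in the exponent. This yields $(\tau)^{O(1/\eps^2)} = (m/(\eps\mathsf{OPT}))^{O(1/\eps^2)} \le (m/(\eps^3\mathsf{OPT}))^{O(1/\eps^2)}$, proving the claimed bound.

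The main obstacle will be controlling the higher moments of the Laplace distribution: naive moment bounds blow up as the polynomial degree $k = O(1/\eps^2)$ grows, so I would carefully use the fact that $\delta = 1/(2\log n)$ makes $\delta k$ small and the correction factor $(1 + O(\delta k))^k$ at most constant (absorbed into the $O(\cdot)$ in the exponent). Once this Laplace moment computation is pinned down, the rest of the proof assembles directly from Theorem~\ref{thm:generic-transformation-from-max-degree-to-avg-degree} and the template of Theorem~\ref{thm:matching-greedy-avg-degree-version}.
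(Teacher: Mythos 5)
Your proposal is correct and follows essentially the same route as the paper: apply Theorem~\ref{thm:generic-transformation-from-max-degree-to-avg-degree} to Algorithm~\ref{alg:matching-iterative-greedy} with the local oracle of query complexity $\Delta^{O(1/\eps^2)}$ from Yoshida et al., reuse the first-term bound, and bound the second term as a degree-$O(1/\eps^2)$ Laplace moment controlled because $\delta k$ is small; your binomial expansion of $\E[L^k]$ is the same computation the paper carries out via the incomplete Gamma function. The only substantive deviation is the choice $\eps' = \eps - 1/(2\mathsf{OPT})$ carried over from Theorem~\ref{thm:matching-greedy-avg-degree-version}: there the base algorithm gives a fixed $1/2$-approximation so the thresholding loss of roughly $2\eps'\cdot\mathsf{OPT}$ is the whole additive error, whereas here the base algorithm itself already loses $\eps'\mathsf{OPT}$, so the combined loss is about $3\eps'\mathsf{OPT}$; the paper accordingly sets $\eps' = \eps/3 - 1/(3\mathsf{OPT})$ (and adjusts the guard thresholds to $\mathsf{OPT} \le 2/\eps+1$, $m\le 1/(3\eps)$), which your "absorbable into $\eps$" remark correctly anticipates but should be made explicit to land on $1-\eps$ rather than $1-O(\eps)$.
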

\begin{proof}

The algorithm guaranteed by the theorem statement is as follows.

\medskip
\noindent {\bf Algorithm $\cA_\eps$}:
On input $G = (V,E)$,
\begin{enumerate}
	\item Compute $\mathsf{OPT}$.
	\item If $\mathsf{OPT} \le \frac{2}{\eps} + 1$ or $m \le \frac{1}{3\eps}$, then output an arbitrary maximum matching.
	\item Otherwise, run the algorithm obtained by applying Theorem~\ref{thm:generic-transformation-from-max-degree-to-avg-degree} with the setting $\tau:=\tau(G) = \frac{m}{\eps' \mathsf{OPT}}$ and $\delta := \frac{1}{2\ln n}$ to Algorithm~\ref{alg:matching-iterative-greedy} run with parameter $\eps'$, where $\eps' = \frac{\eps}{3} - \frac{1}{3\mathsf{OPT}}$.
\end{enumerate}

\medskip

  \smallskip

  \noindent \emph{Approximation guarantee}: If $\mathsf{OPT} \le \frac{1}{\eps} + 1$ or $m \le \frac{1}{2\eps}$, the approximation guarantee is clear.
  Otherwise, since Algorithm~\ref{alg:matching-iterative-greedy} outputs a maximal matching whose size is always at least $(1-\eps')\cdot\mathsf{OPT}$, the size of the matching output by $\cA_\eps$ is at least
  $(1-\eps')\cdot\mathsf{OPT} - \frac{\eps'\cdot (1-\eps')\cdot\mathsf{OPT}}{1 - \frac{\ln (\mathsf{OPT}/2)}{2\ln n}} - (1-\eps')$, which is at least $(1-\eps)\cdot\mathsf{OPT}$ by the setting of $\eps'$ and the fact that $\frac{\ln (\mathsf{OPT}/2)}{2\ln n} \le \frac{1}{2}$.
\smallskip

\noindent \emph{Average sensitivity}: If $\mathsf{OPT} \le \frac{2}{\eps} + 1$ or $m \le \frac{1}{3\eps}$, the average sensitivity of $\cA_\eps$ is bounded by $O(\frac{1}{\eps})$, since the size of maximum matching in $G$ is small and it can decrease only by at most $1$ by the removal of an edge.

We now analyze the average sensitivity of $\cA_\eps$ for the case that $\mathsf{OPT} > \frac{2}{\eps} + 1$ and $m > \frac{1}{3\eps}$.
Let $c = O(1/\eps^2)$.
The average sensitivity of the algorithm resulting from applying Theorem~\ref{thm:generic-transformation-from-max-degree-to-avg-degree} to Algorithm~\ref{alg:matching-iterative-greedy} is bounded as:
\begin{align}
O\left(\frac{K_G}{\delta (\tau-K_G)} + \exp\left(-\frac{1}{\delta}\right)\right)\cdot \mathsf{OPT}
+ \int_0^\infty {(2x-2)}^2 \cdot x^{c}\cdot  \frac{1}{2\delta \tau} \cdot \exp\left(-\frac{|x - \tau|}{\delta \tau}\right) \, \mathrm{d}x. \label{eq:matching-rand-greedy-sensitivity}
\end{align}
To obtain the above expression, we used the fact (from~\cite[Theorem 3.7]{YYI12}) that $q(x) \le x^c$ when $x > 0$ and $q(x) = 0$ otherwise.

The second term of~\eqref{eq:matching-rand-greedy-sensitivity} can be bounded as:
\begin{align*}
& \int_0^\infty (2x-2)^2 x^{c} \frac{1}{2\delta \tau}\exp\left(-\frac{|x-\tau|}{\delta \tau}\right) \mathrm{d}x
= 4\int_\tau^\infty x^{c+2} \frac{1}{\delta \tau}\exp\left(-\frac{x-\tau}{\delta \tau}\right) \mathrm{d}x \\
& = \exp\left(\frac{1}{\delta}\right) (\delta \tau)^{c+2} \Gamma\left(c+3,\frac{1}{\delta}\right)
= (\delta \tau)^{c+2} (c+2)! \sum_{k=0}^{c+2}\frac{(1/\delta)^k}{k!}
= \left(\frac{m}{\eps^3 \mathsf{OPT}}\right)^{O(1/\eps^2)}
\end{align*}
where $\Gamma(\cdot,\cdot)$ is the incomplete Gamma function and we have used the fact that $\Gamma(s+1,x) = s! \exp(-x) \sum_{k=0}^s x^k/k!$ if $s$ is a non-negative integer.
Moreover, each term in the summation $\delta^{c+2}\cdot(c+2)! \sum_{k=0}^{c+2}\frac{(1/\delta)^k}{k!}$ is $o(1)$. Hence, the summation is $O(\frac{1}{\eps^2})$.

In order to bound the first term of~\eqref{eq:matching-rand-greedy-sensitivity}, note that
\begin{align*}
K_G &=\max_{e \in E} |\tau(G) - \tau(G-e)|\\
&= 3\max_{e \in E}\left|\frac{m}{\eps\mathsf{OPT}(G) - 1}-\frac{m-1}{\eps\mathsf{OPT}(G-e) - 1}\right|\\ &\leq 3\max\left\{\frac{m}{\eps\mathsf{OPT}(G)-1}-\frac{m-1}{\eps\mathsf{OPT}(G)-1}, \frac{m-1}{\eps(\mathsf{OPT}(G)-1)-1}-\frac{m}{\eps\mathsf{OPT}(G)-1} \right\}  \\
&= 3\max\left\{\frac{1}{\eps\mathsf{OPT}(G)-1}, \frac{\eps(m-\mathsf{OPT}(G))+1}{(\eps(\mathsf{OPT}(G)-1)-1)\cdot(\eps\mathsf{OPT}(G)-1)} \right\}\\
&= \frac{3}{\eps\mathsf{OPT}(G)-1}\max\left\{1, \frac{\eps(m-\mathsf{OPT}(G))+1}{\eps(\mathsf{OPT}(G)-1)-1} \right\}.
\end{align*}
The inequality above uses the fact that for numbers $u,v,w \ge 0$ such that $u \le v$ and $w(v-1) - 1 \ge 0$, we have that $\frac{u}{wv - 1} \le \frac{u - 1}{w(v-1)-1}$.

\noindent Since $\frac{\eps(m-\mathsf{OPT})+1}{\eps(\mathsf{OPT}-1)-1}$ is a nonincreasing function of $\mathsf{OPT}$ and $\mathsf{OPT} > \frac{2}{\eps} + 1$, we have that $$\frac{\eps(m-\mathsf{OPT})+1}{\eps(\mathsf{OPT}-1)-1} < \eps m.$$
Hence, $K_G < \frac{3}{\eps\mathsf{OPT}-1}\max\left\{1, \eps m\right\} = \frac{9\eps m}{\eps\mathsf{OPT}-1}$, since $m > \frac{1}{3\eps}$ and therefore, we have that $\tau - K_G \ge \tau(1-9\eps)$.
Hence, the first term of~(\ref{eq:matching-rand-greedy-sensitivity}) can be upper bounded by

\begin{align*}
O\left(\frac{K_G}{\delta \tau (1-9\eps)} + \exp\left(-\frac{1}{\delta}\right)\right)\cdot \mathsf{OPT}
& = O\left(\frac{9\eps m}{\eps \mathsf{OPT} - 1}\cdot \frac{1}{1 - 9\eps}\cdot \frac{2\ln n}{\frac{m}{\eps \mathsf{OPT} - 1}} + \frac{\mathsf{OPT}}{n^2}\right) \\
& = O\left(\frac{\eps}{1- \eps}\cdot \log n \right).
\end{align*}

Hence, the average sensitivity of the algorithm obtained can be bounded by:
\begin{align*}
\beta(G) &= \max\left\{O\left(\frac{1}{\eps}\right), O\left(\frac{\eps}{1- \eps}\cdot \log n\right) + \left(\frac{m}{\eps^3 \mathsf{OPT}}\right)^{O(1/\eps^2)}\right\}\\
&= O\left(\frac{\eps}{1- \eps}\log n\right)  + \left(\frac{m}{\eps^3 \mathsf{OPT}}\right)^{O(1/\eps^2)}. \qedhere
\end{align*}
\end{proof}

\subsubsection{Average sensitivity of a combined matching algorithm}
In this section, we combine the algorithms guaranteed by Theorems~\ref{thm:matching-lexicographic} and~\ref{thm:matching-iterative-greedy-avg-degree-version} in order to get a matching algorithm with improved sensitivity.
\begin{theorem}\label{thm:matching-iterative-combined}
  Let $\eps \in (0,1)$ be a parameter.
  There exists an algorithm with approximation ratio $1-\eps$ and average sensitivity
  \[
    \mathrm{OPT}(G)^{\frac{c}{c+1}}\cdot O\left(\left(\frac{\eps}{1-\eps} \cdot \log n\right)^{\frac{1}{c+1}} + \frac{1}{\eps^{\frac{3c}{c+1}}}  \right)
  \]
  for $c = O(1/\eps^2)$.
\end{theorem}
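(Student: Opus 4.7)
The plan is to mirror almost exactly the argument used to prove Theorem~\ref{thm:matching-greedy-combined}, but with the greedy $(1/2)$-approximation building block replaced by the $(1-\eps)$-approximation guaranteed by Theorem~\ref{thm:matching-iterative-greedy-avg-degree-version}. Let $f(G) = \mathsf{OPT}^2/m$ denote the average sensitivity from Theorem~\ref{thm:matching-lexicographic} and let
\[
    g(G) \;=\; O\!\left(\frac{\eps}{1-\eps}\log n\right) + \left(\frac{m}{\eps^3\,\mathsf{OPT}}\right)^{c}
\]
with $c = O(1/\eps^2)$ denote the bound from Theorem~\ref{thm:matching-iterative-greedy-avg-degree-version}. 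After handling the trivial regime ($\mathsf{OPT}$ small or $m$ small) by running an exact maximum matching algorithm, the combined algorithm runs the lexicographic algorithm with probability $\rho(G) = g(G)/(f(G)+g(G))$ and the algorithm of Theorem~\ref{thm:matching-iterative-greedy-avg-degree-version} with probability $1-\rho(G)$. Its approximation ratio is clearly $1-\eps$ since both component algorithms achieve that.

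To bound the average sensitivity I would apply the parallel composability theorem (Theorem~\ref{thm:avg-sensitivity-distribution-of-multiple-algorithms}) with $\mathsf{H} = \mathsf{OPT}$ and obtain the bound
\[
    \frac{2\,f(G)\,g(G)}{f(G)+g(G)} \;+\; 2\,\mathsf{OPT}\cdot\E_{e\in E}\bigl[\,|\rho(G)-\rho(G-e)|\,\bigr].
\]
For the first term I would invoke the elementary inequality $\tfrac{fg}{f+g}\le f^{a}g^{1-a}$ (which is just AM--GM after rewriting as $f^{1-a}g^{a}\le f+g$) with the specific choice $a = c/(c+1)$. A direct check of the exponents shows that this choice kills the dependence on $m$: the $m$-exponent is $-a+c(1-a)=0$, the $\mathsf{OPT}$-exponent is $2a-c(1-a)=c/(c+1)$, and the $\eps$-exponent of the dominant term is $-3c/(c+1)$. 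Splitting the sum inside $g(G)$ before raising to the power $1/(c+1)$ and using $f(G)\le\mathsf{OPT}$ (because $m\ge\mathsf{OPT}$) yields exactly the advertised
\[
    \mathsf{OPT}^{c/(c+1)}\cdot O\!\left(\left(\tfrac{\eps}{1-\eps}\log n\right)^{1/(c+1)} + \eps^{-3c/(c+1)}\right).
\]

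The second term is the part that requires the most care; this is the main obstacle I anticipate. Following the pattern of Claims~\ref{clm:lipschitz-constant-g-greedy} and~\ref{clm:lipschitz-constant-f-greedy}, I would prove Lipschitz-type bounds of the form $f(G-e) = f(G)\cdot(1\pm O(1/\mathsf{OPT}))$ and $g(G-e) = g(G)\cdot(1\pm O(c/\mathsf{OPT}))$. The second of these is the only new calculation: the factor $(1-1/m)^{c}$ and $(1+1/(\mathsf{OPT}-1))^{c}$ arising from the ratio of the $(m/(\eps^{3}\mathsf{OPT}))^{c}$ terms must be expanded. Since $c = O(1/\eps^{2})$ and we may assume $\mathsf{OPT}$ is at least a sufficiently large multiple of $c$ (otherwise the trivial exact-matching branch already gives the required bound), these exponentiations remain $1 + O(c/\mathsf{OPT})$. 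Combining the two one-sided estimates gives $|\rho(G)-\rho(G-e)| = O(\rho(G)\cdot c/\mathsf{OPT})$, so the second term in the sensitivity bound is $O(c\,\rho(G)) = O(c) = O(1/\eps^{2})$, which is absorbed into the first term.

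Putting the two pieces together yields the bound stated in Theorem~\ref{thm:matching-iterative-combined}. The only subtlety beyond Theorem~\ref{thm:matching-greedy-combined} is keeping track of the fact that the $c$-th power in $g(G)$ interacts with the Lipschitz constants, which is why the trivial branch must be chosen to also cover the case $\mathsf{OPT} = O(c/\eps)$; this makes the entire argument work uniformly in $\eps$.
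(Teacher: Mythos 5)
Your proposal is correct and follows essentially the same route as the paper: parallel composition of the lexicographic algorithm and the algorithm from Theorem~\ref{thm:matching-iterative-greedy-avg-degree-version}, the weighted AM--GM bound $\frac{fg}{f+g}\le f^{c/(c+1)}g^{1/(c+1)}$ for the first term (the paper phrases this as dividing numerator and denominator by $f^{1/(c+1)}g^{c/(c+1)}$, but it is the same inequality), Lipschitz estimates on $f$ and $g$ for the second term, and a trivial branch for small $\mathsf{OPT}$ or $m$ so that the $c$-th power in $g$ does not blow up the ratio $g(G-e)/g(G)$. The only superficial difference is in the threshold for the trivial branch (the paper uses $\mathsf{OPT}<2c$ or $m<2c$ rather than $\mathsf{OPT}=O(c/\eps)$), but either choice works since the resulting $O(\mathsf{OPT})=O(c)$ sensitivity is dominated by the claimed bound.
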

\begin{proof}
	Let $c = O(1/\eps^2)$.
	The algorithm guaranteed by the theorem is given as Algorithm~\ref{alg:combined-maximum-matching-tradeoff}.
	The bounds on approximation guarantee and average sensitivity are both straightforward when $\mathsf{OPT} < 2c$ or $m < 2c$.

\begin{algorithm}
	\caption{\textsc{Combined Algorithm to $\left(1 - \eps\right)$-Approximate Maximum Matching}}\label{alg:combined-maximum-matching-tradeoff}
	\Input{undirected unweighted graph $G = (V,E)$}
	Compute $\mathsf{OPT}$.\;
	\If{$\mathsf{OPT} < 2c$ or $m < 2c$}{
		\Return an arbitrary maximum matching in $G$.
	}\Else{
		Let $f(G) \gets \frac{\mathsf{OPT}^2}{m}$ and $g(G) \gets \frac{\eps}{(1- \eps)}\cdot \log n + \left(\frac{m}{\eps^3 \mathsf{OPT}}\right)^{c}$\;
		Run the algorithm given by Theorem~\ref{thm:matching-lexicographic} with probability $\frac{g(G)}{f(G) + g(G)}$ and run the algorithm given by Theorem~\ref{thm:matching-iterative-greedy-avg-degree-version} with the remaining probability.
	}
\end{algorithm}
The approximation guarantee in the case when $\mathsf{OPT} \ge 2c$ and $m \ge 2c$ is also straightforward since Algorithm~\ref{alg:combined-maximum-matching-tradeoff} is simply a distribution over algorithms guaranteed by Theorem~\ref{thm:matching-lexicographic} and Theorem~\ref{thm:matching-iterative-greedy-avg-degree-version}.

We now bound the average sensitivity of Algorithm~\ref{alg:combined-maximum-matching-tradeoff} when  $\mathsf{OPT} \ge 2c$ and $m \ge 2c$.
Let $\rho(G)$ denote the probability $\frac{g(G)}{f(G) + g(G)}$.
By Theorem~\ref{thm:avg-sensitivity-distribution-of-algorithms}, the average sensitivity is at most
\begin{equation} \label{eqn:sensitivity-combined-iterative-greedy}
\frac{O(f(G))\cdot g(G) + O(g(G))\cdot f(G)}{f(G) + g(G)} + 2\mathsf{OPT} \cdot \E_{e \sim E} \left[|\rho(G) - \rho(G-e)|\right].
\end{equation}

We first bound the quantity $\E_{e \sim E} \left[|\rho(G) - \rho(G-e)|\right]$.
  \begin{claim}\label{clm:lipschitz-constant-g-augmenting-greedy}
	For every graph $G = (V,E)$ such that $\mathsf{OPT} \ge c+1$, and for every $e \in E$,
	$$\left(1 - \frac{c}{m}\right)\cdot g(G) \le g(G-e) \le \left(1 + \frac{c}{\mathsf{OPT} - c}\right)\cdot g(G).$$
\end{claim}
\begin{proof}
	We first prove the upper bound. We know that
	\begin{align*}
	\frac{g(G-e)}{g(G)} &\le \left(1 + \frac{\left(\frac{m-1}{\eps^3 (\mathsf{OPT}-1)}\right)^c - \left(\frac{m}{\eps^3\mathsf{OPT}}\right)^c}{\frac{\eps\log n}{(1- \eps)} + \frac{m^c}{\eps^{3c}\mathsf{OPT}^c}}\right)
	\le  \left(1 + \frac{\left(\frac{m-1}{\eps^3 (\mathsf{OPT}-1)}\right)^c - \left(\frac{m}{\eps^3\mathsf{OPT}}\right)^c}{\frac{m^c}{\eps^{3c}\mathsf{OPT}^c}}\right)\\
	&= \left(1 - \frac{1}{m}\right)^c \cdot \left(1 + \frac{1}{\mathsf{OPT} - 1}\right)^c
	\le \left(1 + \frac{c}{\mathsf{OPT} - c}\right).
	\end{align*}
	Note that the last inequality holds whenever $\textsf{OPT} > c$, because $(1 + x)^r \le 1 + \frac{rx}{1 - (r-1)x}$ for $x \in [0,\frac{1}{r-1})$ and $r > 1$.

	\noindent For the lower bound,
	\begin{align*}
	\frac{g(G-e)}{g(G)} &\ge \left(1 - \frac{\left(\frac{m}{\eps^3\mathsf{OPT}}\right)^c - \left(\frac{m-1}{\eps^3 \mathsf{OPT}}\right)^c}{\frac{\eps\log n}{(1- \eps)} + \frac{m^c}{\eps^{3c}\mathsf{OPT}^c}}\right)
	\ge \left(1 - \frac{\left(\frac{m}{\eps^3\mathsf{OPT}}\right)^c - \left(\frac{m-1}{\eps^3 \mathsf{OPT}}\right)^c}{\frac{m^c}{\eps^{3c}\mathsf{OPT}^c}}\right)\\
	&=  \left(1 - \frac{1}{m}\right)^c \ge 1 - \frac{c}{m}.\qedhere
	\end{align*}
\end{proof}
\begin{claim}\label{clm:lipschitz-constant-f-augmenting-greedy}
	For every graph $G = (V,E)$ and every $e \in E$, $$f(G) \cdot \left(1 - \frac{2}{\mathsf{OPT}}\right)\le f(G -e) \le f(G) \cdot \left(1 + \frac{1}{m-1}\right).$$
\end{claim}
\begin{proof}
	To prove the upper bound,
	\begin{align*}
	\frac{f(G-e)}{f(G)} \le \left(\frac{m}{m-1}\right) = \left(1 + \frac{1}{m-1}\right).
	\end{align*}

	\noindent For the lower bound,
	\begin{align*}
	\frac{f(G-e)}{f(G)} &\ge \left(\frac{\mathsf{OPT} - 1}{\mathsf{OPT}}\right)^2 \cdot \left(\frac{m}{m-1}\right)^2
	\ge \left(\frac{\mathsf{OPT} - 1}{\mathsf{OPT}}\right)^2 \ge 1 - \frac{2}{\mathsf{OPT}}.\qedhere
	\end{align*}
\end{proof}
\begin{claim}\label{clm:lipschitz-constant-rho-augmenting-greedy}
For every graph $G = (V,E)$ such that $\mathsf{OPT} \ge 2c$ and $m \ge 2c$, and for every $e \in E$,
$$\rho(G)\cdot \left(1 - \frac{2c}{\mathsf{OPT}-c}\right) \le \rho(G-e) \le \rho(G) \cdot \left(1 + \frac{5c}{\mathsf{OPT} - c}\right).$$
\end{claim}
\begin{proof}
\noindent Note that $\left(1 - \frac{2}{\mathsf{OPT}}\right)^{-1} \le 1 + \frac{4}{\mathsf{OPT}}$ and $\left(1 - \frac{c}{m}\right)^{-1} \le 1 + \frac{2c}{m}$ for $\mathsf{OPT} \ge 4$ and $m \ge 2c$.
We also have  $\left(1+\frac{c}{\mathsf{OPT}-c}\right)^{-1} \ge 1 - \frac{c}{\mathsf{OPT} - c}$ and $\left(1 + \frac{1}{m-1}\right)^{-1} \ge 1 - \frac{1}{m-1}$ for $\mathsf{OPT} \ge 2c$ and $m \ge 2$.



\smallskip

\noindent Combining all of the above,
\begin{align*}
\rho(G-e) &= \frac{g(G-e)}{f(G-e) + g(G-e)}\\
&\le \frac{g(G)\cdot \left(1 + \frac{c}{\mathsf{OPT} - c}\right)}{(f(G) + g(G))\cdot \min \left\{1 - \frac{c}{m}, 1 - \frac{2}{\mathsf{OPT}}\right\}}\\
&\le \rho(G) \cdot \left(1 + \frac{c}{\mathsf{OPT} - c}\right) \cdot \max \left\{1 + \frac{2c}{m}, 1 + \frac{4}{\mathsf{OPT}}\right\}\\
&\le \rho(G) \cdot \left(1 + \frac{5c}{\mathsf{OPT} - c}\right).
\end{align*}


\noindent Using similar calculations, we can see that
\begin{align*}
\rho(G-e) &\ge \rho(G) \cdot \left(1 - \frac{c}{m}\right)\cdot \min \left\{1 - \frac{c}{\mathsf{OPT} - c}, 1 - \frac{1}{m-1}\right\}
\ge \rho(G)\cdot \left(1 - \frac{2c}{\mathsf{OPT}-c}\right).\qedhere
\end{align*}
\end{proof}
\noindent Thus, for all $e \in E$, we have that $|\rho(G) - \rho(G-e)| \le \max\left\{\frac{2c}{\mathsf{OPT}-c}, \frac{5c}{\mathsf{OPT} - c}\right\}\cdot \rho(G) = \frac{5c\rho(G)}{\mathsf{OPT} - c}$. Hence, $\E_{e \sim E} [|\rho(G) - \rho(G-e)|] \le \frac{5c\rho(G)}{\mathsf{OPT} - c}$.

\medskip

\noindent Therefore, the average sensitivity of  Algorithm~\ref{alg:combined-maximum-matching-tradeoff} is at most
\begin{align*}
& \frac{O(f(G))\cdot g(G) + O(g(G))\cdot f(G)}{f(G) + g(G)} + 2\mathsf{OPT} \cdot \E_{e \sim E} [|\rho(G) - \rho(G-e)|]\\
&= O\left(\frac{f(G)^{c/(c+1)}g(G)^{1/(c+1)}}{\frac{g(G)^{1/(c+1)}}{f(G)^{1/(c+1)}}+\frac{f(G)^{c/(c+1)}}{g(G)^{c/(c+1)}}} \right) + O\left(\frac{\mathsf{OPT}c\rho(G)}{\mathsf{OPT}}\right)\\
&=
O\left(f(G)^{c/(c+1)}g(G)^{1/(c+1)}\right) + O(1/\eps^2)\\
& =
O\left( \left(\frac{\mathsf{OPT}^2}{m}\right)^{c/(c+1)} \cdot \left((\frac{\eps}{1-\eps} \log n)^{1/(c+1)} + \left(\frac{m^c}{\eps^{3c}\mathsf{OPT}^c}\right)^{1/(c+1)}\right)\right) + O(1/\eps^2)\\
& =
O\left( \left(\frac{\mathsf{OPT}^{2c/(c+1)}}{m^{c/(c+1)}} \left(\frac{\eps}{1-\eps}\right)^{1/(c+1)} \log^{1/(c+1)} n + \frac{\mathsf{OPT}^{2c/(c+1)}}{m^{c/(c+1)}}\frac{m^{c/(c+1)}}{\eps^{3c/(c+1)}\mathsf{OPT}^{c/(c+1)}}\right)\right) + O(1/\eps^2)\\
& = O\left(\mathsf{OPT}^{c/(c+1)} \left(\frac{\eps}{1-\eps}\right)^{1/(c+1)} \log^{1/(c+1)} n + \frac{\mathsf{OPT}^{c/(c+1)}}{\eps^{3c/(c+1)}}\right) + O(1/\eps^2)\\
& = O\left(\mathsf{OPT}^{c/(c+1)} \left( \left(\frac{\eps}{1-\eps}\right)^{1/(c+1)} \log^{1/(c+1)} n + \frac{1}{\eps^{3c/(c+1)}}\right)\right). \qedhere
\end{align*}

To obtain the first term of the expression resulting from the first equality, we divide both the numerator and denominator by $f(G)^{\frac{1}{c+1}}\cdot g(G)^{\frac{c}{c+1}}$. 
The second term of the first equality above follows since $\frac{\mathsf{OPT}}{\mathsf{OPT} - c} \le 2$ as $\mathsf{OPT} \ge 2c$.
\end{proof}

\subsection{Lower bound}\label{subsec:matching-exact-lower-bound}

In this section, we show a lower bound of $\Omega(n)$ for the problem of exactly computing the maximum matching in a graph.

\begin{theorem}\label{thm:matching-exact-lower-bound}
Every algorithm that exactly computes the maximum matching in a graph has average sensitivity $\Omega(n)$.
\end{theorem}
\begin{proof}
	Let $n \in \mathbb{N}$ be even.
	Consider the cycle $C_n$ on $n$ vertices.
	$C_n$ has exactly two maximum matchings $M_1$ and $M_2$ of size $n/2$ each.
	Both $M_1$ and $M_2$ consist of alternating edges of the cycle.
	Let $A$ be an algorithm that outputs $M_1$ with probability $p$ and $M_2$ with probability $1-p$.
	Assume, without loss of generality, that $p \ge \frac{1}{2}$.
	For every edge $e \in M_1$, the unique maximum matching in the odd-length path $G-e$ has Hamming distance $n-1$ from $M_1$.
	Thus, for each $e \in M_1$, the earth mover's distance between $A(G)$ and $A(G-e)$ is at least $\frac{n-1}{2}$.
	Hence, the average sensitivity of $A$ is at least $\frac{1}{n}\sum_{e \in M_1} \frac{n-1}{2} = \Omega(n)$.
\end{proof}

\section{2-Coloring}\label{sec:2-coloring}

In the \emph{$2$-coloring problem}, given a bipartite graph $G=(V,E)$, we are to output a (proper) $2$-coloring on $G$, that is, an assignment $f:V\to \{0,1\}$ such that $f(u) \neq f(v)$ for every edge $(u,v) \in E$.
Clearly this problem can be solved in linear time.
In this section, however, we show that there is no stable-on-average algorithm for the $2$-coloring problem.

\begin{theorem}\label{thm:2-coloring}
  Any (randomized) algorithm for the $2$-coloring problem has average sensitivity $\Omega(n)$.
\end{theorem}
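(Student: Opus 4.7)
The plan is to derive the lower bound by examining the joint behaviour of $\cA$ on a path $P$ and on the $n-1$ 2-paths $P-e$ obtained by deleting one edge. I will focus on $n$ odd (the case $n$ even is analogous after a parity restriction on component sizes). Assume without loss of generality that $\cA$ always outputs a proper 2-coloring, viewed as a vector in $\{0,1\}^V$. A path on $n$ vertices has exactly two proper 2-colorings, and a 2-path $P'$ with components $A,B$ has exactly four, which I label $\alpha_{00},\alpha_{01},\alpha_{10},\alpha_{11}$, where the first index selects one of the two colorings of $A$ and the second one of $B$. Since the two colorings of a component are complements, $d_{\mathrm{Ham}}(\alpha_{ij},\alpha_{i'j'})=|A|\mathbf{1}[i\ne i']+|B|\mathbf{1}[j\ne j']$.

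The core structural step will be a class-splitting lemma. Call a 2-path $P'$ \emph{balanced} if both components have size at least $n/4$. Let $N(P')$ denote the set of path neighbors of $P'$, that is, the paths obtained by adding an edge between an endpoint of $A$ and an endpoint of $B$; since both components have two endpoints, $|N(P')|=4$. For each $P\in N(P')$, the two proper 2-colorings of $P$ form either $\{\alpha_{00},\alpha_{11}\}$ (call this class $C_1$) or $\{\alpha_{01},\alpha_{10}\}$ (class $C_2$), depending on whether the canonical colorings $c_A^0$ and $c_B^0$ agree on the added edge's endpoints. A short parity calculation, using that $|A|+|B|=n$ is odd, will show that for every balanced $P'$ exactly two of the four endpoint pairs give agreement and two give disagreement, so $|C_1|=|C_2|=2$. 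This 2-2 split is where oddness of $n$ is essential: if all four neighbors lay in a single class (which does occur for $n$ even with both component sizes odd), then $\cA(P')$ concentrated on the single supported pair would be simultaneously close to all four neighbors, and the argument would collapse.

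Given the lemma, fix a balanced $P'$ and write $q_{ij}$ for the mass $\cA(P')$ assigns to $\alpha_{ij}$. For $P\in C_1$ the distribution $\cA(P)$ is supported on $\{\alpha_{00},\alpha_{11}\}$, so any coupling witnessing $d_{\mathrm{EM}}(\cA(P),\cA(P'))$ must move the $\alpha_{01}$- and $\alpha_{10}$-masses to this support, incurring cost at least $(q_{01}+q_{10})\min(|A|,|B|)$; symmetrically for $P\in C_2$. Since $|C_1|=|C_2|=2$ and the $q_{ij}$ sum to $1$, I get
\[
\sum_{P\in N(P')} d_{\mathrm{EM}}(\cA(P),\cA(P')) \;\ge\; 2\min(|A|,|B|) \;\ge\; n/2.
\]

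To combine, let $\mathcal{P}$ be the set of labeled paths on $n$ vertices. Each $P\in\mathcal{P}$ has at least $n/3$ edges whose removal yields a balanced 2-path, so by double counting (path, 2-path) incidences together with $|N(\cdot)|\le 4$, there will be at least $|\mathcal{P}|n/12$ balanced 2-paths. Summing the per-2-path lower bound over balanced $P'$,
\[
\sum_{P\in\mathcal{P}}\sum_{e\in E(P)} d_{\mathrm{EM}}(\cA(P),\cA(P-e)) \;\ge\; \frac{|\mathcal{P}|\,n}{12}\cdot\frac{n}{2} \;=\; \Omega(|\mathcal{P}|\,n^2).
\]
The left-hand side equals $(n-1)\sum_P \beta(P)\le (n-1)|\mathcal{P}|\max_P\beta(P)$, which will force $\max_P\beta(P)=\Omega(n)$ and complete the proof. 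The main obstacle is the class-splitting lemma: the other steps are routine counting, but the parity analysis that produces the exact 2-2 split between $C_1$ and $C_2$ is what rules out algorithms that place all their mass on a single pair of colorings and thereby stay close to all neighbors at once.
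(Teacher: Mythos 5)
Your proof is correct and takes a genuinely different route from the paper's. Both arguments work with the incidence structure between paths and 2-paths and exploit the fact that the proper $2$-colorings of a 2-path's path-neighbors split into two complementary pairs; the difference is in how this is turned into a lower bound. The paper's argument is a threshold-and-contradiction argument: assuming average sensitivity at most $\beta n$ with $\beta < 1/256$, it defines an incidence $(P,Q)$ to be \emph{good} if the earth-mover's distance $d_{\mathrm{EM}}(\cA(P),\cA(Q))$ is small and $Q$ has balanced components, shows at least a $3/4$-fraction of incidences are good, extracts a single $Q^*$ with three good neighbors, and derives a contradiction from the probability constraints $q_1+q_2 \ge 1 - 128\beta$ and $q_3+q_4 \ge 1 - 128\beta$. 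Your argument is a direct averaging argument: you prove an explicit class-splitting lemma (each balanced 2-path has exactly two path-neighbors in each coloring class), deduce the per-2-path lower bound $\sum_{P\in N(P')} d_{\mathrm{EM}}(\cA(P),\cA(P')) \geq 2\min(|A|,|B|) \ge n/2$, and finish by double-counting over all labeled paths. The averaging route is arguably cleaner and even yields a lower bound on the \emph{average} over paths of $\beta(P)$, not just the maximum. Your explicit class-splitting lemma also exposes a parity subtlety that the paper's proof glosses over: the step ``without loss of generality, there are adjacent paths $P_1,P_2$ with no common $2$-coloring'' implicitly needs the good neighbors of $Q^*$ not to lie all in one class, which fails exactly when both components of $Q^*$ have odd size (possible for even $n$); your restriction to $n$ odd, and to 2-paths with both components even when $n$ is even, handles this cleanly.
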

\begin{proof}
  Suppose that there is a (randomized) algorithm $\mathcal{A}$ whose average sensitivity is at most $\beta n$ for $\beta < 1/256$.
  In what follows, we assume that $n$, that is, the number of vertices in the input graph, is a multiple of $16$.

  Let $\mathcal{P}_n$ be the family of all possible paths on $n$ vertices, and let $\mathcal{Q}_n$ be the family of all possible graphs on $n$ vertices consisting of two paths.
  Note that $|\mathcal{P}_n| = n!/2$ and $|\mathcal{Q}_n| = (n-1)n!/4$.
  Consider a bipartite graph $H = (\mathcal{P}_n,\mathcal{Q}_n; E)$, where a pair $(P,Q)$ is in $E$ if and only if $Q$ can be obtained by removing an edge in $P$.
  Note that each $P \in \mathcal{P}_n$ has $n-1$ neighbors in $H$ and each $Q \in \mathcal{Q}_n$ has four neighbors in $H$.

  We say that an edge $(P,Q) \in E$ is \emph{intimate} if $d_{\mathrm{EM}}\bigl(\mathcal{A}(P),\mathcal{A}(Q)\bigr) \leq 8\beta n$.
  We observe that for every $P \in \mathcal{P}_n$, at least a $7/8$-fraction of the edges incident to $P$ are intimate;
  otherwise
  \[
    \E_{e \sim E(P)}\left[d_{\mathrm{EM}}\bigl(\mathcal{A}(P),\mathcal{A}(P-e)\bigr)\right] > \frac{1}{8} \cdot 8\beta n = \beta n,
  \]
  which is a contradiction, where $E(P)$ denotes the set of edges in $P$.

  We say that a graph $Q \in \mathcal{Q}_n$ is \emph{heavy} if both components of $Q$ have at least $n/16$ vertices, and say that an edge $(P,Q) \in E$ is \emph{heavy} if $Q$ is heavy.
  We observe that for every $P \in \mathcal{P}_n$, at least a $7/8$-fraction of the edges incident to $P$ are heavy.

  We say that an edge $(P,Q) \in E$ is \emph{good} if it is intimate and heavy.
  Observe that for every $P \in \mathcal{P}_n$, by the union bound, at least a $3/4$-fraction of the edges incident to $P$ are good.
  In particular, this means that the fraction of good edges in $H$ is at least $3/4$.
  Hence, there exists $Q^* \in \mathcal{Q}_n$ that has at least three good incident edges;
  otherwise the fraction of good edges in $H$ is at most $2/4 = 1/2$, which is a contradiction.

  Let $f_1,\ldots,f_4$ be the four $2$-colorings of $Q^*$.
  As $Q^*$ has three good incident edges, without loss of generality, there are adjacent paths $P_1, P_2 \in \mathcal{P}_n$ such that both $(P_1,Q^*)$ and $(P_2,Q^*)$ are good, and there is no assignment that is a $2$-coloring for both $P_1$ and $P_2$.
  Without loss of generality, we assume that $f_1,f_2$ are $2$-colorings of $P_1$, and $f_3,f_4$ are $2$-colorings of $P_2$.
  Note that $d_{\mathrm{Ham}}(f_i,f_j) \geq n/16$ for $i \neq j$ because $Q$ is heavy.
  Let $q_i = \Pr[\mathcal{A}(Q^*) = f_i]$ for $i \in [4]$.
  As the edge $(P_1,Q^*)$ is intimate, we have
  \begin{align*}
    8\beta n
    & \geq d_{\mathrm{EM}}\bigl(\mathcal{A}(P_1),\mathcal{A}(Q^*)\bigr)
    \geq \frac{n}{16}\left( \bigl|\Pr[\mathcal{A}(P_1) = f_1] - q_1\bigr| + \bigl|\Pr[\mathcal{A}(P_1) = f_2] - q_2\bigr|  + q_3 + q_4\right)  \\
    & = \frac{n}{16}\left( \bigl|\Pr[\mathcal{A}(P_1) = f_1] - q_1\bigr| + \bigl|\Pr[\mathcal{A}(P_1) = f_2] - q_2\bigr|  + 1 - q_1 - q_2 \right)
  \end{align*}
  and hence we must have $q_1 + q_2 \geq 1 - 128 \beta$.
  Considering $d_{\mathrm{EM}}\bigl(\mathcal{A}(P_2),\mathcal{A}(Q^*)\bigr)$, we also have $q_3 + q_4 \geq 1-128 \beta$.
  However,
  \[
    1 = q_1 + q_2 + q_3 + q_4 \geq (1 - 128\beta) + (1 - 128 \beta) = 2 - 256\beta > 1
  \]
  as $\beta < 1/256$, which is a contradiction.
\end{proof}


\section{General Results on Average Sensitivity}\label{sec:general}
In this section, we state and prove some basic properties of average sensitivity and show that locality guarantees of solutions output by an algorithm imply low average sensitivity for that algorithm.
\subsection{Bounds on \texorpdfstring{$k$}{k}-average sensitivity from bounds on average sensitivity}

In this section, we prove Theorem~\ref{thm:sensitivity-against-deleting-multiple-edges}, which says that, if an algorithm is stable-on-average against deleting a single edge, it is also stable-on-average against deleting multiple edges.
We restate the theorem here.
\sadme*
\begin{proof}
	We have
	\begin{align*}
	& \E_{\{e_1,\ldots,e_k\} \sim \binom{E}{k}}\left[d_{\mathrm{EM}}\bigl(\mathcal{A}(G),\mathcal{A}(G-\{e_1,\ldots,e_k\})\bigr)\right] \\
	\leq & \E_{\{e_1,\ldots,e_k\} \sim \binom{E}{k}}\left[\sum_{i=1}^{k}d_{\mathrm{EM}}\bigl(\mathcal{A}(G-\{e_1,\ldots,e_{i-1}\}),\mathcal{A}(G-\{e_1,\ldots,e_i \}) \bigr) \right]\\
	= & \E_{e_1 \sim E} \Bigl[ d_{\mathrm{EM}}\bigl(\mathcal{A}(G),\mathcal{A}(G-\{e_1\})\bigr) + \E_{e_2\sim E \setminus \{e_1\}}\Bigl[d_{\mathrm{EM}}\bigl(\mathcal{A}(G-\{e_1\}),\mathcal{A}(G-\{e_1,e_2\})\bigr) + \cdots \\
	& \qquad + \E_{e_k \sim E\setminus \{e_1,\dots,e_{k-1}\}}\Bigl[d_{\mathrm{EM}}\bigl(\mathcal{A}(G-\{e_1,\ldots,e_{k-1}\}),\mathcal{A}(G-\{e_1,\ldots,e_k \}) \Bigr)\dots\Bigr]\Bigr]\Bigr] \\
	\le & f(n,m) + \E_{e_1 \sim E}\Bigl[\beta(G-\{e_1\}) + \E_{e_2 \sim E\setminus \{e_1\}}\Bigl[\beta(G-\{e_1,e_2\}) + \cdots \\&+ \E_{e_{k-1} \sim E\setminus\{e_1,\dots e_{k-2}\}}\Bigl[\beta(G-\{e_1,\ldots,e_{k-1}\})\dots\Bigr]\Bigr]\Bigr]\\
	\leq & \sum_{i=1}^kf(n,m-i+1).
	\end{align*}
	Here, the first inequality is due to the triangle inequality.
\end{proof}
\subsection{Sequential composition}

In this section, we state and prove our two sequential composition theorems Theorem~\ref{thm:sequential-composition-TV-and-EM} and Theorem~\ref{thm:composition}.

\scTaE*
\begin{proof}
	Consider $G = (V,E)$ and let $e \in E$. We bound the earth mover's distance between $\cA(G)$ and $\cA(G-e)$ as follows.
	For a distribution $\cD$, we use $f_\cD$ to denote its probability mass function.
	We know that for all $S_1 \in \mathcal{S}_1$ and $S_2 \in \mathcal{S}_2$
	$$f_{(\cA_1(G), \cA_2(G,S_1))}(S_1, S_2) = f_{\cA_1(G)}(S_1) \cdot f_{\cA_2(G,S_1)}(S_2),$$
	where $(\cA_1(G), \cA_2(G,S_1))$ denotes the joint distribution of $\cA_1(G)$ and $\cA_2(G,S_1)$.
	Fix $S_1 \in \mathcal{S}_1$.
	For each $S_2 \in \mathcal{S}_2$, we transform probabilities of the form $f_{(\cA_1(G), \cA_2(G,S_1))}(S_1, S_2)$ to $f_{\cA_1(G)}(S_1)\cdot f_{\cA_2(G-e,S_1)}(S_2)$. This incurs a total cost of $f_{\cA_1(G)}(S_1) \cdot d_{\textrm{EM}}(\cA_2(G, S_1), \cA_2(G-e, S_1))$.
	We can now, for each $S_1 \in \mathcal{S}_1$ and $S_2 \in \mathcal{S}_2$, transform the probability $f_{\cA_1(G)}(S_1)\cdot f_{\cA_2(G-e,S_1)}(S_2)$ into $f_{\cA_1(G-e)}(S_1)\cdot f_{\cA_2(G-e,S_1)}(S_2)$ at a cost of at most $d_{\textrm{TV}}(\cA_1(G), \cA_1(G-e)) \cdot \mathsf{H}$, where $\mathsf{H}$ denotes the maximum Hamming weight among those of solutions obtained by running $\cA$ on $G$ and $\{G-e\}_{e \in E}$.
	Thus, the earth mover's distance between $\cA(G)$ and $\cA(G-e)$ is at most $$d_{\textrm{TV}}(\cA_1(G), \cA_1(G-e)) \cdot \mathsf{H} + \int_{\mathcal{S}_1} f_{\cA_1(G)}(S_1) \cdot d_{\textrm{EM}}\bigl(\cA_2(G, S_1), \cA_2(G-e, S_1)\bigr) \; \mathrm{d}S_1.$$
	Hence, the average sensitivity of $\cA$ can be bounded as:
	\begin{align*}
	\E_{e \sim E} \left[d_{\textrm{EM}}(\cA(G), \cA(G-e))\right] &\le \mathsf{H}\cdot \E_{e \sim E} \left[d_{\textrm{TV}}(\cA_1(G), \cA_1(G-e))\right] \\&+ \E_{e \sim E} \left[\int_{S_1 \in \mathcal{S}_1} f_{\cA_1(G)}(S_1) \cdot d_{\textrm{EM}}(\cA_2(G, S_1), \cA_2(G-e, S_1))\; \mathrm{d}S_1\right]\\
	&\le \mathsf{H} \gamma_1(G) + \E_{S_1 \sim \cA_1(G)}\left[d_{\textrm{EM}}(\cA_2(G, S_1), \cA_2(G-e, S_1))\right] \\
	&= \mathsf{H} \gamma_1(G) + \E_{S_1 \sim \cA_1(G)}\left[\E_{e \sim E}d_{\textrm{EM}}(\cA_2(G, S_1), \cA_2(G-e, S_1))\right]\\
	&= \mathsf{H} \gamma_1(G) + \E_{S_1 \sim \cA_1(G)}\left[\beta_2^{(S_1)}(G)\right].
	\end{align*}
	We are able to interchange the order of expectations because of Fubini's theorem~\cite{F1907}.
\end{proof}

\noindent The following theorem states the composition of average sensitivity with respect to the total variation distance.
\c*

\noindent Theorem~\ref{thm:composition} can be immediately obtained by iteratively applying Lemma~\ref{lem:composition}.

\begin{lemma}\label{lem:composition}
	Consider two randomized algorithms $\mathcal{A}_1: \mathcal{G} \to \mathcal{S}_1,\mathcal{A}_2: \mathcal{G} \times \mathcal{S}_1 \to \mathcal{S}_2$ for a graph problem.
	Suppose that the average sensitivity of $\mathcal{A}_1$ is $\gamma_1(G)$ and the average sensitivity of $\mathcal{A}_2(\cdot,S_1)$ is $\gamma_2(G)$ for any $S_1 \in \mathcal{S}_1$, both with respect to the total variation distance.
	Let $\mathcal{A}: \mathcal{G} \to \mathcal{S}_2$ be a randomized algorithm obtained by composing $\mathcal{A}_1$ and $\mathcal{A}_2$, that is, $\mathcal{A}(G) = \mathcal{A}_2(G,\mathcal{A}_1(G))$.
	Then, the average sensitivity of $\mathcal{A}$ is $\gamma_1(G)+\gamma_2(G)$ with respect to the total variation distance.
\end{lemma}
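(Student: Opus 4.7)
The plan is to decompose $d_{\mathrm{TV}}(\cA(G), \cA(G-e))$ via the triangle inequality by inserting the intermediate distribution $\cA_2(G-e, \cA_1(G))$, and then bound each of the two resulting terms by a standard property of total variation distance: convexity for one term, and the data-processing inequality for the other. Fix an edge $e \in E$. The triangle inequality gives
\begin{align*}
d_{\mathrm{TV}}(\cA(G), \cA(G-e))
&\leq d_{\mathrm{TV}}\bigl(\cA_2(G, \cA_1(G)),\, \cA_2(G-e, \cA_1(G))\bigr) \\
&\quad + d_{\mathrm{TV}}\bigl(\cA_2(G-e, \cA_1(G)),\, \cA_2(G-e, \cA_1(G-e))\bigr).
\end{align*}

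For the first term, I view $\cA_2(G, \cA_1(G))$ and $\cA_2(G-e, \cA_1(G))$ as mixtures over $S_1 \sim \cA_1(G)$ (using that the internal random bits of $\cA_1$ and $\cA_2$ are independent), and apply the convexity of $d_{\mathrm{TV}}$ to get
\[
d_{\mathrm{TV}}\bigl(\cA_2(G, \cA_1(G)),\, \cA_2(G-e, \cA_1(G))\bigr)
\leq \E_{S_1 \sim \cA_1(G)}\bigl[d_{\mathrm{TV}}(\cA_2(G, S_1),\, \cA_2(G-e, S_1))\bigr].
\]
For the second term, applying the same randomized map $\cA_2(G-e, \cdot)$ to two distributions over $\cS_1$ cannot increase the total variation distance (the data-processing inequality), so
\[
d_{\mathrm{TV}}\bigl(\cA_2(G-e, \cA_1(G)),\, \cA_2(G-e, \cA_1(G-e))\bigr)
\leq d_{\mathrm{TV}}\bigl(\cA_1(G),\, \cA_1(G-e)\bigr).
\]

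Taking expectation over a uniformly random $e \sim E$, interchanging the two expectations in the first bound (by Fubini's theorem, as the integrand is nonnegative), and invoking the two sensitivity hypotheses yields
\begin{align*}
\E_{e \sim E}\bigl[d_{\mathrm{TV}}(\cA(G), \cA(G-e))\bigr]
&\leq \E_{S_1 \sim \cA_1(G)}\Bigl[\E_{e \sim E}\bigl[d_{\mathrm{TV}}(\cA_2(G, S_1), \cA_2(G-e, S_1))\bigr]\Bigr] \\
&\quad + \E_{e \sim E}\bigl[d_{\mathrm{TV}}(\cA_1(G), \cA_1(G-e))\bigr] \\
&\leq \gamma_2(G) + \gamma_1(G),
\end{align*}
which is the desired bound. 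The only subtle point is that the hypothesis bound $\gamma_2(G)$ on the sensitivity of $\cA_2(\cdot, S_1)$ must hold \emph{uniformly} in $S_1 \in \cS_1$, so that the outer $\E_{S_1}$ collapses after the Fubini swap; this is exactly what the statement provides. There is no genuine obstacle here: the proof is essentially a triangle inequality sandwiched between a convexity step and a data-processing step, and is standard once the intermediate distribution is chosen.
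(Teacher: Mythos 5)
Your proof is correct and follows essentially the same approach as the paper's: a triangle inequality through an intermediate mixture distribution, with one term bounded by convexity of total variation distance and the other by the data-processing inequality. The paper's proof carries out the same decomposition by direct integral manipulation (with the slightly different intermediate $\cA_2(G,\cA_1(G-e))$), so the two arguments differ only in exposition, not in substance.
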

\begin{proof}
	For a distribution $\mathcal{D}$, we use $f_\mathcal{D}$ to denote its probability mass function.
	Consider a graph $G = (V,E)$.
	\noindent Note that
	$$f_{\mathcal{A}(G)}(S_2) = \int_{\mathcal{S}_1} f_{\mathcal{A}_2(G,S_1)}(S_2) f_{\mathcal{A}_1(G)}(S_1) \; \mathrm{d}S_1.$$ Then we have that, for $e \in E$,
	\begin{align*}
	& d_{\mathrm{TV}}\bigl(\mathcal{A}(G),\mathcal{A}(G-e)\bigr)  \\
	& =
	\frac{1}{2}\int_{\mathcal{S}_2} \left|\int_{\mathcal{S}_1} f_{\mathcal{A}_2(G,S_1)}(S_2) f_{\mathcal{A}_1(G)}(S_1)\;\mathrm{d}S_1
	-
	\int_{\mathcal{S}_1} f_{\mathcal{A}_2(G-e,S_1)}(S_2) f_{\mathcal{A}_1(G-e)}(S_1)\; \mathrm{d}S_1 \right|\; \mathrm{d}S_2\\
	& =
	\frac{1}{2}\int_{\mathcal{S}_2} \biggl|\int_{\mathcal{S}_1} f_{\mathcal{A}_2(G,S_1)}(S_2) \Bigl(f_{\mathcal{A}_1(G)}(S_1)-f_{\mathcal{A}_1(G-e)}(S_1)\Bigr) \; \mathrm{d}S_1
	-  \\
	& \qquad \quad \int_{\mathcal{S}_1} \Bigl(f_{\mathcal{A}_2(G-e,S_1)}(S_2) - f_{\mathcal{A}_2(G,S_1)}(S_2)\Bigr) f_{\mathcal{A}_1(G-e)}(S_1) \; \mathrm{d}S_1\biggr| \; \mathrm{d}S_2\\
	& \leq
	\frac{1}{2}\int_{\mathcal{S}_1}\biggl|f_{\mathcal{A}_1(G)}(S_1)-f_{\mathcal{A}_1(G-e)}(S_1) \biggr|\;\mathrm{d}S_1 \cdot \int_{\mathcal{S}_2} f_{\mathcal{A}_2(G,S_1)}(S_2)\;\mathrm{d}S_2
	+  \\
	& \qquad \int_{\mathcal{S}_1} f_{\mathcal{A}_1(G-e)}(S_1)\;\mathrm{d}S_1 \cdot \frac{1}{2}\int_{\mathcal{S}_2}  \biggl| f_{\mathcal{A}_2(G-e,S_1)}(S_2) - f_{\mathcal{A}_2(G,S_1)}(S_2)\biggr|\;\mathrm{d}S_2   \\
	& = \frac{1}{2}\int_{\mathcal{S}_1}\biggl|f_{\mathcal{A}_1(G)}(S_1)-f_{\mathcal{A}_1(G-e)}(S_1) \biggr|\;\mathrm{d}S_1
	+ \\
	& \qquad \int_{\mathcal{S}_1} f_{\mathcal{A}_1(G-e)}(S_1) \;\mathrm{d}S_1 \cdot \frac{1}{2}\int_{\mathcal{S}_2}  \biggl| f_{\mathcal{A}_2(G-e,S_1)}(S_2) - f_{\mathcal{A}_2(G,S_1)}(S_2)\biggr| \;\mathrm{d}S_2  \\
	& = d_{\mathrm{TV}}\bigl(\cA_1(G), \cA_1(G-e)\bigr) + \int_{\mathcal{S}_1} f_{\mathcal{A}_1(G-e)}(S_1) \cdot d_{\mathrm{TV}}\bigl(\cA_2(G, S_1), \cA_2(G-e, S_1)\bigr) \;\mathrm{d}S_1.
	\end{align*}

	Hence, the average sensitivity of $\cA$ with respect to the total variation distance can be bounded as,
	\begin{align*}
	\E_{e \sim E} \left[d_{\mathrm{TV}}\bigl(\mathcal{A}(G),\mathcal{A}(G-e)\bigr)\right] &\le \E_{e \sim E} \left[d_{\mathrm{TV}}\bigl(\cA_1(G), \cA_1(G-e)\bigr)\right] +\\
	& \qquad \E_{e \sim E} \left[\int_{\mathcal{S}_1} f_{\mathcal{A}_1(G-e)}(S_1) \cdot d_{\mathrm{TV}}\bigl(\cA_2(G, S_1), \cA_2(G-e, S_1)\bigr)\;\mathrm{d}S_1\right]\\
	&\le \gamma_1(G) + \int_{\mathcal{S}_1} f_{\mathcal{A}_1(G-e)}(S_1)\;\mathrm{d}S_1 \cdot \gamma_2(G) = \gamma_1(G) + \gamma_2(G). \qedhere
	\end{align*}
\end{proof}

\subsection{Parallel composition}

In this section, we prove Theorem~\ref{thm:avg-sensitivity-distribution-of-multiple-algorithms}, which bounds the average sensitivity of an algorithm obtained by running different algorithms according to a distribution in terms of the average sensitivities of the component algorithms.
We restate the theorem here.
\asdma*
\begin{proof}
	Consider a graph $G = (V,E)$.
	For a solution $S$, let $p^G(S)$ denote the probability that $S$ is output on input $G$ by $\cA$.
	Let $p_i^G(S)$ denote the probability that $S$ is output on input $G$ by $\cA_i$.
	For every solution $S$, we know that $p^G(S) = \sum_{i \in [\ell]} \rho_i(G) \cdot p_i^G(S)$.

	Let $\cA(G)$ denote the output distribution of $\cA$ on $G$.
	Fix $e \in E$.
	We first bound the earth mover's distance between $\cA(G)$ and $\cA(G-e)$.
	In order to transform $\cA(G)$ into $\cA(G-e)$, we first transform $p^G(S)$, for each solution $S$, into $\sum_{i \in [\ell]} \rho_i(G) \cdot p_i^{G-e}(S)$. This can be done at a cost of at most $\sum_{i \in [\ell]} \rho_i(G) \cdot d_{\text{EM}}(\cA_i(G), \cA_i(G-e))$.

	We now convert $\sum_{i \in [\ell]} \rho_i(G) \cdot p_i^{G-e}(S)$, for each solution $S$, into $\sum_{i \in [\ell]}\rho_i(G-e) \cdot p_i^{G-e}(S)$ at a cost of at most $2\mathsf{H}\cdot \frac{1}{2}\sum_{i \in [\ell]}|\rho_i(G) - \rho_i(G-e)|$, where $\frac{1}{2}\sum_{i \in [\ell]}|\rho_i(G) - \rho_i(G-e)|$ is the total variation distance between the probability distributions with which $\cA$ selects the algorithms on inputs $G$ and $G-e$.
	Hence, the average sensitivity of $\cA$ is at most
	\[
	\sum_{i \in [\ell]}\rho_i(G)\cdot \beta_i(G) + \mathsf{H}\cdot \E_{e \sim E}\left[\sum_{i \in [\ell]}|\rho_i(G) - \rho_i(G-e)|\right].\qedhere
	\]
\end{proof}

\noindent We separately state the special case of Theorem~\ref{thm:avg-sensitivity-distribution-of-multiple-algorithms} for $\ell = 2$.
\begin{theorem}\label{thm:avg-sensitivity-distribution-of-algorithms}
	Let $\cA_1$ and $\cA_2$ be two algorithms for a graph problem with average sensitivities $\beta_1(G)$ and $\beta_2(G)$, respectively.
	Let $\cA$ be an algorithm that, given a graph $G$, runs $\cA_1$ with probability $\rho(G)$ and runs $\cA_2$ with the remaining probability.
	Let $\mathsf{H}$ denote the maximum Hamming weight among those of solutions obtained by running $\cA$ on $G$ and $\{G-e\}_{e \in E}$.
	Then the average sensitivity of $\cA$ is at most $\rho(G)\cdot \beta_1(G) + (1- \rho(G))\cdot \beta_2(G) + 2\mathsf{H}\cdot \E_{e \sim E}\left[|\rho(G) - \rho(G-e)|\right]$.
\end{theorem}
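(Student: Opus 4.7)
The plan is to derive Theorem~\ref{thm:avg-sensitivity-distribution-of-algorithms} as an immediate specialization of Theorem~\ref{thm:avg-sensitivity-distribution-of-multiple-algorithms}, which has already been established. Concretely, I instantiate the multi-algorithm composition with $k=2$, $\rho_1(G) := \rho(G)$, $\rho_2(G) := 1-\rho(G)$, and $\beta_1,\beta_2$ as in the statement. This directly fits the hypothesis of Theorem~\ref{thm:avg-sensitivity-distribution-of-multiple-algorithms} since $\rho_1(G)+\rho_2(G)=1$.

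Applying that theorem yields the bound
\[
\sum_{i \in \{1,2\}} \rho_i(G)\cdot\beta_i(G) + \mathsf{H}\cdot \E_{e \in E}\Bigl[\sum_{i \in \{1,2\}}|\rho_i(G)-\rho_i(G-e)|\Bigr]
\]
on the average sensitivity of $\cA$. The first sum is exactly $\rho(G)\cdot \beta_1(G) + (1-\rho(G))\cdot \beta_2(G)$. For the second sum I note that $|(1-\rho(G))-(1-\rho(G-e))|=|\rho(G)-\rho(G-e)|$, so $\sum_{i\in\{1,2\}}|\rho_i(G)-\rho_i(G-e)| = 2|\rho(G)-\rho(G-e)|$. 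Substituting back and taking the expectation over $e$ gives precisely
\[
\rho(G)\cdot \beta_1(G) + (1-\rho(G))\cdot \beta_2(G) + 2\mathsf{H}\cdot \E_{e \in E}\bigl[|\rho(G)-\rho(G-e)|\bigr],
\]
which is the claimed bound.

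Because this is strictly a specialization and requires no new argument beyond the two-line simplification above, there is no genuine obstacle; the only care needed is to verify that the definition of $\mathsf{H}$ (the maximum Hamming weight among outputs of $\cA$ on $G$ and on $\{G-e\}_{e\in E}$) matches in both statements, which it does by construction. Thus the proof reduces to citing Theorem~\ref{thm:avg-sensitivity-distribution-of-multiple-algorithms} and collecting the two symmetric $|\rho(G)-\rho(G-e)|$ terms into the factor $2$.
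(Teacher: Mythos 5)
Your proof is correct and matches the paper exactly: the paper itself introduces Theorem~\ref{thm:avg-sensitivity-distribution-of-algorithms} as the special case $k=2$ of Theorem~\ref{thm:avg-sensitivity-distribution-of-multiple-algorithms}, and your observation that $|(1-\rho(G))-(1-\rho(G-e))|=|\rho(G)-\rho(G-e)|$ is precisely the simplification needed to obtain the factor $2$.
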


\subsection{Sublinearity implies low average sensitivity}
In this section, we prove Theorem~\ref{thm:generic-transformation-from-local-oracle-to-stable-algorithm}, which show that the existence of a sublinear-time solution oracle (Definition~\ref{def:solution-oracle}) for an algorithm $\cA$ implies that the average sensitivity of $\cA$ is bounded by the query complexity of that oracle.

\gtflotsa*
\begin{proof}
	We prove the theorem for the case that solutions output by $\cA$ are subsets of edges of the input graph. It can be easily modified to work for the case that the solutions output by $\cA$ are subsets of vertices of the input graph in which case, we will use the technical condition that $n \le m$.

	Without loss of generality, assume that $\cA$ uses $r(n)$ random bits when run on graphs of $n$ vertices\footnote{If $r(G)$ is the length of the random string used for $G$, we can simply set $r(n) = \max\{r(G): G=(V,E), |V| = n\}$. If we do not need $r(n)$ bits for some particular graph $G$ on $n$ vertices, we can just throw away the unused bits.}.
	Consider a graph $G = (V,E)$ that ${\cal O}$ gets access to.
	For $e \in E$ and a string $\pi \in \{0,1\}^{r(n)}$, let $Q_{e, \pi}$ denote the set of edges in $E$ queried by ${\cal O}$ on input $e$, while simulating the run of $\cA$ with $\pi$ as the random string.
	The set $Q_{e, \pi}$ denotes the set of edges $e'$ such that the status of $e$ in the solutions output by $\cA$ with randomness $\pi$ on inputs $G$ and $G-e'$ could be different.
	For each edge $e' \in E$ and string $\pi \in \{0,1\}^{r(n)}$, define $R_{e', \pi}$ as the set of edges $e \in E$ such that $e' \in Q_{e, \pi}$.

	By definition, for each $\pi \in \{0,1\}^{r(n)}$, we have $\sum_{e \in E} |R_{e, \pi}| = \sum_{e \in E} |Q_{e, \pi}|$.
	Hence we have: $$\sum_{\pi \in \{0,1\}^{r(n)}}\sum_{e \in E} |R_{e, \pi}|= \sum_{\pi \in \{0,1\}^{r(n)}}\sum_{e \in E} |Q_{e, \pi}|,$$ and
	$$\E_{\pi \sim \{0,1\}^{r(n)}}\E_{e \sim E} |R_{e, \pi}| = \E_{\pi \sim \{0,1\}^{r(n)}}\E_{e \sim E} |Q_{e, \pi}|\le q(G),$$
	where the last inequality follows from our assumption on ${\cal O}$.

	For $\pi \in \{0,1\}^{r(n)}$ and $e \in E$, the set $R_{e, \pi}$ contains the set of edges whose presence in the solution could be affected by the removal of $e$ from $G$. Therefore, it is a superset of the set of edges contained in the symmetric difference between the outputs of $\cA$ on inputs $G$ and $G-e$ when run with $\pi$ as the random string.

	Let $\cH_{\cA, \pi}(G,G')$ denote the Hamming distance between the outputs of the algorithm $\cA$ on inputs $G$ and $G'$ when run with $\pi$ as the random string.
	As per this notation, for each $e\in E$, $$\E_{\pi \in \{0,1\}^{r(n)}} \cH_{\cA, \pi}(G, G-e) \le \E_{\pi \in \{0,1\}^{r(n)}} |R_{e, \pi}|.$$

\noindent The following claim relates the quantity on the left hand side of the above inequality with the average sensitivity of $\cA$.

	\begin{claim}\label{clm:sensitivity-atmost-expected-Hamming-distance}
		The average sensitivity of $\cA$ is bounded as $$\beta(G) \le \mathop{\E_{e\in E(G)}} \mathop{\E_{\pi \in \{0,1\}^{r(n)}}} \cH_{\cA, \pi}(G,G-e).$$
	\end{claim}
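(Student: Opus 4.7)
The claim is an immediate consequence of the variational (coupling) definition of the earth mover's distance. Recall that $d_{\mathrm{EM}}(\cA(G),\cA(G-e))$ is the minimum, over all couplings (joint distributions) of $\cA(G)$ and $\cA(G-e)$, of the expected Hamming distance between the two coupled outputs. My plan is to exhibit one specific coupling and use it as an upper bound.

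The natural coupling to use is the one where we feed the \emph{same} random string $\pi \in \{0,1\}^{r(n)}$ to both executions. That is, sample $\pi$ uniformly from $\{0,1\}^{r(n)}$ once, and let the pair of outputs be $(\cA_\pi(G), \cA_\pi(G-e))$. Since each marginal is drawn according to the correct distribution (because $\pi$ is uniform), this is a valid coupling of $\cA(G)$ and $\cA(G-e)$. Under this coupling, the expected Hamming distance between the two outputs is exactly $\E_{\pi}\, \cH_{\cA,\pi}(G, G-e)$. Therefore, by the variational characterization of $d_{\mathrm{EM}}$,
\[
d_{\mathrm{EM}}\bigl(\cA(G),\cA(G-e)\bigr) \;\le\; \E_{\pi \in \{0,1\}^{r(n)}}\, \cH_{\cA,\pi}(G, G-e).
\]

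Taking expectation over a uniformly random edge $e \in E$ gives
\[
\beta(G) \;=\; \E_{e \in E}\, d_{\mathrm{EM}}\bigl(\cA(G),\cA(G-e)\bigr) \;\le\; \E_{e \in E}\, \E_{\pi \in \{0,1\}^{r(n)}}\, \cH_{\cA,\pi}(G, G-e),
\]
which is what we wanted to prove.

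There is essentially no obstacle: the only subtlety is recognizing that feeding identical random bits to $\cA$ on $G$ and $G-e$ yields a valid coupling (the Hamming distance under this coupling generally is not the optimal transport cost, but we only need an upper bound). This matches the remark made earlier in the discussion of the ``alternative notion of average sensitivity'' (equation (\ref{eq:alternative-randomized-sensitivity})), which the authors already observed upper-bounds (\ref{eq:randomized-sensitivity}) by precisely this coupling argument.
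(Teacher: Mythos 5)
Your proof is correct and is essentially the same argument as the paper's: the paper explicitly constructs a transport plan by moving, for each string $\pi$, a mass of $2^{-r(n)}$ from the output on $G$ to the corresponding output on $G-e$, which is exactly the same-randomness coupling you describe, just phrased in terms of optimal transport rather than couplings. Both yield the bound $d_{\mathrm{EM}}(\cA(G),\cA(G-e)) \le \E_\pi\,\cH_{\cA,\pi}(G,G-e)$ and then average over $e$.
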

	\begin{proof}
		Fix $G \in \mathcal{G}$ and $e \in E(G)$.
		We first bound the earth mover's distance between $\cA(G)$ and $\cA(G-e)$, where $\cA(G)$ and $\cA(G-e)$ are the output distributions of $\cA$ on inputs $G$ and $G-e$, respectively.
		For $S \in \cS$, let $p_{G}(S)$ and $p_{G-e}(S)$ denote the probabilities that $\cA$ outputs $S$ on $G$ and $G-e$, respectively.
		We start with $\cA(G)$.
		Consider a string $\pi \in \{0,1\}^{r(n)}$.
		Let $S \in \cS$ denote the output of $\cA$ on input $G$ when using the string $\pi$ as its random string.
		Let $S'$ denote the output that is generated when running $\cA$ on input $G-e$ with $\pi$ as the random string.
		We move a mass of $\frac{1}{2^{r(n)}}$ (corresponding to the string $\pi$) from $p_{G}(S)$ to $p_G(S')$ at a cost of $\frac{\ham(S,S')}{2^{r(n)}}$.
		Moving masses corresponding to every string $\pi \in \{0,1\}^{r(n)}$ this way, we can transform $\cA(G)$ to $\cA(G-e)$.
		The total cost incurred during this transformation is $\mathop{\E_{\pi \in \{0,1\}^{r(n)}}} \cH_{\cA, \pi}(G,G-e)$.
		Therefore the earth mover's distance between $\cA(G)$ and $\cA(G-e)$ is at most $\mathop{\E_{\pi \in \{0,1\}^{r(n)}}} \cH_{\cA, \pi}(G,G-e)$.
		Therefore the average sensitivity of $\cA$ is $\beta(G) \le \mathop{\E_{e \in E(G)}} \mathop{\E_{\pi \in \{0,1\}^{r(n)}}} \cH_{\cA, \pi}(G,G-e)$.
	\end{proof}

	\noindent Therefore, the average sensitivity of $\cA$ is:
	\[
	\beta(G) \le \mathop{\E_{e \sim E}} \mathop{\E_{\pi \in \{0,1\}^{r(n)}}} \cH_{\cA, \pi}(G, G-e)
	\le \mathop{\E_{e \sim E}} \E_{\pi \in \{0,1\}^{r(n)}} |R_{e, \pi}| \le q(G).\qedhere
	\]
\end{proof}

We now prove Corollary~\ref{cor:LCA-to-stable-on-average-algorithm} which says that the existence of an LCA (Definition~\ref{def:lca}) for a graph problem implies the existence of a stable-on-average algorithm for the same problem.

\Ltsoaa*
\begin{proof}
	Assume without loss of generality that each solution in $\cS$ is a subset of edges of its preimage with respect to $\mathcal{P}$.
	Consider the algorithm $\cA$ that, on input $G = (V,E)$, constructs a solution to $\mathcal{P}$ by running $\mathcal{L}$ on each edge $e \in E$ and combining the outputs of $\mathcal{L}$.
	It is clear	that $\mathcal{L}$ is a solution oracle (Definition~\ref{def:solution-oracle}) for the algorithm $\mathcal{A}$.
	Hence, the average sensitivity of $\cA$ is upper bounded by the expected number of queries made by $\mathcal{L}$, which is at most $q(|V|) + |E|\cdot \delta(|V|)$.
\end{proof}

\paragraph{Acknowledgments.}
We are grateful to anonymous reviewers for suggesting a major improvement to the average sensitivity analysis of Algorithm~\ref{alg:matching-rand-greedy}.
We thank Tasuku Soma and Samson Zhou for several helpful discussions.
We extend our gratitude to Sofya Raskhodnikova for helpful comments that improved the presentation of this article.


\bibliographystyle{plain}
\bibliography{stabilityOfAlgorithms}


\appendix
\section{Average Sensitivity of Exponential Mechanism}\label{sec:exponential-appendix}
In this section, we prove Lemma~\ref{lem:exponential-mechanism}.
\begin{proof}[Proof of Lemma~\ref{lem:exponential-mechanism}]
Let $t > 0$ be a parameter.
	Note that any index $i \in [n]$ with $\bm{x}(i) > \mathsf{OPT} + \log n/\eta + t/\eta$ has probability at most $e^{-t}/n$ of being sampled by $\mathcal{A}$.
	Hence, by a union bound, for every $t > 0$
	\[
	\Pr_{i \sim \mathcal{A}(\bm{x})}\left[\bm{x}(i) \geq \mathsf{OPT} + \frac{\log n}{\eta} + \frac{t}{\eta}\right] \leq e^{-t}.
	\]

	Next, we analyze the distance between the output distributions.
	Let $\bm{x},\bm{x}' \in \mathbb{R}^n$ be vectors, and let $Z = \sum_{i \in [n]}e^{-\eta \bm{x}(i)}$ and $Z' = \sum_{i \in [n]}e^{-\eta \bm{x}'(i)}$.
	Without loss of generality, we assume that $Z \geq Z'$.
  First, note that for all $i \in [n]$ such that $\bm{x}(i) \geq \bm{x}'(i)$, we have
  \[
	  0 \leq e^{-\eta \bm{x}'(i)} - e^{-\eta \bm{x}(i)}
	  = e^{-\eta \bm{x}'(i)} \left(1 - e^{-\eta (\bm{x}(i) -\bm{x}'(i))}\right)
		\leq \eta e^{-\eta \bm{x}'(i)} (\bm{x}(i) -\bm{x}'(i)).
  \]
  Hence for any $i \in [n]$, we have
  \begin{align*}
	  & |e^{-\eta \bm{x}(i)} - e^{-\eta \bm{x}'(i)}|
	  \leq \max\left\{\eta e^{-\eta \bm{x}(i)} (\bm{x}'(i) -\bm{x}(i)), \eta e^{-\eta \bm{x}'(i)} (\bm{x}(i) -\bm{x}'(i))\right\} \\
	  & \leq \eta |\bm{x}(i)-\bm{x}'(i)| \max\{e^{-\eta \bm{x}(i)}, e^{-\eta \bm{x}'(i)}\}
	  \leq \eta |\bm{x}(i)-\bm{x}'(i)| \left(e^{-\eta \bm{x}(i)} +  e^{-\eta \bm{x}'(i)}\right).
  \end{align*}
  Then, we have
  \begin{align}
    & \frac{1}{Z}\sum_{i \in [n]}|e^{-\eta \bm{x}(i)} - e^{-\eta \bm{x}'(i)}|
    \leq \frac{\eta}{Z}\sum_{i \in [n]} |\bm{x}(i)-\bm{x}'(i)| \left( e^{-\eta \bm{x}(i)} + e^{-\eta \bm{x}'(i)}\right) \nonumber \\
    & \leq \frac{\eta}{Z} \max_{i \in [n]}|\bm{x}(i)-\bm{x}'(i)| \sum_{i \in [n]}  \left( e^{-\eta \bm{x}(i)} + e^{-\eta \bm{x}'(i)}\right)
    = \frac{\eta(Z+Z')}{Z} \max_{i \in [n]}|\bm{x}(i)-\bm{x}'(i)| \nonumber \\
    &\leq 2\eta\|\bm{x}-\bm{x}'\|_1. \label{eq:exponential-mechanism-2}
  \end{align}



	Then, the total variation distance between $\mathcal{A}(\bm{x})$ and $\mathcal{A}(\bm{x}')$ is at most
	\begin{align*}
	& \sum_{i \in [n]}\left|\frac{\exp(-\eta \bm{x}(i))}{Z} - \frac{\exp(-\eta \bm{x}'(i))}{Z'}\right|
	= \sum_{i \in [n]}\left|\frac{\exp(-\eta \bm{x}(i))}{Z} - \frac{\exp(-\eta \bm{x}'(i))}{Z}\left(\frac{Z-Z'+Z'}{Z'}\right)\right| \\
	&= \sum_{i \in [n]}\left|\frac{\exp(-\eta \bm{x}(i))}{Z} - \frac{\exp(-\eta \bm{x}'(i))}{Z} - \frac{\exp(-\eta \bm{x}'(i))}{Z} \left(\frac{Z-Z'}{Z'}\right)\right| \\
	&\leq \frac{1}{Z}\sum_{i \in [n]}\left|e^{-\eta \bm{x}(i)} - e^{-\eta \bm{x}'(i)} \right| + \frac{Z-Z'}{Z} \frac{1}{Z'}\sum_{i \in [n]}\exp(-\eta \bm{x}'(i))  \\
	& \leq \frac{2}{Z}\sum_{i \in [n]}\left|e^{-\eta \bm{x}(i)} - e^{-\eta \bm{x}'(i)} \right|
	 \leq 4\eta\|\bm{x}-\bm{x}'\|_1.\qedhere
	\end{align*}
\end{proof}

\section{Average Sensitivity of Prim's algorithm}\label{sec:prim}

In this section, we show that Prim's algorithm (with a simple tie-breaking rule, as described in Algorithm~\ref{alg:prim}) has high average sensitivity even on unweighted graphs. This is in contrast to the low average sensitivity of Kruskal's algorithm that we discussed in Section~\ref{sec:spanning-tree}.

\begin{algorithm}
	\caption{\textsc{Prim's Algorithm}}\label{alg:prim}
	\Input{undirected graph $G = ([n],E)$}
	Let $T \gets \{1\}$\;
	\While{\emph{there exists a vertex not spanned by }$T$}{
	Let $E'$ be the set of edges with the smallest weight among all the edges in $E$ that have exactly one endpoint in $T$\;
	Add to $T$, an edge from $E'$ that has lexicographically smallest $T$-endpoint among all edges in $E'$, breaking further ties arbitrarily.
	}
	\Return Output $T$.
\end{algorithm}
\begin{figure}[!htb]
	\center{\includegraphics[scale=0.5]{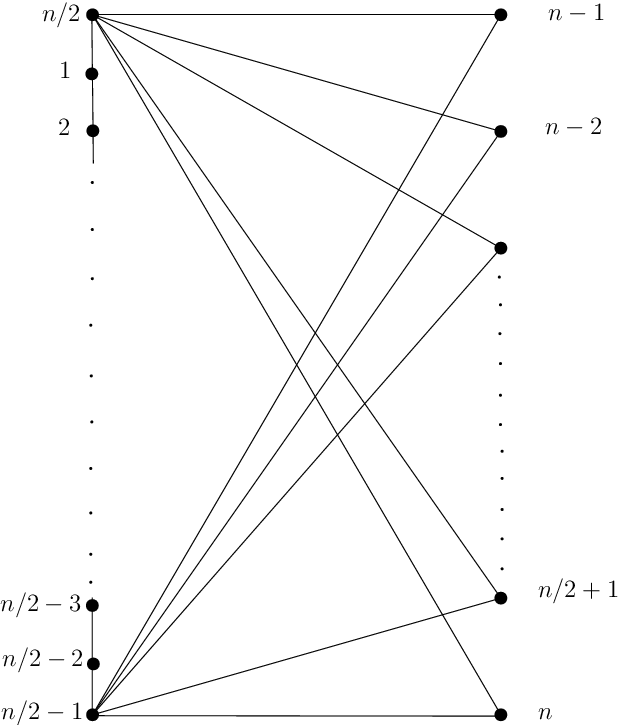}}
	\caption{The graph family ${\{G_{n}\}}_{n \in 2\mathbb{N}}$.}\label{fig:prim-sensitivity-lower-bound}
\end{figure}

\begin{lemma}
The average sensitivity of Prim's algorithm is $\Omega(m)$.
\end{lemma}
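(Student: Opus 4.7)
The plan is to exhibit the graph family $\{G_n\}_{n \in 2\mathbb{N}}$ depicted in Figure~\ref{fig:prim-sensitivity-lower-bound} and to show that Prim's algorithm, when run on $G_n$ with the stated tie-breaking rule, has average sensitivity $\Omega(m)$. First I would verify that $m = \Theta(n)$ for this family, so that it suffices to exhibit an $\Omega(n)$ sensitivity bound.

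The first computation is to simulate Prim on $G_n$ itself and pin down the precise spanning tree $T(G_n)$ it outputs. Because ties are broken by minimizing the $T$-endpoint, Prim exhaustively extends the tree from the lowest-indexed vertex in $T$ before moving on to higher-indexed vertices; this yields an explicit combinatorial description of $T(G_n)$ (essentially, each non-root vertex is attached to the smallest-indexed vertex of $T$ adjacent to it at the time it is added) that I would leverage throughout the rest of the argument.

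The heart of the proof is to identify a subset $S \subseteq E(G_n)$ of size $\Omega(m)$ such that removing any $e \in S$ triggers a cascade: the exploration order of Prim on $G_n - e$ is completely different, and $\Omega(n)$ vertices end up with a new parent in $T(G_n-e)$ compared to $T(G_n)$. For each $e \in S$, I would verify this by tracking how the tie-breaking at an early step of the algorithm propagates through the tree; the graph is designed so that when $e$ is absent, Prim is forced to visit a large subgraph through a different entry point, attaching each vertex via a different parent than before. Once this cascade is established for all $e \in S$, averaging yields
\[
  \E_{e \in E(G_n)}\bigl[\ham(T(G_n), T(G_n-e))\bigr] \geq \frac{|S|}{m} \cdot \Omega(n) = \Omega(n) = \Omega(m),
\]
and the contributions from $e \notin S$ are trivially lower-bounded by zero.

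The main obstacle is verifying the cascading behavior, since for most natural graph families (cycles, stars, paths with chords, ladders) removing a single edge causes only a local rewiring that changes the output by a constant number of edges --- undirected tree edges are indifferent to the traversal direction, so merely reversing the exploration order of a subtree does not change its edge set. The graph $G_n$ in Figure~\ref{fig:prim-sensitivity-lower-bound} must therefore be engineered so that the parents of $\Omega(n)$ vertices are genuinely sensitive to which edge Prim picks at an early step; removing any edge in $S$ reorders the exploration in a way that simultaneously switches the parents of these $\Omega(n)$ vertices, rather than merely reversing the direction along which an existing subtree is discovered.
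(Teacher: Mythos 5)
Your proposal follows the same route as the paper's proof --- same figure, same plan of tracing Prim's algorithm on $G_n$ and on $G_n - e$ for a linear-size set $S$ of removable edges, same averaging step --- and you make the right cautionary observation that merely reversing the direction in which a subtree is discovered leaves the edge set unchanged, so the rewiring must be a genuine re-parenting of $\Omega(n)$ vertices. However, the proof's substance lives entirely inside the step your sketch leaves as a black box: what $G_n$ is and why removing a path edge re-parents $\Omega(n)$ vertices. As written, the proposal reduces to ``the figure works''; you have supplied neither the construction nor any verification of the claimed cascade, which is the whole content of the lemma.

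Here is the missing content. Take $n$ even; $G_n$ has vertex set $[n]$ and consists of a path $1-2-\cdots-(n/2-1)$, a chord $(1,n/2)$, and a ``double fan'': both $n/2-1$ and $n/2$ are joined to every vertex of $R = \{n/2+1,\ldots,n\}$, so $m = \Theta(n)$. On $G_n$, Prim seeded at $1$ walks the path to $n/2-1$ (pausing once to pick up the chord $(1,n/2)$, since $1$ is the smallest tree vertex with a live edge) and then attaches \emph{all} of $R$ to $n/2-1$, because $n/2-1 < n/2$ and the tie-break therefore prefers fan edges leaving $n/2-1$. When $(i',i'+1)$ with $i' \le n/2-2$ is deleted, Prim walks only to $i'$, dead-ends, and the smallest tree vertex with a live edge becomes $n/2$; so all of $R$ now attaches to $n/2$ instead, and only afterward does the tree re-enter the stranded segment $i'+1,\ldots,n/2-1$ through $R$. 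Every $R$-vertex changes parent, giving $\ham(T,T_{i'}) = \Theta(n)$, and $S=\{(i',i'+1): i'\in[n/2-2]\}$ has size $\Theta(m)$, so averaging yields $\Omega(n)=\Omega(m)$ exactly as you anticipated. The structural fact your last paragraph gropes toward is precisely this double fan: the $\Omega(n)$ re-parented vertices are the $R$-vertices, whose parent flips between the two apexes $n/2-1$ and $n/2$ depending on which of them Prim reaches first.
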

\begin{proof}
	Consider the graph family ${\{G_{n}\}}_{n \in 2\mathbb{N}}$ in Figure~\ref{fig:prim-sensitivity-lower-bound}. For a large enough $n \in 2\mathbb{N}$, consider running Algorithm~\ref{alg:prim} on $G_{n}$. The tree $T$ output will consist of the edges $(i, i+1)$ for all $i \in [n/2 -2]$, the edges $(n/2 -1, j)$ for all $j \in \{n/2+1, \dots n\}$, and the edge $(n/2, 1)$.

	If we remove an edge $(i', i'+1)$ for $i' \in [n/2 -2]$ from $G_n$ and run Algorithm~\ref{alg:prim} on the resulting graph, the tree, say $T_{i'}$, output will consist of all edges of the form $(i, i + 1)$ for $i \in [n/2 -1] \setminus \{i'\}$, all edges of the form $(n/2, j)$ for all $j \in \{n/2+1, \dots n\}$, and the edges $(n/2+1, n/2-1)$ and $(n/2,1)$. The Hamming distance of $T_{i'}$ from $T$ is equal to $n/2$.

	Since a uniformly random edge removed from $G_n$ is of the form $(i, i+1)$ for $i \in [n/2 -2]$ with probability $\frac{n/2 - 2}{3n/2 - 1}$, the average sensitivity of Algorithm~\ref{alg:prim} is at least $\frac{n}{2} \cdot \frac{n/2 - 2}{3n/2 - 1}$, which is at least $\frac{n}{6} - 1 = \Omega(m)$ for the family ${\{G_n\}}_{n \in 2\mathbb{N}}$.
\end{proof}

\end{document}